\documentclass[lettersize,journal]{IEEEtran}

\usepackage[style=ieee, sorting=none, backend=biber]{biblatex}
\usepackage{amsmath,amssymb,amsfonts,amsthm}

\usepackage{graphicx}
\usepackage{textcomp}
\usepackage{xcolor}
\def\BibTeX{{\rm B\kern-.05em{\sc i\kern-.025em b}\kern-.08em
    T\kern-.1667em\lower.7ex\hbox{E}\kern-.125emX}}

\usepackage{url}

\usepackage{booktabs}       
\usepackage[flushleft]{threeparttable}
\usepackage{amsfonts}       
\usepackage{nicefrac}       
\usepackage{microtype}      
\usepackage{lipsum}
\usepackage{fancyhdr}       
\usepackage{physics}
\usepackage{mathtools}
\usepackage{xcolor}
\usepackage{soul}
\usepackage{multirow}
\usepackage{textcomp}

\usepackage{enumitem}

\usepackage{algorithm}
\usepackage[noend]{algpseudocode}

\usepackage{mathabx}
\allowdisplaybreaks

\usepackage{hyperref}       
\hypersetup{
    colorlinks=true,
    linkcolor=blue,
    filecolor=magenta,      
    urlcolor=cyan,
}

\usepackage{soul}

\usepackage{lipsum}

\usepackage{lineno}

\usepackage{setspace}

\newtheorem{remark}{Remark}
\newtheorem{theorem}{Theorem}
\newtheorem{lemma}{Lemma}
\newtheorem{definition}{Definition}

\newcommand{\Iset}{[\![I]\!]}
\newcommand{\Tset}{[\![T]\!]}
\newcommand{\Sset}{[\![S]\!]}
\def\R{\mathbb{R}}

\newcommand{\X}{\mathcal{X}}
\newcommand{\I}{\mathcal{I}}

\newcommand{\Ind}{\mathcal{I}}

\newcommand{\Jgrid}{J^{\text{grid}}}
\newcommand{\Jev}{J^{\text{EV}}}

\title{\LARGE
Scalable Optimization for \textbf{\textsf{M}}obility-\textbf{\textsf{A}}ware \textbf{\textsf{C}}oordinated\\Electric Vehicle Charging in Power Distribution Networks
}

\author{Yi Ju$^{1}$, Lunlong Li$^{2}$, Jingchun Wang$^{1}$, and Scott Moura$^{1}$
\thanks{$^{1}$Yi Ju, Jingchun Wang, and Scott Moura are with the Department of Civil and Environmental Engineering, University of California, Berkeley.
{\tt\footnotesize \{juy16thu, jingchun\_wang, smoura\}@berkeley.edu}.}%
\thanks{$^{2}$Lunlong Li is with the Department of Civil and Environmental Engineering, the Hong Kong University of Science and Technology.
{\tt\footnotesize lunlong.li@connect.ust.hk}.}%
\thanks{\emph{Manuscript under review}. Last updated: \today. A preliminary version \cite{ju2026admm} was presented at 2026 American Control Conference (ACC).}
}

\begin{document}

\maketitle
\thispagestyle{empty}
\pagestyle{empty}


\begin{abstract}

Rapid growth in electric-vehicle (EV) charging demand is placing increasing stress on power distribution networks (PDNs), whose hosting capacity is often limited and spatially uneven.
Beyond demonstrating that coordination can help, this paper answers an open question that is central for planners: \emph{what is the maximal achievable benefit of EV charging demand flexibility from spatial and temporal shifting in reducing overload-driven distribution upgrades at a regional scale}?

We introduce \textsf{MAC} (\underline{\textsf{M}}obility-\underline{\textsf{A}}ware \underline{\textsf{C}}oordinated EV charging) to establish a credible upper bound, which entails rethinking charging flexibility around individual mobility, fusing travel itineraries with feeder-level hosting-capacity data, and solving population-scale optimization with spatio-temporal coupling to certified near-optimality.
\textit{(i)} MAC expands feasible scheduling by coupling charging decisions over the full mobility horizon. Instead of enforcing per-session energy recovery, it only requires the EV state-of-charge (SOC) to remain sufficient for upcoming trips.
\textit{(ii)} MAC is computationally scalable via an iterated price response (IPR) scheme. Each iteration posts a locational-temporal price, collects the fleet's best responses in parallel, and updates the price from the observed capacity shortage. The solution is therefore deployable by design, and every iteration carries a certified anytime bound on global optimality. Custom batched subproblem solvers remove the per-iteration bottleneck of solving millions of best responses.

In a future-oriented $30\%$ EV adoption scenario for the San Francisco Bay Area, MAC almost eliminates overload-driven upgrade needs relative to unmanaged charging. Comparing across the baseline spectrum, mobility-aware flexibility alone removes most of the overload, outperforming even fully coordinated session-based charging, and coordination on top suppresses most of the remainder. Both levers are thus essential, and the resulting best-case benchmarks provide references for PDN planning and operations.

\begin{figure*}[h!]
    \centering
    \includegraphics[width=\textwidth]{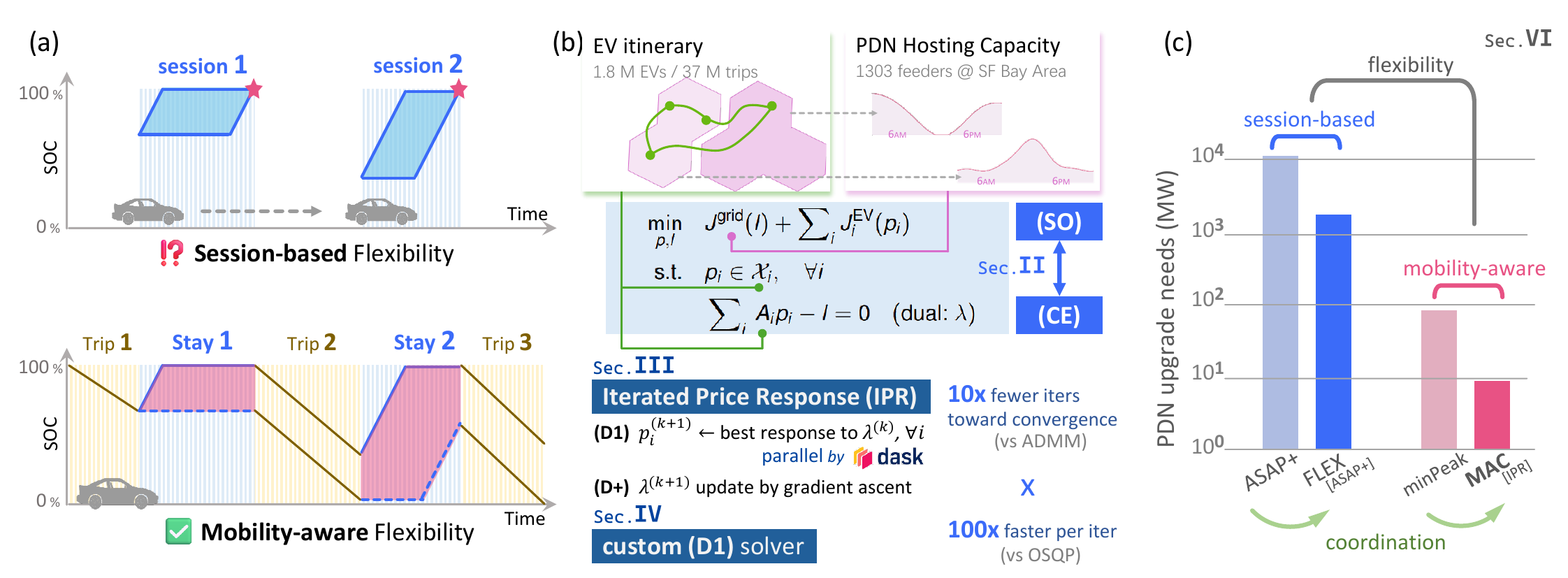}
    \caption{\textbf{Graphic highlight}
    \scriptsize
    \quad
    \textbf{a.} \emph{Conceptualization}: Blue / pink shaded areas are the envelope of feasible charging trajectories under session-based / mobility-aware flexibility.
    \quad
    \textbf{b.} \emph{Methods}: Million-scale EV itineraries and feeder hosting capacities define the coordination problem (SO), implementable as a competitive equilibrium (CE). The iterated price response (IPR) method solves it by alternating fleet-wide best responses (D1) with price updates (D+).
    \quad
    \textbf{c.} \emph{Results}: PDN upgrade needs (total max-violation) across the flexibility and coordination spectrum.
    }
    \label{fig:G-H}
\end{figure*}

\end{abstract}

\section*{Nomenclature}


\noindent\textit{Acronyms}\\[2pt]
{\footnotesize\noindent
\textbf{ADMM}: alternating direction method of multipliers \quad
\textbf{CE}: competitive equilibrium \quad
\textbf{DER}: distributed energy resource \quad
\textbf{DR}: demand response \quad
\textbf{EV}: electric vehicle \quad
\textbf{HC}: hosting capacity \quad
\textbf{IPR}: iterated price response \quad
\textbf{KKT}: Karush--Kuhn--Tucker (optimality conditions) \quad
\textbf{MAC}: mobility-aware coordinated (EV charging) \quad
\textbf{PDN}: power distribution network \quad
\textbf{SO}: social optimum \quad
\textbf{SOC}: state of charge\par}

\smallskip
\noindent\textit{Sets \& Indices}\\[2pt]
{\footnotesize
\begin{tabular}{@{}p{0.18\columnwidth}p{0.75\columnwidth}@{}}
\toprule
Symbol & Description \\
\midrule
$i \in \Iset$ & Index of EVs; $I$ = total number of EVs \\
$t \in \Tset$ & Index of time slots (hourly); $T$ = horizon length \\
$s \in \Sset$ & Index of distribution feeders; $S$ = total number of feeders \\
$\I_{s,t}$ & Set of EVs within feeder $s$ service region at time $t$ \\
\bottomrule
\end{tabular}
}

\smallskip
\noindent\textit{Decision Variables}\\[2pt]
{\footnotesize
\begin{tabular}{@{}p{0.18\columnwidth}p{0.75\columnwidth}@{}}
\toprule
Symbol & Description \\
\midrule
$p_i \in \R^T$ & Charging profile of EV $i$ (per-slot energy) \\
$e_{i,t}$ & Battery energy level of EV $i$ at end of slot $t$ \\
$l_s \in \R^T$ & Aggregate EV charging load on feeder $s$ \\
$z_i \in \R^{S \times T}$ & Auxiliary (ADMM splitting) variable; $z_i = A_i p_i$ \\
\bottomrule
\end{tabular}
}

\smallskip
\noindent\textit{Parameters \& Constraints}\\[2pt]
{\footnotesize
\begin{tabular}{@{}p{0.18\columnwidth}p{0.78\columnwidth}@{}}
\toprule
Symbol & Description \\
\midrule
$E_{i,0}$ & Initial battery energy of EV $i$ \\
$\overline{P}_{i,t},\; \underline{P}_{i,t}$ & Upper/lower charging power bounds for EV $i$ at time $t$ \\
$\overline{E}_{i,t},\; \underline{E}_{i,t}$ & Upper/lower battery energy bounds for EV $i$ at time $t$ \\
$D_{i,t}$ & Driving energy consumption of EV $i$ in slot $t$ \\
$\X_i \subset \R^T$ & Mobility-aware feasible charging set for EV $i$ \\
$C_{s,t}$ & Hosting capacity of feeder $s$ at time $t$ \\
$v_s(l_s)$ & Max load violation on feeder $s$\\
$\Gamma \in \Sset^{I \times T}$ & $\Gamma_{i,t} = s$ means EV $i$ is at feeder $s$ in slot $t$ \\
$A_i \in \quad \{0, 1\}^{\scriptscriptstyle S \times T \times T}$ & Binary routing tensor for EV $i$; $A_i[s, t, t^\prime] =1$ iff $\Gamma_{i,t} = s$ and $t=t^\prime$ \\
\bottomrule
\end{tabular}
}

\smallskip
\noindent\textit{Objective Functions}\\[2pt]
{\footnotesize
\begin{tabular}{@{}p{0.18\columnwidth}p{0.75\columnwidth}@{}}
\toprule
Symbol & Description \\
\midrule
$\Jgrid(l)$ & Grid-side cost (total feeder violations) \\
$\Jev_i(p_i)$ & EV-side disutility; $\Jev_i(p_i) = \frac{\kappa}{2}\norm{p_i}_2^2$ \\
$\kappa$ & Regularization parameter for strong convexity \\
\bottomrule
\end{tabular}
}

\smallskip
\noindent\textit{Dual Variables \& Algorithm Parameters}\\[2pt]
{\footnotesize
\begin{tabular}{@{}p{0.18\columnwidth}p{0.75\columnwidth}@{}}
\toprule
Symbol & Description \\
\midrule
$\lambda \in \R^{S \times T}$ & Dual variable (locational price signal) \\
$Q(\lambda)$ & Dual function of (SO); lower bound on the optimum \\
$\Lambda$ & Dual-feasible price set (product of capped simplices) \\
$N_{s,t}$ & Number of EVs within feeder $s$ at time $t$; $N_{s,t} = |\I_{s,t}|$ \\
$\rho$ & ADMM penalty parameter \\
$\mu_i \in \R^{S \times T}$ & Scaled dual variable (ADMM); $\mu_i = \lambda_i / \rho$ \\
$r_i \in \R^{S \times T}$ & Primal residual (ADMM); $r_i = A_i p_i - z_i$ \\
\bottomrule
\end{tabular}
}



\section{Introduction}

\subsection{Background}
\emph{Hosting capacity} (HC) is a key concept used for evaluating the capability of EV (and other new load or generation) integration into distribution networks \cite{fatima2020review}. Formally, HC is defined as the maximum amount of additional load (or generation) that can be installed at a location without violating operational constraints, deteriorating power quality, or requiring significant infrastructure reinforcements. HC is time-varying depending on base load scenarios and network typology. Utilities such as Pacific Gas and Electric Company (PG\&E) assess HC by running power flow simulations based on their best available data at the time, and publish the analyses regularly \cite{pge_ica_map}.

Studies consistently indicate that substantial distribution-network upgrades will be required to avoid overloading risks as EV adoption grows.
Brockway \textit{et al.} \cite{brockway2021inequitable} find that 39\% of households in California PG\&E territory today lack adequate hosting capacity for even the least power-intensive electrified end uses, such as L1 EV chargers (adding 3 - 5 miles per hour for EVs). Moreover, hosting capacity is unevenly distributed and declines in increasingly Black-identifying and disadvantaged communities, suggesting that grid constraints may exacerbate inequitable access to clean energy options.
Looking forward, Priyadarshan \textit{et al.} \cite{crozier2025distribution} estimate a nationwide 100\% home and vehicle electrification could cost $350 - $790 billion US dollars for reinforcing distribution grids across the US. EV charging is likely to impose significantly larger feeder-capacity impacts than residential space and water-heating electrification \cite{elmallah2022can, mai2018electrification}.
Under the projected EV growth trajectory and unmanaged EV charging behaviors, Li \textit{et al.} \cite{li2024impact} estimate that approximately 50\% of feeders in California by 2035 and 67\% by 2045 would need upgrades, totaling about 25 GW in capacity expansion at an estimated cost of 6-20 billion US dollars.

\subsection{Related Works}
Although demand-side flexibility has been widely discussed in the electrification literature \cite{hardman2025demand}, findings diverge on how much it can defer or reduce distribution infrastructure upgrades, which stems from differences in system objectives, control architectures, and behavioral and operational assumptions.
For example, Elmallah \textit{et al.} \cite{elmallah2022can} evaluate a stylized demand-response (DR) strategy that smooths evening demand peaks by evenly distributing them overnight. They find that the resulting reduction in required distribution upgrades is only marginal in a deep electrified scenario. 
In contrast, Navidi \textit{et al.} \cite{navidi2023coordinating} show that coordination of distributed energy resources (DERs) can substantially alleviate distribution constraint violations and reduce peak load (e.g., by 17\% by 2050 in their California case studies).
Similar flexibility opportunities have also been reported in rapidly electrifying contexts such as Chinese metropolises \cite{li2025unlocking}.

Methodologically, a wide range of simulation- and optimization-based approaches have been developed, including centralized, hierarchical, and decentralized control of EV charging in offline and online settings. These formulations exploit flexibility in time, space, or both, while optimizing user-, operator-, and/or grid-centric objectives. We refer readers to \cite{elghanam2024optimization} for a systematic review. Notably, the review identifies several \emph{open challenges}, including (1) quantifying the impacts of \emph{spatial} charging coordination on distribution networks, (2) modeling user compliance and participation, and (3) achieving algorithmic scalability for realistic populations and infrastructures.

Regarding \emph{spatial} flexibility and user compliance, Gu \textit{et al.} \cite{gu2025competitive} characterize structural conditions for key properties of a generalized competitive equilibrium (GCE) in power distribution networks with spatially flexible loads, extending classic market theories \cite{hsu1997introduction,caramanis1982optimal}. EV charging behaviors under a GCE can be both feasible and individually incentive-compatible. Chen \textit{et al.} \cite{chen2025quantifying} propose to represent spatio-temporal charging flexibility as virtual batteries and virtual power lines which offers interoperable insights.

Surveys and empirical studies indicate that EV drivers’ willingness, compliance, and satisfaction depend on factors such as incentive schemes, perceived fairness, and service guarantees \cite{hu2025measuring}. Critically, EV drivers prefer to not interrupt their travel schedules from delayed or insufficient charging, which motivates incorporating a mobility perspective \cite{xu2018planning, ju2025trajectory}. 

In this regard, we distinguish between \emph{session-based} and \emph{mobility-aware} flexibility models. Session-based scheduling optimizes each charging session in isolation, treating charging demand and dwell time as static and exogenous. It is adopted by the majority of existing works (e.g., \cite{powell2022charging, navidi2023coordinating, wang2025online}), where the parameters are often drawn from oversimplified distributions. In contrast, mobility-aware scheduling jointly accounts for future energy demands and charging opportunities created by a vehicle’s activity schedule. In this case, a charging plan is feasible as long as the vehicle maintains sufficient energy to complete each trip. As illustrated in Fig.~1\textbf{a}, mobility-aware modeling enlarges the feasible set of charging trajectories (e.g., allowing lower departure SOC from an intermediate stop when downstream charging opportunities exist), thereby unlocking additional flexibility when the grid is constrained.

Finally, \emph{scalability} is essential for high-fidelity modeling but remains challenging. First, individual drivers may charge across a wide set of locations, and charging cycles often span a week or longer \cite{zhan2025large}. Second, both charging infrastructure and distribution capacity are scarce resources in which drivers compete with one another, creating system-level congestion effects that do not arise in single-agent settings.

However, scalability challenges arise from both algorithmic and data perspectives. Algorithmically, long-horizon planning for millions of heterogeneous populations over many alternatives is intractable for many methods that perform well at small scale. Existing scalable approaches include: (1) mean-field models that approximate the population as a continuum (e.g., \cite{ma2011decentralized,tajeddini2018mean}), which gain scalability but often require homogeneity or restrictive heterogeneity assumptions; (2) heuristic aggregation methods \cite{wang2025online,chen2025quantifying} that may not recover globally optimal solutions; and (3) distributed optimization methods such as the alternating direction method of multipliers (ADMM), where many existing works focus on simple valley-filling formulations \cite{gan2012optimal,le2016optimal}. 
From a data perspective, many studies demonstrate algorithmic promise using stylized synthetic scenarios, whereas realistic insights require fusing multi-modal data sources to reconstruct high-resolution infrastructure constraints and a massive population of heterogeneous EV drivers.

Finally, we situate the algorithmic route taken in this paper.
Solving a coupled resource-allocation problem by iterating a price against agents' best responses is a classical idea that appears under different names across fields.
In optimization, it is \emph{dual decomposition} solved by the dual gradient method, dating back to Everett \cite{everett1963generalized} and Uzawa \cite{arrow1958studies} and summarized in \cite{boyd2011distributed}.
In communication networks, it underlies network utility maximization, where link prices coordinate distributed users \cite{kelly1998rate, low1999optimization}.
In economics, the price-adjustment reading is known as Walrasian t\^{a}tonnement.
In EV charging, decentralized protocols that iterate a broadcast signal against individual schedule updates follow the same template \cite{gan2012optimal, ma2011decentralized}.
Sec.~\ref{sec:admm} instantiates this lineage as the \emph{iterated price response} (IPR) method.
Relative to the classical scheme, our contributions are an anytime primal--dual certificate at million-EV scale, an entropic price update matched to the geometry of the peak-violation dual, and a shortage-based price initialization.

\subsection{Main Contributions \& Paper Outline}
Relative to the existing literature, our main methodological contributions are as follows:
\begin{enumerate}
\item We characterize mobility-aware charging flexibility by coupling decisions across trips and stays through SOC dynamics, which generalizes conventional session-based constraints. Based on this model, we formulate Mobility-Aware Coordinated EV charging (MAC) as a system-level convex program to mitigate feeder overloading by leveraging spatio-temporal flexibility at scale.

\item We develop the iterated price response (IPR) method, a dual-decomposition algorithm whose iterations are fleet-wide best responses to a locational congestion price, with a convergence guarantee and a practical anytime optimality certificate based on primal--dual bounds. An ADMM-based alternative serves as the algorithmic baseline. To remove per-iteration bottlenecks at population scale, we derive custom primal--dual routines and batched solvers for the dominant EV subproblems.

\item We demonstrate tractability by solving a MAC that coordinates 1.84 million EV charging across 1300+ feeders over a week using high-resolution mobility and grid data.
The unprecedented scale with explicit primal–dual bounds validates the proposed modeling and algorithmic framework, and unveils the potential of regional scale EV charging coordination.
\end{enumerate}

The rest of the paper is outlined as follows (Fig.~\ref{fig:G-H}\textbf{b}):
Section~\ref{sec:mac} introduces the MAC problem by defining mobility-aware charging feasibility and formulating the resulting coordination problem. We also establish a key connection between the social optimum and competitive equilibrium, which motivates practical price-based implementations.
Section~\ref{sec:admm} develops the iterated price response (IPR) method, a dual-decomposition scheme whose iterations are fleet-wide best responses to a locational price, together with its convergence analysis and the primal--dual bounds that certify any candidate solution. A distributed ADMM scheme tailored to MAC's spatial-aggregation structure is derived as the algorithmic baseline.
Section~\ref{sec:customSolver} further derives a custom solver for the per-EV best response, the dominant computational bottleneck of IPR iterations, with per-instance duality-gap tracking. The same machinery covers the subproblems of the ADMM baseline.
Section~\ref{sec:solver_analysis} empirically evaluates convergence behavior and computational performance of the overall solver and its components.
Section~\ref{sec:case-study} reports a regional-scale case study (San Francisco Bay Area) to validate tractability and characterize the system-level effects under mobility-aware coordination.
Finally, the paper concludes in Sec.~\ref{sec:conclusion} with a discussion on limitations and future directions.


\section{EV Charging Coordination Problem}\label{sec:mac}

\subsection{Problem Formulation}
\emph{Notation.}\quad
$\R$ denotes the real numbers, and $[\![N]\!] \coloneqq \{1, \dots, N\}$ for a positive integer $N$.
The operator $[\cdot]^+ \coloneqq \max\{0, \cdot\}$ takes the element-wise positive part.
$\Ind_{\Omega}(\omega)$ is the indicator function of a set $\Omega$, equal to $0$ if $\omega \in \Omega$ and $+\infty$ otherwise.
$\norm{v}_2$ is the Euclidean norm of a vector, and $\norm{M}_F \coloneqq \sqrt{\sum_{i,j} M_{ij}^2}$ and $\langle A, B \rangle_F \coloneqq \sum_{i,j} A_{ij} B_{ij}$ are the Frobenius norm and inner product of matrices.

Let $i \in \Iset$ be the index of an EV, and $t \in \Tset$ be the index of a time slot.
Each EV specifies a charging profile $p_i \in \R^T$, whose entry $p_{i,t}$ is the energy charged in time slot $t$. The profile needs to be (individually) \emph{feasible}. For example, it needs to meet the charging energy requirement of mobility, and follow charging power limits.
We denote the feasible set as $\X_i \subset \R^T$, hence $p_i \in \X_i$ for all $i \in \Iset$.

Denote the load profile of distribution feeder $s \in \Sset$ as $l_s \in \R^T$.
Each feeder $s$ serves a certain region. Different EVs move across different service regions, and as a consequence, their charging loads are aggregated onto different feeders at different time slots.
Specifically, given EV itineraries, define a location mapping matrix $\Gamma \in \R^{I \times T}$, where $\Gamma_{i,t} = s$ means that in the slot $t$, the EV $i$ is within the feeder service region $s$.
Define the set $\I_{s,t} \coloneqq \{i \in \Iset: \Gamma_{i,t} = s\}$ as the set of EVs that stop within the region $s$ in the time slot $t$.
The aggregate EV charging load in the region $s$ in the slot $t$ is:
\begin{equation}\label{eq:feeder_agg_st}
    l_{s,t} = \sum_{i\in \I_{s,t}}\, p_{i,t}
\end{equation}

More compactly, define binary routing tensors $A_i \in \{0, 1\}^{S \times T \times T}$ for each EV $i$, where $A_i[s,t,t^\prime] =1$ if and only if $t=t^\prime$ and $\Gamma_{i,t} = s$. Then,
\begin{equation}
    l \in \R^{S\times T} = \sum\nolimits_{i}\, A_i p_i
\end{equation}

Assume the grid-side cost of supplying load $l$ is $\Jgrid(l)$, and each EV $i$ has a disutility function $\Jev_i(p_i)$ for any feasible charging profile $p_i$.
We define the \textbf{social optimum (SO)} of coordinated EV charging as the solution to \eqref{eq:opt_SO}: 
\begin{subequations}\label{eq:opt_SO}
\begin{align}
    ({\rm SO})\qquad \min_{p, l}~~ & \Jgrid(l) + \sum\nolimits_i \Jev_i(p_i)\\
    \text{s.t.}~~ & p_i \in \X_i, \quad \forall i\label{eq:EV_i_C0}\\
    & \sum\nolimits_{i}\, A_i p_i - l = 0\quad (\text{dual:}~\lambda)
\end{align}
\end{subequations}

On the other hand, consider a market in which a locational price $\lambda$ is posted.
Each of the $I$ EV owners minimizes their disutility $\Jev_i$ plus the charging payment under this price, subject to private constraints $\mathcal{X}_i$, while a grid operator chooses the aggregate load $l$ it serves to maximize its revenue minus the supply cost $\Jgrid$.
A \textbf{competitive equilibrium (CE)} is a price-allocation pair $(p^\#, l^\#, \lambda^\#)$ that satisfies \eqref{eq:opt_CE}:
\begin{subequations}\label{eq:opt_CE}
\begin{align}
    ({\rm CE})\quad & p^\#_i \in \arg\min_{p_i\in \X_i}~\Jev_i(p_i) + \langle A_i^\top \lambda^\#, p_i\rangle, ~ \forall i\label{eq:EV_i_CE}\\
    & l^\# \in \arg\max_{l\succeq 0} ~ -\Jgrid(l) + \langle\lambda^\#, l\rangle\label{eq:l_CE}\\
    & l^\# = \sum\nolimits_i A_i p^\#_i \label{eq:clear_CE}
\end{align}
\end{subequations}

\begin{theorem}\label{thm:SO_CE}
Assume that (A1) $\X_i$'s are nonempty, closed, and convex; (A2) $\Jev_i$'s and $\Jgrid$ are proper, closed, and convex;
(A3) strong duality holds for (SO), and dual optimum is attained. Then:
$(p^\star, l^\star, \lambda^\star)$ is a primal-dual optimal solution (i.e., a KKT solution) to (SO) \textit{iff}
$(p^\star, l^\star, \lambda^\star)$ is a CE.
\end{theorem}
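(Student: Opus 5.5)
Throughout, I read ``KKT solution'' as a saddle point of the partial Lagrangian that dualizes only the coupling constraint:
\[
L(p,l,\lambda)=\Jgrid(l)+\sum\nolimits_i \bigl(\Jev_i(p_i)+\Ind_{\X_i}(p_i)\bigr)+\Big\langle \lambda,\sum\nolimits_i A_i p_i - l\Big\rangle .
\]
This Lagrangian is separable: for fixed $\lambda$, minimizing over $p$ splits into the $I$ per-EV problems $\min_{p_i\in\X_i}\Jev_i(p_i)+\langle A_i^\top\lambda,p_i\rangle$, which are exactly \eqref{eq:EV_i_CE}. Minimizing over $l$ gives $\min_l \Jgrid(l)-\langle\lambda,l\rangle$, which is \eqref{eq:l_CE} with its sign flipped.

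There is one small point to settle first. \eqref{eq:l_CE} restricts $l\succeq 0$, but (SO) does not. I would handle this by absorbing the constraint into $\Jgrid$: take $\Jgrid(l)=+\infty$ for $l\not\succeq 0$, equivalently add $\Ind_{\{l\succeq0\}}$. This keeps $\Jgrid$ proper, closed and convex. Alternatively, I could note that any feasible $l=\sum_i A_ip_i$ is nonnegative when $\X_i\subseteq\R^T_+$. In either case the two $l$-problems coincide, and I would state this explicitly as a convention.

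The saddle-point characterization under (A1)--(A3) is the standard one. Strong duality with an attained dual optimum means $(p^\star,l^\star)$ is primal optimal and $\lambda^\star$ is dual optimal if and only if three conditions hold: primal feasibility $\sum_iA_ip_i^\star=l^\star$ with $p^\star_i\in\X_i$; Lagrangian optimality, i.e.\ $(p^\star,l^\star)\in\arg\min_{p,l}L(p,l,\lambda^\star)$; and complementary slackness, which is vacuous here because the coupling constraint is an equality. I would prove this in a few lines from weak duality.

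For $(\Rightarrow)$: take $q^\star=Q(\lambda^\star)$, which equals $\inf_{p,l}L(p,l,\lambda^\star)$ and by (A3) also equals the primal value. Feasibility gives $L(p^\star,l^\star,\lambda^\star)$ equal to the primal objective, which is $q^\star$. So the infimum of $L(\cdot,\cdot,\lambda^\star)$ is attained at $(p^\star,l^\star)$.

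For $(\Leftarrow)$: if $(p^\star,l^\star)$ minimizes $L(\cdot,\cdot,\lambda^\star)$ and is feasible, then its primal objective equals $Q(\lambda^\star)$. Weak duality, which holds for any feasible point and any $\lambda$, then certifies both primal and dual optimality.

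Next I would use separability. Joint minimization of $L(\cdot,\cdot,\lambda^\star)$ over $(p,l)$ is equivalent to minimizing each block independently, because the sum of independent minimizations attains the joint infimum exactly when every block attains its own infimum. So Lagrangian optimality is equivalent to \eqref{eq:EV_i_CE} for all $i$ together with \eqref{eq:l_CE}. Primal feasibility is \eqref{eq:clear_CE} together with $p_i\in\X_i$, and the latter is already implied by \eqref{eq:EV_i_CE}.

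Chaining the equivalences gives: KKT $\iff$ (feasibility and Lagrangian optimality) $\iff$ CE. Convexity and closedness from (A1)--(A2) are needed only so that the notions of ``KKT solution'' and saddle point agree; (A3) is what makes the forward direction go through.

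The main obstacle is the forward direction's reliance on (A3). Without strong duality, a primal optimum need not minimize the Lagrangian at any price, so no equilibrium price need support it. Apart from that, the only real care needed is the $l\succeq0$ convention. I would also remark that the reverse direction, CE $\Rightarrow$ optimal, needs no assumptions beyond weak duality.
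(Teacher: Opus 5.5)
Your saddle-point-plus-separability argument is correct once you settle the $l\succeq 0$ issue with your \emph{first} convention. The paper writes out no proof of Theorem~\ref{thm:SO_CE}. The only supporting step it gives is the remark after \eqref{eq:D2} that (D1) and (D2) are exactly the best-response problems \eqref{eq:EV_i_CE} and \eqref{eq:l_CE}, which is your separability step. Combining that with the fact that, under zero duality gap, primal--dual optimality is equivalent to feasibility plus minimization of the partial Lagrangian is the standard completion, and you carry it out correctly.

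The claim to fix is that ``in either case the two $l$-problems coincide.'' You are right to flag the mismatch; the paper glosses over it, since Lemma~\ref{lem:D2} evaluates (D2) over all $l$. Your second option, that feasible loads are nonnegative when $\X_i\subseteq\R^T_+$, does not resolve it. The Lagrangian is minimized over \emph{all} $l$, not only feasible ones. For $\lambda_{s,t}<0$ the unconstrained problem is unbounded below (Case~1 in the proof of Lemma~\ref{lem:D2}), while the problem over $l\succeq 0$ is still solved by taking $l_{s,t}=0$.

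Concretely, let $S=I=1$, $T=2$, $\X_1=\{(d,0)\}$ with $0<d<C_{1,1}$ and $C_{1,2}>0$, and $\Jgrid$ as in \eqref{eq:Jgrid}. Take $p_1=l=(d,0)$ and $\lambda=(0,-1)$. This triple satisfies \eqref{eq:EV_i_CE}--\eqref{eq:clear_CE}, because the operator's objective $-\Jgrid(l)-l_{1,2}$ is $\le 0$ on $l\succeq 0$ and equals $0$ at $(d,0)$. But $\lambda\notin\Lambda$, so $Q(\lambda)=-\infty$, and the triple is not a KKT solution of (SO). Under your second option, therefore, only KKT $\Rightarrow$ CE survives.

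You have two valid repairs:
\begin{itemize}
\item Build $l\succeq 0$ into $\Jgrid$, as you first proposed. Then ``KKT'' refers to the modified (SO), whose multiplier set is strictly larger.
\item Drop $l\succeq 0$ from \eqref{eq:l_CE}. This matches Lemma~\ref{lem:D2} and the set $\Lambda$ used by IPR, and your proof then applies verbatim to the original (SO).
\end{itemize}
Nonnegativity of feasible loads is still useful, just not for the purpose you gave it. Under the first convention it guarantees that the primal problem is unchanged, while the modified dual function pointwise dominates $Q$. Hence (A3) for the original (SO) transfers to the modified problem, with the same $\lambda^\star$ remaining dual optimal.
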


\begin{theorem}\label{thm:SO_CE-2}
Assume (A1)--(A3). In addition, assume (A4) $\Jev$'s are \emph{strongly} convex.
Let $\lambda^\#$ be any dual optimal multiplier of (SO),
$p^\#_i$ be the best response under $\lambda^\#$, i.e. a solution to \eqref{eq:EV_i_CE},
and $l^\# = \sum_i A_i p^\#_i$. Then:
$(p^\#, l^\#, \lambda^\#)$ is a CE. Therefore, by Thm.\ref{thm:SO_CE}, $(p^\#, l^\#)$ is socially optimal.
\end{theorem}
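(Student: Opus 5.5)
The proof rests on one fact: under (A4) the best response to a fixed price is unique, so the best response to $\lambda^\#$ must be the socially optimal profile.

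\textbf{Step 1 (existence of a KKT triple).} By (A3), strong duality holds and the dual optimum is attained. I will also use that the primal optimum of (SO) is attained. This follows from the strong convexity in (A4): $\sum_i \Jev_i$ is coercive on the nonempty closed convex $\prod_i \X_i$. I then eliminate $l$ through the equality constraint, leaving a closed, proper, convex objective that is coercive in $p$, as long as $\Jgrid$ is bounded below by an affine function, which holds since $\Jgrid$ is closed and convex. Combined with dual attainment, every dual optimal $\lambda^\#$ pairs with any primal optimal $(p^\star,l^\star)$ to form a KKT triple $(p^\star,l^\star,\lambda^\#)$. This uses the standard saddle-point characterization of zero duality gap with both optima attained.

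\textbf{Step 2 (uniqueness of best responses).} By Theorem~\ref{thm:SO_CE}, the triple $(p^\star,l^\star,\lambda^\#)$ is a CE. In particular, $p_i^\star$ solves \eqref{eq:EV_i_CE} under $\lambda^\#$ for each $i$. That objective is $\Jev_i(p_i)+\langle A_i^\top\lambda^\#,p_i\rangle$, a strongly convex function minimized over the convex set $\X_i$, so its minimizer is unique. Hence any best response $p_i^\#$ equals $p_i^\star$, and therefore $l^\#=\sum_i A_i p_i^\#=\sum_i A_i p_i^\star=l^\star$, where the last equality is the clearing condition \eqref{eq:clear_CE} satisfied by the CE.

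\textbf{Step 3 (conclusion).} Since $(p^\#,l^\#,\lambda^\#)=(p^\star,l^\star,\lambda^\#)$, it is a CE. Applying Theorem~\ref{thm:SO_CE} in the reverse direction shows it is a KKT solution, so $(p^\#,l^\#)$ is socially optimal.

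\textbf{Main obstacle.} The delicate point is Step 1: showing that a dual optimal $\lambda^\#$ is paired with some primal optimum, which needs primal attainment. If the coercivity argument becomes awkward, for instance because of the constraint $l\succeq 0$ in \eqref{eq:l_CE} or because $\Jgrid$ may take the value $+\infty$, I would instead argue directly from the Lagrangian. Define $p^\#$ as the minimizer of the Lagrangian in $p$ at $\lambda^\#$, and let $l^\#$ be given. I would use differentiability of the dual function in the $p$-part, which follows from strong convexity (Danskin's theorem), together with dual optimality of $\lambda^\#$. The goal is to show $0\in\partial(-Q)(\lambda^\#)$, which would force $\sum_i A_ip_i^\#$ to lie in $\partial \Jgrid^*(\lambda^\#)$. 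By Fenchel–Young this gives $\lambda^\#\in\partial\Jgrid(l^\#)$, which is exactly \eqref{eq:l_CE}.
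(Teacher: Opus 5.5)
Your proof is correct.

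The paper states this theorem without proof, so there is no argument of its own to compare against. The closest reasoning in the paper is the proof of Theorem~\ref{thm:ipr_conv}(b). There, \eqref{eq:L_lower} evaluated at the primal optimum gives \eqref{eq:ipr_conv_primal}, and setting $\lambda=\lambda^\star$ yields $p_i(\lambda^\star)=p_i^\star$. That is a quantitative version of your Step~2: once strong duality is available, strong convexity of the best-response objective pins the response to the social optimum.

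Your Step~1 supplies something that proof takes for granted, namely existence of a primal optimum (the paper simply posits $p^\star$). Your coercivity argument is the right way to obtain it: an affine minorant of $\Jgrid$ plus quadratic growth of the strongly convex $\Jev_i$, with properness coming from the finite optimal value implicit in (A3).

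Two minor remarks:
\begin{itemize}
\item Theorem~\ref{thm:SO_CE} is not needed in Step~2. The identity $Q(\lambda^\#)=J^\star=\mathcal{L}(p^\star,l^\star,\lambda^\#)$, together with separability of the Lagrangian, already shows that $p_i^\star$ solves \eqref{eq:EV_i_CE}. Nor is it needed for social optimality in Step~3, since $(p^\#,l^\#)=(p^\star,l^\star)$ directly. It is genuinely used only for the claim that the triple is a CE.
\item The $l\succeq 0$ mismatch you flag between \eqref{eq:l_CE} and (SO) is inherited from Theorem~\ref{thm:SO_CE}, not introduced by your argument. It is harmless whenever $l^\star\succeq 0$, for example with nonnegative charging as in the paper's case study. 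Your fallback argument is therefore unnecessary.
\end{itemize}
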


Thm.\ref{thm:SO_CE} and Thm.\ref{thm:SO_CE-2} state that, 
on the one hand, the social optimum is implementable with a dynamic locational price $\lambda^\#$,
and on the other hand, such $\lambda^\#$ can be obtained if we can find a KKT solution to \hyperref[eq:opt_SO]{(SO)} effectively.

\subsection{Mobility-aware Coordination to Minimize Overloading}

In this paper, we introduce the notion of \textbf{mobility-aware} EV charging feasibility.
Unlike conventional notions that define session-specific energy targets, 
mobility-aware feasibility only requires maintaining proper energy levels throughout the time horizon (typically a week) such that the EV has enough mileage to complete each trip.
Formally,
\begin{definition}[Mobility-aware feasibility]\label{defn:feas}
The mobility-aware feasibility set $\X_i \subset \R^T$ of EV $i$ is the collection of charging profiles $p_i$ that verify the constraints\footnote{Note that using $e_{i,t} = E_{i,0} + \sum_{\tau\le t} (p_{i, \tau}-D_{i,\tau})$, we can rewrite the constraints in \eqref{eq:EV_i} only with the decision variables $(p_{i,t})$.}:
\begin{subequations}\label{eq:EV_i}
\begin{align}
    \forall t \in \Tset: \quad
    & e_{i,t} = e_{i,t-1} + p_{i,t} - D_{i,t} \\
    & \underline{P}_{i,t} \le p_{i,t} \le \overline{P}_{i,t}\\
    & \underline{E}_{i, t} \le e_{i,t} \le \overline{E}_{i, t}\label{eq:ev_constr_Elim}\\
    \text{and}\qquad& e_{i,0} = E_{i,0} 
\end{align}
\end{subequations}
where $p_{i,t}$ denotes the amount of energy charged to EV $i$ within slot $t$,\footnote{Since our default time interval is one hour, we also call $p_{i,t}$ the charging power.
For a general time interval $\Delta t >0$, we adopt the convention that $p_{i,t}$ itself is the energy (in kWh), and thus $p_{i,t}/\Delta t$ is the power (in kW).}
$e_{i,t}$ denotes the battery energy level (state of energy) of EV $i$ at the end of slot $t$, and $D_{i,t}$ is its driving energy consumption in slot $t$, which can be determined by an energy consumption model given the itinerary of EV $i$.
\end{definition}

\begin{remark}
Time-varying maximum (minimum) charging power $\overline{P}_{i,t}$ ($\underline{P}_{i,t}$) provides flexibility in modeling charger access at different stays. 
For example, if EV $i$ is on a trip throughout time slot $t$, or there are no nearby chargers around its stay location, then it cannot be charged, so $\overline{P}_{i,t}=0$.
We also let the minimum (maximum) battery energy level $\underline{E}_{i,t}$ ($\overline{E}_{i,t}$) vary over time. 
By default, $\underline{E}_{i,t} = 0$ (and $\overline{E}_{i,t} = E^{\max}_i$, which is the battery capacity).
We may force additional constraints on energy requirements for specific stays. By default, we set $\underline{E}_{i,T} = E_{i, 0}$, which requires the terminal energy level to be no less than the initial level and thus ensures weekly energy balance.
Setting energy targets at additional departure slots recovers the conventional session-specific energy targets as a special case.
\end{remark}

In this paper we are primarily interested in understanding how coordinated EV charging could help mitigate distribution feeder overloading.
Each feeder $s$ has a time-varying \emph{hosting capacity}, $C_{s,t}$, which denotes the maximum demand allowed to add on top of its base load without violating thermal, voltage, and safety constraints. It follows the same per-slot convention as the charging profiles.
We define load violation on feeder $s$ as the maximum load surplus over its host capacity over time, i.e.,
\begin{equation}
    v_s(l_s) = \max\nolimits_t\, \left[l_{s,t} - C_{s,t}\right]^{+}
\end{equation}
and set the grid side goal to minimize the total violations, namely,
\begin{equation}\label{eq:Jgrid}
    \Jgrid(l) = \sum\nolimits_s v_s(l_s)
\end{equation}

For the disutility function of individual EVs, we consider a simple regularization term: 
\begin{equation}\label{eq:Jev}
    \Jev_i(p_i) = \frac{\kappa}{2} \norm{p_i}_2^2
\end{equation}
where parameter $\kappa > 0$ is a small constant, which can be viewed as a battery degradation cost \cite{le2016optimal}.

\begin{remark}
It is obvious to see that (i) $\X_i$'s are convex, and (ii) $\Jgrid$ is convex and $\Jev_i$'s are $\kappa$-strongly convex.
Moreover, (iii) provided that each $\X_i$ is nonempty, strong duality holds for (SO) and the dual optimal value is attained, because all constraints of (SO) are polyhedral and the refined Slater's condition reduces to feasibility.
Therefore, the equivalence of SO and CE stated in Thm.\ref{thm:SO_CE} and Thm.\ref{thm:SO_CE-2} applies here.
\end{remark}



\section{The Iterated Price Response Method}\label{sec:admm}

If \hyperref[eq:opt_SO]{(SO)} is solved as a whole, then the number of constraints and variables is $\mathcal{O}(IT)$ (assume $I >> S$), which can be prohibitively large for the scope of the analysis we are interested in.
For example, the number of EVs, $I$, might be of the order of one million.
Meanwhile, the constraint sets $\X_i$ are uncoupled across EVs. The EVs interact only through the aggregate feeder loads.
Leveraging this structural property, this section develops a scalable dual decomposition scheme solved by the dual gradient method.
The approach maximizes the Lagrangian dual of \hyperref[eq:opt_SO]{(SO)} directly by first-order ascent.
We call the coordination pipeline \emph{iterated price response} (IPR). Each iteration posts a locational price, lets every EV best-respond in parallel, and updates the price from the observed capacity shortage.
The general dual decomposition scheme has a long history in the optimization literature \cite{everett1963generalized, boyd2011distributed}, and is also known as Uzawa's method \cite{arrow1958studies}.

The rest of this section unfolds as follows.
Sec.~\ref{sec:ipr_alg} derives the IPR scheme via dual decomposition and shows that the dual gradient is exactly the capacity shortage deployed by fleet-wide best responses, which also furnishes an anytime optimality certificate.
Sec.~\ref{sec:ipr_conv} establishes convergence guarantees of IPR.
Sec.~\ref{sec:ipr_entropic} instantiates the ascent as an entropic mirror scheme with a shortage-based initialization, which improves the convergence.
Sec.~\ref{sec:admm_alt} derives ADMM, another classical distributed optimization scheme, as the algorithmic baseline for comparison.

\subsection{Dual Decomposition and Price-Response Gradients}\label{sec:ipr_alg}

By relaxing the coupling constraint of \hyperref[eq:opt_SO]{(SO)} with a dual variable $\lambda \in \R^{S\times T}$, we obtain the Lagrangian:
\begin{align}\label{eq:SO_Lagrangian}
    \mathcal{L}(p, l, \lambda) = & \sum\nolimits_i \Jev_i(p_i) + \Jgrid(l) + \langle \lambda, \sum\nolimits_i A_i p_i - l\rangle
\end{align}
The corresponding dual function $Q(\lambda) \coloneqq \min \{\mathcal{L}(p, l, \lambda): p_i \in \X_i~\forall i,\, l \in \R^{S\times T}\}$ is separable as:
\begin{align}
    &Q(\lambda) = \sum\nolimits_i\text{(D1)} + \text{(D2)}, ~~\text{where}\notag\\
    \text{(D1)}~~ &\min_{p_i \in \X_i}~\Jev_i(p_i) + \langle A_i^\top \lambda, p_i \rangle\label{eq:D1}\\
    \text{(D2)}~~ &\min_l~\Jgrid(l) - \langle \lambda, l\rangle \label{eq:D2}
\end{align}
Note that \hyperref[eq:D1]{(D1)} and \hyperref[eq:D2]{(D2)} are exactly the best-response problems \eqref{eq:EV_i_CE} and \eqref{eq:l_CE} of the competitive equilibrium, respectively.

\begin{lemma}\label{lem:D2}
With $\Jgrid$ as in \eqref{eq:Jgrid}, the (D2) term admits a closed form.
\begin{equation}
\text{\rm (D2)} =
\begin{cases}
    -\langle \lambda, C \rangle, & \lambda \in \Lambda\\
    -\infty, & \text{otherwise}
\end{cases}
\end{equation}
where the \emph{dual-feasible set} $\Lambda$ is
\begin{equation}\label{eq:Lambda}
    \Lambda \coloneqq \left\{\lambda \in \R^{S \times T}: \lambda_s \succeq 0,~ \textstyle\sum\nolimits_t \lambda_{s,t} \le 1, ~\forall s\right\}
\end{equation}
\end{lemma}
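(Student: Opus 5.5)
Since $\Jgrid(l)=\sum_s v_s(l_s)$ and the pairing $\langle\lambda,l\rangle=\sum_s\langle\lambda_s,l_s\rangle$ both split over feeders, the minimization in (D2) decomposes into $S$ independent problems $\min_{l_s\in\R^T} v_s(l_s)-\langle\lambda_s,l_s\rangle$. We will treat one feeder. Substituting $u_s=l_s-C_s$ gives
\[
\min_{l_s} v_s(l_s)-\langle\lambda_s,l_s\rangle=-\langle\lambda_s,C_s\rangle+\min_{u\in\R^T}\Bigl(\max\nolimits_t[u_t]^+-\langle\lambda_s,u\rangle\Bigr).
\]
The last term is $-f^*(\lambda_s)$, where $f(u)=\max\{0,u_1,\dots,u_T\}$ and $f^*$ is its convex conjugate. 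The lemma therefore reduces to the claim that $f^*$ is the indicator of the capped simplex $\{\lambda_s\succeq0,\ \sum_t\lambda_{s,t}\le1\}$. Once this is shown, summing over $s$ gives exactly the stated formula, with $\Lambda$ the product of these sets.

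\emph{Feasible prices.} Suppose $\lambda_s\succeq0$ and $\sum_t\lambda_{s,t}\le1$. For every $u$ we have $\langle\lambda_s,u\rangle\le\sum_t\lambda_{s,t}[u_t]^+\le(\sum_t\lambda_{s,t})\max_t[u_t]^+\le f(u)$. Hence the inner objective is nonnegative, and it equals $0$ at $u=0$. So the minimum is $0$, and the feeder contributes $-\langle\lambda_s,C_s\rangle$.

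\emph{Infeasible prices.} Here we exhibit rays along which the objective tends to $-\infty$. If $\lambda_{s,t}<0$ for some $t$, take $u=-\alpha e_t$ with $\alpha\to\infty$. Then $f(u)=0$ and the objective equals $\alpha\lambda_{s,t}\to-\infty$. If instead $\lambda_s\succeq0$ but $\sum_t\lambda_{s,t}>1$, take $u=\alpha\mathbf{1}$. The objective is $\alpha(1-\sum_t\lambda_{s,t})\to-\infty$. In both cases (D2) $=-\infty$.

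The only step that needs care is the inequality chain in the feasible case, namely bounding a linear form by the max-of-positive-parts. It holds because nonnegative weights with total mass at most one give a sub-convex combination, which is dominated by the maximum. Everything else is direct substitution. Equivalently, one can note that $f$ is the support function of the capped simplex, because the maximum of a linear function over that set is attained at a vertex $0$ or $e_t$; the conjugate of a support function is the indicator of the set.
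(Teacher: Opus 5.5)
Your proof is correct and is essentially the paper's argument: the same per-feeder decomposition, the same three cases with the same divergent rays (a single coordinate $l_{s,t_0}\to-\infty$, and $l_s=C_s+\alpha\mathbf{1}$ with $\alpha\to\infty$), and the same sub-convex-combination bound in the feasible case. The only difference is presentational: you eliminate the paper's epigraph variable $v_s$ via $u=l_s-C_s$ and phrase the computation as the conjugate of $\max\{0,u_1,\dots,u_T\}$, which incidentally avoids the tacit $C_{s,t}\ge 0$ that the paper's first case needs for the point $v_s=0$, $l_{s,t}=0$ to be feasible.
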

\begin{proof}
See Appendix~\ref{appx:S2_solver}.
\end{proof}

The set $\Lambda$ is a product of $S$ \emph{capped simplices}, with each feeder carrying at most one unit of price mass over its slots.
This is a direct consequence of $\Jgrid$ being a sum of per-feeder maxima, since the dual price of a peak constraint is a probability-like weighting over the slots that attain the peak.
Evaluating the (D2) term is therefore cheap. One checks $\lambda_s \succeq 0$ and $\sum_t \lambda_{s,t} \le 1$ and returns $-\langle \lambda_s, C_s\rangle$ or $-\infty$ accordingly, at a cost of $\mathcal{O}(T)$ per feeder.

By Lemma~\ref{lem:D2}, we restrict the dual variable to $\lambda \in \Lambda$, on which
\begin{equation}\label{eq:Q_on_Lambda}
    Q(\lambda) = \sum\nolimits_i \xi_i(\lambda) - \langle \lambda, C\rangle
\end{equation}
where $\xi_i(\lambda)$ denotes the optimal value of \hyperref[eq:D1]{(D1)}.
Since $\kappa > 0$, the objective of \hyperref[eq:D1]{(D1)} is strongly convex, so its minimizer is unique. Denote it by $p_i(\lambda)$, the \emph{best response} of EV $i$ to the price $\lambda$.
By Danskin's theorem \cite[Prop.~B.25]{bertsekas1999nonlinear}, $\xi_i$ is differentiable with $\nabla \xi_i(\lambda) = A_i\, p_i(\lambda)$. Therefore,
\begin{equation}\label{eq:gradQ}
    \nabla Q(\lambda) = \sum\nolimits_i A_i\, p_i(\lambda) - C
\end{equation}
In other words, \emph{the gradient of the dual is the deployed capacity shortage}. Evaluating it amounts to posting the price and aggregating the fleet's responses, which is precisely one round of price response.

The basic method is projected gradient ascent on $Q$ over $\Lambda$:
\begin{equation}\label{eq:ipr_generic}
    \lambda^{(k+1)} = \Pi_\Lambda\!\left[\lambda^{(k)} + \eta\, \nabla Q(\lambda^{(k)})\right]
\end{equation}
where $\Pi_\Lambda$ denotes the projection onto $\Lambda$ and $\eta > 0$ is a step size. Sec.~\ref{sec:ipr_entropic} instantiates the ascent in a non-Euclidean geometry better matched to $\Lambda$.

\smallskip
\noindent\textbf{Anytime optimality certificate.}\quad
Weak duality turns the dual value into a certificate of solution quality.
For any feasible $\widetilde{p}$ (i.e., $\widetilde{p}_i \in \X_i$ for all $i$), let $J(\widetilde{p}) \coloneqq \sum\nolimits_i \Jev_i(\widetilde{p}_i) + \Jgrid(\sum\nolimits_i A_i \widetilde{p}_i)$ denote its primal objective, and let $J^\star$ be the optimal objective of \hyperref[eq:opt_SO]{(SO)}. Then
\begin{equation}\label{eq:weak_duality}
Q(\lambda) \le J^\star \le J(\widetilde{p}), \quad
\forall \widetilde{p} \in \Pi_i \X_i, ~\forall \lambda \in \Lambda
\end{equation}
so the optimality gap of $\widetilde{p}$ is bounded by $J(\widetilde{p}) - Q(\lambda)$, and a practical stopping criterion is the relative version
\begin{equation}\label{eq:SO_rel_gap}
    J(\widetilde{p}) - Q(\lambda) \le \epsilon^{\rm rel} \cdot
    \max\{1, |J(\widetilde{p})|, |Q(\lambda)|\}
\end{equation}
where $\epsilon^{\rm rel}>0$ is a given relative optimality gap tolerance.
Every iteration of the ascent \eqref{eq:ipr_generic} natively produces both sides of \eqref{eq:SO_rel_gap}.
The responses give a feasible candidate $p(\lambda^{(k)})$, hence the upper bound $J$.
The dual iterate $\lambda^{(k)}$ gives the lower bound $Q(\lambda^{(k)})$, whose (D1) terms are already evaluated in the course of computing the best responses.
We track the best certified pair along the ascent and report it upon stopping, so the method is \emph{anytime}. Terminating at any iteration returns a feasible solution together with a proven optimality gap.
The certificate remains valid even when a practical sub-solver returns inexact or infeasible responses $p(\lambda)$, and we defer this technical treatment to the subproblem solvers of Sec.~\ref{sec:customSolver}.

\subsection{Convergence Analysis}\label{sec:ipr_conv}

Define $N_{s,t} \coloneqq |\I_{s,t}|$ as the number of EVs within region $s$ in slot $t$, and $N_{\max} \coloneqq \max_{s,t} N_{s,t}$.

\begin{lemma}[Smoothness of the dual]\label{lem:Q_smooth}
Suppose $\kappa>0$ and $\X_i \neq \emptyset$ for all $i$. Then $Q$ is concave and continuously differentiable on $\Lambda$ with gradient \eqref{eq:gradQ}, and $\nabla Q$ is Lipschitz continuous with constant $L_Q \le N_{\max} / \kappa$, i.e.,
\begin{equation}
\norm{\nabla Q(\lambda) - \nabla Q(\lambda^\prime)}_F \le \frac{N_{\max}}{\kappa} \norm{\lambda - \lambda^\prime}_F, \quad \forall \lambda, \lambda^\prime \in \Lambda
\end{equation}
\end{lemma}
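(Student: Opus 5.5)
The plan is to prove concavity and differentiability directly from the decomposition \eqref{eq:Q_on_Lambda}, and then obtain the Lipschitz bound through a per-EV estimate on the best-response map $\lambda \mapsto p_i(\lambda)$.

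\emph{Concavity and differentiability.} On $\Lambda$, $Q(\lambda)=\sum_i \xi_i(\lambda)-\langle\lambda,C\rangle$. Each $\xi_i$ is a pointwise minimum over $p_i\in\X_i$ of functions affine in $\lambda$, so it is concave. Since $\X_i\neq\emptyset$ is closed and convex and the objective of (D1) is $\kappa$-strongly convex, the minimizer $p_i(\lambda)$ exists and is unique. Danskin's theorem, as already invoked for \eqref{eq:gradQ}, then gives $\nabla\xi_i(\lambda)=A_i p_i(\lambda)$. One technicality is that $\X_i$ need not be compact. I will handle this by noting that strong convexity confines the minimizer to a bounded sublevel set, locally uniformly in $\lambda$, so $\X_i$ may be replaced by a compact subset near any given $\lambda$. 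Continuity of the gradient will follow from the Lipschitz estimate below.

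\emph{Lipschitz continuity of best responses.} Write $g_i=A_i^\top\lambda$ and $g_i'=A_i^\top\lambda'$. The optimality conditions of (D1) are variational inequalities:
\[
\langle \kappa p_i + g_i,\; q - p_i\rangle \ge 0 \quad \forall q\in\X_i ,
\]
and similarly for $p_i'=p_i(\lambda')$. Test the first with $q=p_i'$ and the second with $q=p_i$, then add. This yields $\kappa\norm{p_i-p_i'}_2^2\le\langle g_i'-g_i,\,p_i-p_i'\rangle$. Cauchy--Schwarz then gives $\norm{p_i-p_i'}_2\le\kappa^{-1}\norm{A_i^\top(\lambda-\lambda')}_2$. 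Equivalently, $p_i(\lambda)=\Pi_{\X_i}(-A_i^\top\lambda/\kappa)$, and projection onto a convex set is nonexpansive.

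\emph{Aggregation, the main step.} Set $\Delta=\lambda-\lambda'$ and $\delta_i=p_i-p_i'$. By the definition of $A_i$, $(A_i^\top\Delta)_t=\Delta_{\Gamma_{i,t},t}$. Hence
\[
\sum_i\norm{A_i^\top\Delta}_2^2=\sum_{s,t}N_{s,t}\Delta_{s,t}^2\le N_{\max}\norm{\Delta}_F^2 .
\]
By the same structure, $(\sum_i A_i\delta_i)_{s,t}=\sum_{i\in\I_{s,t}}\delta_{i,t}$. Cauchy--Schwarz gives $\big(\sum_{i\in\I_{s,t}}\delta_{i,t}\big)^2\le N_{s,t}\sum_{i\in\I_{s,t}}\delta_{i,t}^2$. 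Each pair $(i,t)$ lies in exactly one $\I_{s,t}$, so
\[
\norm{\nabla Q(\lambda)-\nabla Q(\lambda')}_F^2\le N_{\max}\sum_i\norm{\delta_i}_2^2\le \frac{N_{\max}}{\kappa^2}\sum_i\norm{A_i^\top\Delta}_2^2\le\frac{N_{\max}^2}{\kappa^2}\norm{\Delta}_F^2 ,
\]
which gives $L_Q\le N_{\max}/\kappa$.

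I expect this aggregation to be the only delicate point. A naive bound of the form $\sum_i\norm{A_i}^2$ would scale with $I$ instead of $N_{\max}$. The fix is to exploit the disjoint, one-feeder-per-slot routing structure twice, in the two Cauchy--Schwarz steps above.
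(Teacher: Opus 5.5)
Your proof is correct and follows the paper's argument essentially step for step. You sum the two variational inequalities to get $\|p_i(\lambda)-p_i(\lambda')\|_2 \le \kappa^{-1}\|A_i^\top(\lambda-\lambda')\|_2$, and then use the one-feeder-per-slot routing structure in two Cauchy--Schwarz/counting steps (the paper's identity $\sum_i \|A_i^\top M\|_2^2 = \sum_{s,t} N_{s,t} M_{s,t}^2$) to reach $N_{\max}/\kappa$. Your compactness caveat for Danskin's theorem is harmless but not needed here, since the box constraints $\underline{P}_i \preceq p_i \preceq \overline{P}_i$ already make $\X_i$ a bounded polytope.
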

\begin{proof}
See Appendix~\ref{appx:ipr}.
\end{proof}

The strong convexity from $\kappa > 0$ is what makes $Q$ smooth. With $\kappa = 0$, \hyperref[eq:D1]{(D1)} is a linear program whose minimizer is set-valued, and $Q$ is only piecewise linear.
Adding a strongly convex term to smooth a non-smooth function is a classical technique \cite{nesterov2005smooth}.
Here, the disutility term is part of the model rather than an added approximation, so no smoothing error needs to be removed afterwards.
The same term is also assumption (A4) of Thm.~\ref{thm:SO_CE-2}, which makes the optimal price deployable.

\begin{theorem}[Convergence of IPR]\label{thm:ipr_conv}
Suppose the assumptions of Lemma~\ref{lem:Q_smooth} hold, so that strong duality holds for \hyperref[eq:opt_SO]{(SO)} (cf.\ the Remark in Sec.~\ref{sec:mac}).
Let $\lambda^\star \in \Lambda$ be a dual optimal solution and $p^\star$ the unique optimal charging profile of \hyperref[eq:opt_SO]{(SO)}.
Under the update \eqref{eq:ipr_generic} with $\eta = 1/L_Q$:
\begin{subequations}
\begin{enumerate}
\item[(a)] the dual values converge at rate $\mathcal{O}(1/k)$:
\begin{equation}\label{eq:ipr_conv_dual}
Q(\lambda^\star) - Q(\lambda^{(k)}) \le \frac{L_Q \norm{\lambda^{(0)} - \lambda^\star}_F^2}{2k}
\end{equation}
\item[(b)] for every $k$, the deployed best response satisfies
\begin{equation}\label{eq:ipr_conv_primal}
\frac{\kappa}{2} \sum\nolimits_i \norm{p_i(\lambda^{(k)}) - p_i^\star}_2^2 \le Q(\lambda^\star) - Q(\lambda^{(k)})
\end{equation}
hence it converges to the social optimum at rate $\mathcal{O}(1/\sqrt{k})$.
\end{enumerate}
\end{subequations}
\end{theorem}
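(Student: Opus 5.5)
Part (a) is the standard $\mathcal{O}(1/k)$ rate for projected gradient ascent on an $L$-smooth concave function over a closed convex set. I will apply it to $-Q$, which is convex and $L_Q$-smooth on $\Lambda$ by Lemma~\ref{lem:Q_smooth}. The set $\Lambda$ is a product of capped simplices, hence nonempty, closed and convex, so $\Pi_\Lambda$ is well defined and firmly nonexpansive.

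The argument for (a) runs in three steps.
\begin{itemize}
\item First, the descent lemma together with the optimality condition of the projection gives the sufficient-increase inequality $Q(\lambda^{(k+1)}) \ge Q(\lambda^{(k)}) + \tfrac{L_Q}{2}\norm{\lambda^{(k+1)}-\lambda^{(k)}}_F^2$, so the sequence $Q(\lambda^{(k)})$ is monotone.
\item Second, combining concavity $Q(\lambda^\star) \le Q(\lambda^{(k)}) + \langle \nabla Q(\lambda^{(k)}), \lambda^\star - \lambda^{(k)}\rangle$ with the projection inequality at $\lambda^\star\in\Lambda$ yields the three-point bound
\[
Q(\lambda^\star) - Q(\lambda^{(k+1)}) \le \tfrac{L_Q}{2}\left(\norm{\lambda^{(k)}-\lambda^\star}_F^2 - \norm{\lambda^{(k+1)}-\lambda^\star}_F^2\right).
\]
\item Third, I sum this over the first $k$ iterations, telescope, and use monotonicity to bound the last gap by the average, which gives \eqref{eq:ipr_conv_dual}.
\end{itemize}
Here $\eta = 1/L_Q$ is any valid step size, e.g.\ $\kappa/N_{\max}$, since a Lipschitz upper bound suffices.

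For (b), the key is that $Q$ has a strong-concavity-type lower bound measured in the primal response, not in $\lambda$. Fix $\lambda\in\Lambda$ and write $\Phi_\lambda(p) \coloneqq \sum_i \Jev_i(p_i) + \langle \lambda, \sum_i A_i p_i\rangle$. This function is $\kappa$-strongly convex in $p$ on $\Pi_i\X_i$, and its minimizer is $p(\lambda)$. The first-order optimality of $p(\lambda)$ then gives, for any feasible $p$,
\[
\Phi_\lambda(p) \ge \Phi_\lambda(p(\lambda)) + \tfrac{\kappa}{2}\sum_i\norm{p_i - p_i(\lambda)}_2^2.
\]
Since $Q(\lambda) = \Phi_\lambda(p(\lambda)) - \langle\lambda, C\rangle$ by \eqref{eq:Q_on_Lambda}, I apply this with $p = p^\star$ to get
\[
Q(\lambda) + \tfrac{\kappa}{2}\sum_i\norm{p_i^\star - p_i(\lambda)}_2^2 \le \sum_i\Jev_i(p_i^\star) + \langle \lambda, l^\star - C\rangle,
\]
where $l^\star = \sum_i A_i p_i^\star$.

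It then remains to show that the right-hand side is at most $J^\star = Q(\lambda^\star)$, using strong duality. This is the step I expect to be the main obstacle. It reduces to the Fenchel--Young-type inequality $\langle\lambda, l - C\rangle \le \Jgrid(l)$ for all $\lambda\in\Lambda$. This inequality is exactly the content of Lemma~\ref{lem:D2}: $\min_l \Jgrid(l) - \langle\lambda,l\rangle = -\langle\lambda,C\rangle$. It can also be checked directly, since $\lambda_s\succeq 0$ and $\sum_t\lambda_{s,t}\le 1$ imply $\sum_t \lambda_{s,t}(l_{s,t}-C_{s,t}) \le \max_t[l_{s,t}-C_{s,t}]^+$.

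With this inequality, the right-hand side is at most $\sum_i\Jev_i(p_i^\star) + \Jgrid(l^\star) = J^\star = Q(\lambda^\star)$, which is \eqref{eq:ipr_conv_primal} with $\lambda=\lambda^{(k)}$. Uniqueness of $p^\star$ follows from strong convexity. Combining with (a) gives $\sum_i\norm{p_i(\lambda^{(k)})-p_i^\star}_2^2 \le L_Q\norm{\lambda^{(0)}-\lambda^\star}_F^2/(\kappa k)$, hence the $\mathcal{O}(1/\sqrt{k})$ rate in norm.
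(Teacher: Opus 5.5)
Your proposal is correct and follows the paper's proof. For (a), the paper cites the standard projected-gradient value bound (Beck, Ch.~10) that you derive explicitly via sufficient increase, the three-point inequality and telescoping. For (b), the paper uses exactly your argument: strong convexity of the EV terms, plus Lemma~\ref{lem:D2} as a lower bound on the grid term (your Fenchel--Young-type inequality), gives $\mathcal{L}(p^\star, l^\star, \lambda) \ge Q(\lambda) + \frac{\kappa}{2}\sum_i \norm{p_i^\star - p_i(\lambda)}_2^2$ with $\mathcal{L}(p^\star, l^\star, \lambda) = J^\star = Q(\lambda^\star)$.
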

\begin{proof}
See Appendix~\ref{appx:ipr}.
\end{proof}

Part (b) concerns the \emph{price-response allocation} $p(\lambda^{(k)})$, which is what a price-based implementation realizes.
What ultimately matters for deployment is its objective $J(p(\lambda^{(k)}))$.
Part (b) shows the allocation converges to the optimum, so $J(p(\lambda^{(k)})) \to J^\star$ by continuity.

\begin{remark}\label{rmk:ipr_step}
The Lipschitz constant $L_Q$ is large in practice ($N_{\max}$ reaches the thousands at scale while $\kappa$ is small), so the constant step $\eta = 1/L_Q$ is conservative and the worst-case bound \eqref{eq:ipr_conv_dual} is pessimistic.
The implementation therefore uses normalized, backtracked steps (Sec.~\ref{sec:ipr_entropic}) that enforce monotone ascent of $Q$, under which the $\mathcal{O}(1/k)$ guarantee is retained while practical progress is much faster.
\end{remark}

\begin{algorithm}
\caption{Iterated price response (IPR)}\label{alg:ipr}
\begin{algorithmic}[1]\small
\Procedure{IPR}{$C$, $\beta$, $\eta_0$, $K$}
\State $p_i \gets \arg\min_{p_i \in \X_i} \Jev_i(p_i)$, $\forall i$
\Comment{unpriced sweep}
\State $\lambda \gets \widehat{\lambda}(\sum_i A_i p_i)$ by \eqref{eq:lse_price}
\Comment{shortage seed}
\State $\eta \gets \eta_0$
\For{$k = 0, 1, \dots, K-1$}
\State $p_i(\lambda) \gets$ solve \hyperref[eq:D1]{(D1)}, $\forall i$, by Alg.~\ref{alg:step1}
\Comment{best responses}
\State $g \gets \sum_i A_i\, p_i(\lambda) - C$
\Comment{$= \nabla Q(\lambda)$}
\State evaluate $J$, $Q$; keep the best certified pair
\State $\lambda_s^+ \gets \Pi_\Lambda\!\left[\lambda_s \odot \exp(\eta\, g_s / c_s)\right]$, $\forall s$, by \eqref{eq:polish}
\State backtrack $\eta \gets \eta/2$ until $Q(\lambda^+) \ge Q(\lambda)$; $\lambda \gets \lambda^+$
\EndFor
\State \Return best $(p(\lambda), \lambda)$, certified gap by \eqref{eq:SO_rel_gap}
\EndProcedure
\end{algorithmic}
\end{algorithm}

\subsection{Entropic Mirror Ascent and Shortage-Based Initialization}\label{sec:ipr_entropic}

The Euclidean update \eqref{eq:ipr_generic} treats all coordinates of $\lambda$ alike, whereas the shortages that drive the gradient differ by orders of magnitude across feeders.
We therefore adopt \emph{entropic mirror ascent} \cite[Ch.~9]{beck2017first}, whose update is multiplicative per feeder:
\begin{equation}\label{eq:polish}
    \lambda_s \gets \Pi_{\Lambda}\!\left[\lambda_s \odot \exp\!\big(\eta\, g_s / c_s\big)\right], \qquad g = \nabla Q(\lambda)
\end{equation}
where $\odot$ and $\exp$ apply element-wise, the scale $c_s = \max\{\norm{g_s}_\infty, c_0\}$ (with a small floor $c_0>0$) normalizes the step per feeder, and $\Pi_\Lambda$ here rescales each feeder's price mass into the capped simplex.
The step size $\eta$ is backtracked. A candidate is accepted only if $Q$ does not decrease, so the ascent is monotone and the guarantee of Thm.~\ref{thm:ipr_conv} is preserved.

Three properties motivate the multiplicative form.
First, it is scale-free, since one step size serves feeders whose shortages differ by orders of magnitude, with no per-feeder tuning.
Second, non-negativity of the price is automatic.
Third, price mass concentrates exponentially on persistently violated slots, matching the structure of the optimal price, which is supported on the slots that attain each congested feeder's peak.
Two safeguards address the known weakness that zero is absorbing under multiplicative updates. The projection $\Pi_\Lambda$ floors tiny entries, and it seeds a small price mass on violated feeders that carry none, so that support can re-enter.
Both safeguards are accepted under the same monotone-$Q$ rule.

A good initial price follows from the same object the gradient exposes.
Given a reference aggregate load $\hat{l}$, e.g., from the unpriced responses $\arg\min_{p_i \in \X_i} \Jev_i(p_i)$, let $\delta_s = \hat{l}_s - C_s$ be the shortage profile of feeder $s$ and $\overline{\delta}_s = \max_t \delta_{s,t}$.
For every congested feeder ($\overline{\delta}_s > 0$), define the smoothed worst-case price
\begin{equation}\label{eq:lse_price}
    \widehat{\lambda}_{s,t}(\hat{l}) = \frac{\exp(\beta\, \delta_{s,t} / \overline{\delta}_s)}{\sum_{\tau}\exp(\beta\, \delta_{s,\tau} / \overline{\delta}_s)}
\end{equation}
with a sharpness parameter $\beta > 0$, while uncongested feeders receive $\widehat{\lambda}_s = 0$.
Each $\widehat{\lambda}_s$ is a smoothed subgradient of $\max_t\,[\cdot]^+$ and lies in $\Lambda$, with $\beta \to \infty$ recovering the argmax vertex.
The smoothing is motivated by deployment. A price concentrated on a single slot would herd the price-responding EVs into that slot, whereas the dense form prices every slot of a congested feeder.
Initializing at $\lambda^{(0)} = \widehat{\lambda}(\hat{l})$ starts the ascent at the natural magnitude of the congestion price, whereas ascending from $\lambda = 0$ spends many early iterations recovering that magnitude.

Alg.~\ref{alg:ipr} summarizes the method.
Each iteration costs one \hyperref[eq:D1]{(D1)} sweep over the fleet (Sec.~\ref{sec:solver_S1}), fully parallel across EVs.

\subsection{Comparison with the ADMM Alternative}\label{sec:admm_alt}

The alternating direction method of multipliers (ADMM) is an alternative scheme for distributed optimization that treats \hyperref[eq:opt_SO]{(SO)} by operator splitting  \cite{boyd2011distributed}.
By introducing auxiliary variables $z_i \in \R^{S \times T} = A_i p_i$, we can rewrite the problem \hyperref[eq:opt_SO]{(SO)} as:
\begin{subequations}\label{eq:ADMM_0}
\begin{align}
    ({\rm P})\qquad
    \min_{p, z}~ & \sum\nolimits_{i}\, f_i(p_i) + g\left(\sum\nolimits_i z_i\right)\\
    \text{s.t.}\quad & z_i =  A_i p_i \quad (\text{dual}:~\lambda_i),\quad \forall i \label{eq:couple_constr}
\end{align}
\end{subequations}
where 
\begin{align*}
    f_i(p_i):~ & \R^T \mapsto \R \cup \{+\infty\} \coloneqq \Ind_{\X_i}(p_i) + \Jev_i(p_i)\\
    g(l): ~& \R^{S\times T} \mapsto \R \coloneqq \Jgrid(l)
\end{align*}

From \hyperref[eq:ADMM_0]{(P)}, ADMM constructs an augmented Lagrangian and alternates minimization over $p$ and $z$ with a multiplier update.
Define $\lambda_i \in \R^{S \times T}$ to be the dual variable corresponding to the $i^{\rm th}$ coupling constraint \eqref{eq:couple_constr}, let $r_i \in \R^{S \times T} \coloneqq A_i p_i - z_i$, and let $\rho>0$ be a \emph{penalty} factor, set to $\rho = 0.1$ in all ADMM runs of this paper.
With the scaled dual $\mu \coloneqq \lambda / \rho$, the (scaled) augmented Lagrangian is:
\begin{align}\label{eq:Lag_scale}
    L_\rho(p, z, \mu) & \coloneqq \sum\nolimits_{i}\, f_i(p_i) + g(\sum\nolimits_i z_i) \notag\\
    & \quad + \frac{\rho}{2}\sum\nolimits_i \left(\norm{r_i+\mu_i}_F^2 - \norm{\mu_i}_F^2\right)
\end{align}

In each iteration $k$, ADMM performs a minimization of $L_\rho$ over $p$ (step S1), a minimization over $z$ (step S2), and a \emph{multiplier update} $\mu^{(k+1)} \gets \mu^{(k)} + r^{(k+1)}$ (step S3).
Problem \hyperref[eq:ADMM_0]{(P)} further belongs to the class of \emph{sharing} problems \cite[Sec.~7.3]{boyd2011distributed}. By exploiting this structure, the $z$ update admits an analytical solution and the per-EV duals $\mu_i$ collapse to their common average $\mu$ (consensus).
The three steps then reduce to (see Appendix~\ref{appx:admm} for the derivation):
\begin{subequations}\label{eq:admm_steps}
\begin{align}
&\text{(S1)} && p_i^{(k+1)} \gets \arg\min_{p_i \in \X_i} \frac{\rho}{2}\norm{A_i p_i - b_i^{(k)}}_F^2 + \Jev_i(p_i)\label{eq:step_1}\\
&\text{(S2)} && l^{(k+1)} \gets \arg\min_l \frac{\rho}{2I}\norm{l - D^{(k)}}^2_F + \Jgrid(l)\label{eq:step_2}\\
&\text{(S3)} && \mu^{(k+1)} \gets \mu^{(k)} + \frac{1}{I} (\hat{l}^{(k+1)} - l^{(k+1)})\label{eq:step_3}
\end{align}
\end{subequations}
with $\hat{l}^{(k)} = \sum_i A_i p_i^{(k)}$, $b_i^{(k)} = A_i p_i^{(k)} - \frac{1}{I}(\hat{l}^{(k)} - l^{(k)} + I \mu^{(k)})$,
and $D^{(k)} = \hat{l}^{(k+1)} + I \mu^{(k)}$.
Step \hyperref[eq:step_1]{(S1)} solves $I$ independent EV subproblems in parallel, and \hyperref[eq:step_2]{(S2)} decomposes into $S$ independent feeder subproblems.
(S1) is solved by an adaptation of the custom solver of Sec.~\ref{sec:customSolver}, and (S2) admits an exact $\mathcal{O}(T \ln T)$ per-feeder solution (see Appendix~\ref{appx:S2_solver}).

The general convergence results \cite[Sec.3.2]{boyd2011distributed} apply here:
\begin{theorem}\label{thm:admm_conv}
As $k \to \infty$,
\begin{itemize}
    \item the primal residual $r^{(k)} \to 0$;
    \item the objective $\sum\nolimits_{i}\, f_i(p^{(k)}_i) + g(\sum\nolimits_i z_i^{(k)}) \to J^\star$;
    \item the dual variable $\lambda^{(k)} \to \lambda^\star$, where $\lambda^\star$ is a dual optimal point of \hyperref[eq:ADMM_0]{(P)}
\end{itemize}
\end{theorem}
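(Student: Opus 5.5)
Theorem~\ref{thm:admm_conv} follows by casting \hyperref[eq:ADMM_0]{(P)} in the standard two-block form used in \cite[Sec.~3.2]{boyd2011distributed} and checking that form's two hypotheses. Stack $x \coloneqq (p_1,\dots,p_I)$ and $y \coloneqq (z_1,\dots,z_I)$, and set
\[
F(x) \coloneqq \sum\nolimits_i f_i(p_i), \qquad G(y) \coloneqq g\bigl(\sum\nolimits_i z_i\bigr).
\]
The coupling \eqref{eq:couple_constr} then reads $\mathcal{A}x - y = 0$, where $\mathcal{A}$ is block-diagonal with blocks $A_i$. The iterations \eqref{eq:admm_steps} are exactly unscaled-to-scaled ADMM on this problem. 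The sharing reduction in Appendix~\ref{appx:admm} is only an algebraic simplification of the $y$-update and the multiplier update; it does not change the iterate sequence. The classical result then needs two assumptions: (i) $F$ and $G$ are closed, proper, and convex; (ii) the unaugmented Lagrangian $L_0$ has a saddle point.

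\emph{Step 1: assumption (i).}
\begin{itemize}
\item $f_i = \Ind_{\X_i} + \Jev_i$, where $\X_i$ is a nonempty polyhedron (Definition~\ref{defn:feas}) and $\Jev_i$ is a finite convex quadratic. Hence $f_i$ is closed, proper, and convex, and so is the separable sum $F$.
\item $g = \Jgrid$ is a finite sum of maxima of affine functions. It is therefore a finite, polyhedral, convex function. Its composition with the linear map $y \mapsto \sum_i z_i$ is again closed, proper, and convex.
\end{itemize}

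\emph{Step 2: assumption (ii), saddle point.} I use the Remark in Sec.~\ref{sec:mac}: (SO) has polyhedral constraints and is feasible, so strong duality holds and the dual optimum is attained. The primal optimum is also attained: $\X_i$ is nonempty and closed, and the objective $\frac{\kappa}{2}\|p\|^2$ plus a nonnegative term is coercive. Take a primal optimum $p^\star$ and a dual optimum $\lambda^\star$ of (SO). Set $z_i^\star = A_i p_i^\star$ and $\lambda_i^\star = \lambda^\star$ for all $i$.

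To show this triple is a saddle point of $L_0$ for (P), use the KKT characterization from Thm.~\ref{thm:SO_CE}, i.e.\ the CE conditions.
\begin{itemize}
\item Condition \eqref{eq:EV_i_CE} gives minimization in each $p_i$.
\item Condition \eqref{eq:l_CE}, together with the fact that $G$ depends only on $\sum_i z_i$ and all multipliers equal $\lambda^\star$, gives minimization in $y$. The $y$-part of $L_0$ is $g(\sum_i z_i) - \langle \lambda^\star, \sum_i z_i\rangle$, which depends only on $l = \sum_i z_i$. Note that \eqref{eq:l_CE} imposes $l \succeq 0$ while (D2) is unconstrained; I would use the unconstrained (D2) version directly from the KKT conditions of (SO).
\item Feasibility $z^\star = \mathcal{A}p^\star$ gives maximality in $\lambda$.
\end{itemize}
This is the step that needs care: one must check that the (P)-dual is the (SO)-dual lifted by consensus. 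I expect this to be the main, though mild, obstacle.

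\emph{Step 3: conclusion.} Invoke \cite[Sec.~3.2, Appendix~A]{boyd2011distributed}. Its Lyapunov argument gives the following.
\begin{itemize}
\item $r^{(k)} \to 0$.
\item The objective converges to $J^\star$. Here one uses that (P) and (SO) share the optimal value, since $l = \sum_i z_i = \sum_i A_i p_i$ under feasibility.
\item The dual iterates converge to some dual optimal point. The Lyapunov function $\frac{1}{\rho}\|\lambda - \lambda^\star\|^2 + \rho\|\mathcal{A}(x - x^\star)\|^2$ is nonincreasing for every saddle point, which yields convergence of $\lambda^{(k)}$ to a saddle multiplier. This gives the third bullet, interpreting $\lambda^\star$ as \emph{a} dual optimum, as stated.
\end{itemize}
I would remark that no convergence of $z^{(k)}$ itself is claimed beyond what $r^{(k)} \to 0$ and boundedness imply.
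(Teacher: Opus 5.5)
Your proof is correct, but it discharges the key hypothesis by a different route than the paper. Both arguments check that $f_i$ and $g$ are closed, proper, and convex in the same way; the difference lies in Boyd's second assumption.

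The paper obtains strong duality and dual attainment for (P) directly from Slater's condition. It notes that any $p\in\prod_i\X_i$ yields $l=\sum_i A_i p_i\in\operatorname{ri}(\operatorname{dom} g)=\R^{S\times T}$. You instead construct an explicit saddle point of $L_0$ by lifting a primal--dual pair of (SO) with consensus multipliers $\lambda_i^\star=\lambda^\star$. Your three checks already settle the ``obstacle'' you flag: minimization in $p$ via (D1), minimization in $z$ via the unconstrained (D2) because the $z$-part depends only on $\sum_i z_i$, and constancy in $\lambda$ by feasibility. Indeed every dual optimum of (P) must be a consensus copy. If $\lambda_i\neq\lambda_j$, shifting $z_i$ against $z_j$ at fixed $\sum_i z_i$ drives the dual function to $-\infty$.

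Your route buys two things. First, it verifies exactly the hypothesis the cited theorem uses: a saddle point, which also requires primal attainment, something the paper's ``strong duality and dual attainment'' leaves implicit. Second, it identifies the limit multiplier in the third bullet with a dual-optimal price of (SO).

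The paper's route is shorter and manifestly independent of $\kappa$. This matters because the paper stresses that Thm.~\ref{thm:admm_conv} holds at $\kappa=0$. Your coercivity argument for primal attainment uses $\kappa>0$. Replace it by compactness of $\X_i$, which is a polytope thanks to the box constraints; that works for every $\kappa\ge 0$.

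There are two small slips in Step~3. First, in the cited analysis the Lyapunov function involves the block updated second, here $z$ with $B=-I$. So it reads $\frac{1}{\rho}\|\lambda-\lambda^\star\|^2+\rho\|z-z^\star\|^2$ rather than containing $\rho\|\mathcal{A}(x-x^\star)\|^2$. Second, Boyd's Appendix~A proves only residual and objective convergence. Dual convergence therefore needs the extra Fej\'er-monotonicity and cluster-point argument you sketch, or the Douglas--Rachford interpretation of ADMM on the dual. The paper, too, relies on this step only by citation.
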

\begin{proof}
The required assumptions are verified in Appendix~\ref{appx:admm}.
\end{proof}

Two observations position the two methods.
First, Thm.~\ref{thm:admm_conv} requires no strong convexity. It holds even with $\kappa = 0$, in which case the IPR analysis of Sec.~\ref{sec:ipr_conv} fails ($Q$ becomes non-smooth and best responses set-valued).
This robustness is ADMM's genuine theoretical advantage, although a \emph{deployable} price requires $\kappa > 0$ either way (Thm.~\ref{thm:SO_CE-2}).
Second, ADMM's guarantees concern objective values, residuals, and the limit of the dual variable, none of which controls the deployed shortage $\nabla Q(\rho\mu^{(k)})$ at the \emph{current} price, which by \eqref{eq:gradQ} is what a price-based deployment realizes.
Moreover, the multiplier update \eqref{eq:step_3} advances the price by the average per-EV residual, carrying a factor $1/I$, so at population scale the price magnitude builds slowly from a cold start.
IPR sidesteps both issues by ascending $Q$ directly from the shortage-based initialization \eqref{eq:lse_price}.
Sec.~\ref{sec:solver_analysis} compares the two methods empirically, and accelerated variants of the ADMM baseline are discussed in Appendix~\ref{appx:accel}.

\section{Custom Subproblem Solvers}\label{sec:customSolver}
Each IPR iteration solves $I$ instances of the best response \hyperref[eq:D1]{(D1)}, one per EV.
At population scale, this sweep dominates the runtime of the whole pipeline.
The instances are readily solved by generic off-the-shelf solvers, but with millions of them per iteration, per-instance speed becomes decisive.
This section develops a custom solver that exploits the structure of $\X_i$ to accelerate \hyperref[eq:D1]{(D1)} while preserving certifiable primal--dual bounds.
The subproblems of the ADMM baseline are covered by the same machinery (see the remark closing Sec.~\ref{sec:solver_S1}).

\subsection{Structure of the Feasible Set $\X_i$}\label{sec:feas}
For $t \in \Tset$, define 
\begin{align*}
& \underline{S}_{i,t} = \underline{E}_{i,t} - E_{i,0} + \sum\nolimits_{\tau\le t} D_{i,\tau}\\
& \overline{S}_{i,t} = \overline{E}_{i,t} - E_{i,0} + \sum\nolimits_{\tau\le t} D_{i,\tau}
\end{align*}
then the mobility-aware feasibility set $\X_i$ in Defn.\ref{defn:feas} can be rewritten as:
\begin{align}
\X_i  = \{p_i\in \R^T: \underline{P}_i \preceq p_i \preceq \overline{P}_i, \ \underline{S}_i \preceq Hp_i \preceq \overline{S}_i\}
\end{align}
where $H \in \R^{T \times T}$ is the cumulative-sum operator, i.e. $H_{jk} = \mathbf{1}\{j \ge k\}$, with $\mathbf{1}\{A\}$ equal to $1$ if statement $A$ is true and $0$ otherwise, so $(Hp_i)_t = \sum_{\tau\le t} p_\tau$.
Clearly, $\X_i$ is a closed polytope.

We characterize an equivalent condition for $\X_i \neq \emptyset$, which can be verified in $\mathcal{O}(T)$.
Define $\underline{C}_i \coloneqq H \underline{P}_i, \overline{C}_i \coloneqq H \overline{P}_i$. 
Obviously, it is \emph{necessary} for $\X_i \neq \emptyset$ to ensure that
\begin{equation}\label{eq:feas_necc}
\underline{P}_i \preceq \overline{P}_i, ~\text{and}~
\underline{C}_i \preceq \underline{S}_i \preceq \overline{S}_i \preceq \overline{C}_i
\end{equation}
However, \eqref{eq:feas_necc} alone is not \emph{sufficient}.
Define $\underline{B}_i \in \R^T \coloneqq \overline{C}_i + \underline{G}_i$, $\overline{B}_i \in \R^T \coloneqq \underline{C}_i + \overline{G}_i$, where
$$
    \underline{G}_{i,t} \coloneqq \max_{\tau \ge t}\,\{\underline{S}_{i,\tau} - \overline{C}_{i,\tau}\}, ~~
    \overline{G}_{i,t} \coloneqq \min_{\tau \ge t}\,\{\overline{S}_{i,\tau} - \underline{C}_{i,\tau}\}
$$



\begin{theorem}[Feasibility condition]\label{thm:feas_oracle}
For EV $i$, assume that \eqref{eq:feas_necc} satisfies, then:
$\X_i \neq \emptyset$ iff. $\underline{B}_i \preceq \overline{B}_i$.

This implies that one can check the feasibility in $\mathcal{O}(T)$, which is the time complexity for the construction and comparison of $\underline{B}_i, \overline{B}_i$. 
\end{theorem}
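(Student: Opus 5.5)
The plan is to pass to cumulative charge. Write $c_t \coloneqq (Hp_i)_t$ with $c_0 \coloneqq 0$, and set $\underline{C}_{i,0}=\overline{C}_{i,0}=0$. Then $p_i\in\X_i$ holds iff the chain $(c_t)$ satisfies two conditions: $c_t - c_{t-1}\in[\underline{P}_{i,t},\overline{P}_{i,t}]$ and $c_t\in[\underline{S}_{i,t},\overline{S}_{i,t}]$ for all $t$. The quantities $\underline{B}_{i,t}$ and $\overline{B}_{i,t}$ should be read as \emph{backward reachability bounds}. Specifically, $\underline{B}_{i,t}=\max_{\tau\ge t}\{\underline{S}_{i,\tau}-(\overline{C}_{i,\tau}-\overline{C}_{i,t})\}$ is the smallest cumulative charge at $t$ from which every future lower target $\underline{S}_{i,\tau}$ can still be reached at full power. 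Symmetrically, $\overline{B}_{i,t}=\min_{\tau\ge t}\{\overline{S}_{i,\tau}-(\underline{C}_{i,\tau}-\underline{C}_{i,t})\}$ is the largest cumulative charge from which no future upper bound is forced to be exceeded at minimum power. I would prove both directions from this interpretation.

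\emph{Necessity.} Suppose $p_i\in\X_i$. For any $\tau\ge t$, summing the power bounds over $(t,\tau]$ gives $c_\tau-c_t\le \overline{C}_{i,\tau}-\overline{C}_{i,t}$. Combined with $c_\tau\ge \underline{S}_{i,\tau}$, this yields $c_t\ge \underline{B}_{i,t}$. Symmetrically, $c_t-c_\tau\le \underline{C}_{i,t}-\underline{C}_{i,\tau}$ together with $c_\tau\le\overline{S}_{i,\tau}$ gives $c_t\le\overline{B}_{i,t}$. Hence $\underline{B}_i\preceq\overline{B}_i$.

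\emph{Sufficiency.} I would construct $c$ forward in time, maintaining the invariant $c_t\in[\underline{B}_{i,t},\overline{B}_{i,t}]$.
\begin{itemize}
\item Base case: extend the definitions to $t=0$, with the max/min taken over $\tau\ge 1$. The invariant $0\in[\underline{B}_{i,0},\overline{B}_{i,0}]$ reads $\underline{S}_{i,\tau}\le\overline{C}_{i,\tau}$ and $\overline{S}_{i,\tau}\ge\underline{C}_{i,\tau}$ for all $\tau$, which is supplied by \eqref{eq:feas_necc}.
\item Inductive step: given $c_{t-1}$ satisfying the invariant, choose $c_t$ in the intersection
\[
[c_{t-1}+\underline{P}_{i,t},\,c_{t-1}+\overline{P}_{i,t}]\cap[\underline{B}_{i,t},\overline{B}_{i,t}].
\]
Taking $\tau=t$ in the definitions gives $\underline{B}_{i,t}\ge\underline{S}_{i,t}$ and $\overline{B}_{i,t}\le\overline{S}_{i,t}$, so any such choice also satisfies the SOC constraint.
\end{itemize}
Two intervals intersect iff each one's lower end is at most the other's upper end. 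The two self-comparisons hold by $\underline{P}_i\preceq\overline{P}_i$ and by the hypothesis $\underline{B}_i\preceq\overline{B}_i$. For the cross terms, use $c_{t-1}\le\overline{B}_{i,t-1}$. Dropping $\tau=t-1$ from the minimum gives
\[
\overline{B}_{i,t-1}\le\min_{\tau\ge t}\{\overline{S}_{i,\tau}-\underline{C}_{i,\tau}\}+\underline{C}_{i,t-1}.
\]
Since $\underline{C}_{i,t-1}+\underline{P}_{i,t}=\underline{C}_{i,t}$, it follows that $c_{t-1}+\underline{P}_{i,t}\le\overline{B}_{i,t}$. The mirror argument, using $c_{t-1}\ge\underline{B}_{i,t-1}$ and $\overline{C}_{i,t-1}+\overline{P}_{i,t}=\overline{C}_{i,t}$, gives $c_{t-1}+\overline{P}_{i,t}\ge\underline{B}_{i,t}$. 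Induction then produces a feasible chain, and setting $p_{i,t}=c_t-c_{t-1}$ gives a point of $\X_i$.

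The main obstacle is bookkeeping rather than depth. One must fix the right invariant and the $t=0$ boundary convention so that exactly \eqref{eq:feas_necc} closes the base case, and check the monotonicity $\overline{B}_{i,t-1}-\underline{C}_{i,t-1}\le\overline{B}_{i,t}-\underline{C}_{i,t}$ (and its mirror), which is what drives the step. For the complexity claim: $\underline{C}_i$ and $\overline{C}_i$ are prefix sums, $\underline{G}_i$ and $\overline{G}_i$ are suffix max/min computed in one backward pass, and the final comparison is elementwise, so the check costs $\mathcal{O}(T)$.
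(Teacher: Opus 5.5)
Your proposal is correct and follows the paper's proof essentially step for step. For necessity, you show that every feasible cumulative profile $Hp_i$ lies between $\underline{B}_i$ and $\overline{B}_i$; the paper argues this contrapositively via the maximizing index $t'$, and your direct version is slightly cleaner. For sufficiency, your argument is the forward greedy construction of Alg.~\ref{alg:feas_oracle}, using the same monotonicity $\overline{G}_{i,t-1}\le\overline{G}_{i,t}$, $\underline{G}_{i,t-1}\ge\underline{G}_{i,t}$ and the same appeal to \eqref{eq:feas_necc}. Your $t=0$ convention simply absorbs the paper's separate $t=1$ check $\underline{G}_{i,1}\le 0\le\overline{G}_{i,1}$.
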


\begin{proof}
See Appendix~\ref{appx:feas_oracle}.
\end{proof}

Moreover, provided an arbitrary charging profile $\hat{p}_i \in \R^T$ that might be infeasible,
Alg.\ref{alg:feas_oracle} is a $\mathcal{O}(T)$ greedy repairing algorithm that returns a feasible $p_i \in \X_i$ that is ``close'' to $\hat{p}_i$, denoted as
$p_i = \verb|Feas|(\hat{p}_i)$
(or return \verb|infeas| if $\X_i = \emptyset$).

\begin{algorithm}
\caption{Find a feasible $p_i$ close to a target $\hat{p}_i \in \R^T$}\label{alg:feas_oracle}
\begin{algorithmic}[1]\small
\Procedure{Feas}{$\hat{p}_i$}
\State \Comment{Construction of $\underline{B}_i, \overline{B}_i$ is $\hat{p}_i$-independent}
\If{$\underline{B}_{i} \npreceq \overline{B}_{i}$}
\State \Return \verb|infeas|
\EndIf
\State $S_{i,0} \gets 0$
\For{$t = 1, ..., T$}
\State $\underline{\beta}_{i,t} \gets \max\{\underline{B}_{i,t}, S_{i,t-1}+\underline{P}_{i,t}\}$
\State $\overline{\beta}_{i,t} \gets \min\{\overline{B}_{i,t}, S_{i,t-1}+\overline{P}_{i,t}\}$
\State $S_{i,t} \gets \text{clip}(S_{i,t-1} + \hat{p}_{i,t}; \underline{\beta}_{i,t}, \overline{\beta}_{i,t})$
\State $p_{i,t} \gets S_{i,t}- S_{i,t-1}$
\EndFor
\State \Return  $p_i$
\EndProcedure
\end{algorithmic}
\end{algorithm}

Alg.\ref{alg:feas_oracle} turns out to be important for the implementation of a robust and certifiable subproblem solver in Sec.\ref{sec:solver_S1}.

\subsection{A Certifiable Solver for the Best Response (D1)}\label{sec:solver_S1}

Subproblem \hyperref[eq:D1]{(D1)}, the gradient oracle of IPR, minimizes a strongly convex quadratic over the box and cumulative-energy constraints of $\X_i$.
Concretely, with $\Jev_i(p_i) = \frac{\kappa}{2}\norm{p_i}_2^2$ and the description of $\X_i$ from Sec.~\ref{sec:feas}, \hyperref[eq:D1]{(D1)} seeks to solve the following optimization problem:
\begin{subequations}\label{eq:D1_exp}
\begin{align}
\min_{p_i}~~& \frac{\kappa}{2}\norm{p_i}_2^2 + \langle c_i, p_i \rangle\\
\text{s.t.}~~& \underline{P}_i \preceq p_i \preceq \overline{P}_i\\
& \underline{S}_i \preceq Hp_i \preceq \overline{S}_i
\end{align}
\end{subequations}
where $c_i \in \R^T \coloneqq A_i^\top \lambda$ collects the prices EV $i$ is exposed to along its itinerary.

Define $\underline{u}_i, \overline{u}_i \in \R^T$ as dual variables corresponding to constraints $\underline{S}_i \preceq Hp_i$ and $Hp_i \preceq \overline{S}_i$, respectively,
and let $u_i \in \R^{2T} \coloneqq [\underline{u}_i, \overline{u}_i]$.
Define Lagrangian $\mathcal{L}(p_i, u_i)$ as
\begin{align}
\mathcal{L}(p_i, u_i) =
    & \frac{\kappa}{2}\norm{p_i}_2^2 + \langle c_i, p_i \rangle \notag\\
    & + \langle \underline{u}_i, \underline{S}_i - Hp_i\rangle
    + \langle \overline{u}_i, Hp_i - \overline{S}_i\rangle
\end{align}
Since strong duality holds here, solving \hyperref[eq:D1_exp]{(D1)} is equivalent to solving
\begin{subequations}\label{eq:step1_dual}
\begin{align}
    &\max_{u_i \succeq 0} ~~\psi(u_i), ~~\text{where}\\
    \psi(u_i) \coloneqq &\min_{p_i}\left\{
    \mathcal{L}(p_i, u_i):
    p_i \in [\underline{P}_i, \overline{P}_i]
    \right\}
\end{align}
\end{subequations}
Define
$
p^\dagger(u_i) \coloneqq \arg\min_{p_i}\left\{
    \mathcal{L}(p_i, u_i):
    p_i \in [\underline{P}_i, \overline{P}_i]
    \right\}
$, which is a singleton, and its only element admits a closed-form\footnote{For simplicity, we do not distinguish the singleton set and its only element in notation.}
\begin{equation}\label{eq:p_dagger_D1}
    p^\dagger(u_i) = \text{clip}\left(
\frac{1}{\kappa}\left[-c_i + H^\top (\underline{u}_i - \overline{u}_i)\right],
\underline{P}_i, \overline{P}_i
\right)
\end{equation}

Note that $\psi(u_i)$ is essentially the dual function of \eqref{eq:D1_exp}, which is concave.
Therefore, we can solve \eqref{eq:step1_dual} by \emph{projected gradient ascent}, 
i.e., for step size $\eta > 0$, update $u_i$ to a new iterate $u_i^+$ according to:
\begin{subequations}\label{eq:step_1_update}
\begin{equation}
u_i^+ =  \text{Proj}_{\succeq 0}\left(u_i + \eta\nabla\psi(u_i)\right) = \left[u_i + \eta\nabla\psi(u_i)\right]^+
\end{equation}
where by Danskin's theorem \cite[Prop. B.25(a)]{bertsekas1999nonlinear},
\begin{equation}
    \nabla\psi(u_i) = \frac{\partial \mathcal{L}}{\partial u_i}\Big|_{p^\dagger} = \begin{bmatrix}
        \underline{S}_i - H p^\dagger(u_i)\\
        H p^\dagger(u_i) - \overline{S}_{i}
    \end{bmatrix}
\end{equation}
\end{subequations}

Denote the $k$-th iterate as $u_i^{\langle k \rangle}$, and $p_i^{\langle k \rangle} \coloneqq p^\dagger(u_i^{\langle k \rangle})$.
Suppose $\Phi(p_i)$ is the objective function of \hyperref[eq:D1_exp]{(D1)}, and $\Phi^\star$ is its optimal value.
\begin{theorem}[Convergence of Alg.\ref{alg:step1}]\label{thm:S1_conv}
Dual function $\psi(u_i)$ is $L_\psi$-smooth with $L_\psi=\frac{2}{\kappa}\norm{H}_2^2$, where
$\norm{H}_2^2 \approx 0.405 T^2$ is the (induced) $2$-norm of matrix $H$.

Therefore, under Alg.\ref{alg:step1} with step size $\eta \in (0, \frac{1}{L_\psi}]$,
\begin{subequations}
\begin{itemize}
    \item The dual objective values $\{\psi(u_i^{\langle k \rangle})\}_{k\ge 0}$ converge to $\Phi^\star$ in $\mathcal{O}(1/k)$, i.e., 
    \begin{equation}\label{eq:S1_conv_dual}
        \Phi^* - \psi(u_i^{\langle k \rangle}) \le \left(2\eta k\right)^{-1} \norm{u_i^{\langle 0 \rangle} - u_i^\star}^2
    \end{equation}
    where $u_i^\star$ is any dual optimal solution.
    \item The primal points $\{p_i^{\langle k \rangle}\}_{k\ge 0}$ converge to $p_i^\star$ in $\mathcal{O}(1/k)$, i.e., 
    \begin{equation}\label{eq:S1_conv_primal}
    \norm{p_i^{\langle k \rangle}) - p_i^\star}^2 \le \left(\widetilde{\eta} k\right)^{-1} \norm{u_i^{\langle 0 \rangle} - u_i^\star}^2
    \end{equation}
    where $p_i^\star$ is the unique primal optimal solution and $\widetilde{\eta} \coloneqq \eta \kappa$.
\end{itemize}
\end{subequations}

\end{theorem}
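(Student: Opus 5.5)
The proof has three parts: smoothness of $\psi$, the standard projected-gradient rate for the dual values, and a transfer of that rate to the primal iterates.

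\textbf{Smoothness.} Write the constraint map as $G p_i \coloneqq [-Hp_i;\, Hp_i]$, so $\mathcal{L}(p_i,u_i) = \frac{\kappa}{2}\norm{p_i}_2^2 + \langle c_i,p_i\rangle + \langle u_i, [\underline{S}_i;-\overline{S}_i]\rangle + \langle u_i, Gp_i\rangle$.
\begin{itemize}
\item The inner objective is $\kappa$-strongly convex in $p_i$ over the box $[\underline{P}_i,\overline{P}_i]$. A standard argument (equivalently, the conjugate of a strongly convex function is smooth) then shows $u_i\mapsto p^\dagger(u_i)$ is $\norm{G}_2/\kappa$-Lipschitz.
\item The closed form \eqref{eq:p_dagger_D1} gives this directly: $p^\dagger(u_i) = \mathrm{clip}\big(\frac{1}{\kappa}(-c_i - G^\top u_i)\big)$, which is affine in $u_i$ followed by a nonexpansive clip.
\item By Danskin, $\nabla\psi(u_i) = [\underline{S}_i;-\overline{S}_i] + G p^\dagger(u_i)$. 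Hence $\norm{\nabla\psi(u)-\nabla\psi(u')} \le \norm{G}_2^2\norm{u-u'}/\kappa$.
\item Since $G^\top G = 2H^\top H$, we have $\norm{G}_2^2 = 2\norm{H}_2^2$, which gives $L_\psi = \frac{2}{\kappa}\norm{H}_2^2$.
\end{itemize}
For the constant, note that $H^{-1}$ is the bidiagonal difference matrix. The eigenvalues of $(HH^\top)^{-1}$ are $4\sin^2\!\big(\frac{(2j-1)\pi}{2(2T+1)}\big)$, so $\norm{H}_2^2 = 1/\big(4\sin^2(\pi/(2(2T+1)))\big)\approx (4T^2/\pi^2)\approx 0.405T^2$. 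This is a side computation that I would only sketch.

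\textbf{Dual rate \eqref{eq:S1_conv_dual}.} Strong duality holds because the constraints are polyhedral and the primal is feasible (assumed, or checked via Thm.~\ref{thm:feas_oracle}). So $\max_{u\succeq0}\psi = \Phi^\star$ and a maximizer $u_i^\star$ exists. I would then invoke the standard projected gradient result for an $L$-smooth concave function over a closed convex set (e.g.\ Beck, Thm.~10.21): with $\eta\le 1/L_\psi$,
\[
\psi^\star - \psi(u^{\langle k\rangle}) \le \norm{u^{\langle0\rangle}-u^\star}^2/(2\eta k).
\]
If needed, I would reprove it quickly. The sufficient-ascent inequality from the descent lemma, combined with the three-point inequality for the projection, yields $\psi^\star-\psi(u^{\langle j+1\rangle}) \le \frac{1}{2\eta}\big(\norm{u^{\langle j\rangle}-u^\star}^2-\norm{u^{\langle j+1\rangle}-u^\star}^2\big)$. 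Telescoping this and using monotonicity of $\psi(u^{\langle j\rangle})$ gives the bound.

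\textbf{Primal rate \eqref{eq:S1_conv_primal}.} This is the step needing care. I want to show
\[
\frac{\kappa}{2}\norm{p^\dagger(u)-p^\star}^2 \le \Phi^\star - \psi(u) \quad \text{for all } u\succeq 0.
\]
The argument goes through $\mathcal{L}(\cdot,u)$ and $p^\star$:
\begin{itemize}
\item $p^\star$ is feasible and lies in the box. Since $u\succeq0$, the penalty terms are nonpositive at $p^\star$, so $\mathcal{L}(p^\star,u)\le \Phi(p^\star)=\Phi^\star$.
\item $\mathcal{L}(\cdot,u)$ is $\kappa$-strongly convex and $p^\dagger(u)$ is its minimizer over the box. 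The optimality condition then gives $\mathcal{L}(p^\star,u) \ge \psi(u) + \frac{\kappa}{2}\norm{p^\star - p^\dagger(u)}^2$.
\end{itemize}
Combining the two inequalities gives the claim. Plugging in the dual rate yields $\norm{p^{\langle k\rangle}-p^\star}^2 \le \frac{2}{\kappa}\cdot\frac{\norm{u^{\langle0\rangle}-u^\star}^2}{2\eta k} = (\eta\kappa k)^{-1}\norm{u^{\langle0\rangle}-u^\star}^2$, which is exactly $(\widetilde\eta k)^{-1}$.

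This squared bound decays as $\mathcal{O}(1/k)$, which is what the theorem means by primal convergence at rate $\mathcal{O}(1/k)$. The main obstacle is getting the smoothness constant and the $\norm{H}_2$ estimate exactly right; the rate statements themselves are standard.
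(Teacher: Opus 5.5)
Your proof is correct and is essentially the paper's argument written out in full. The paper casts (D1) into Beck's $f+g\circ\mathcal{A}$ template with $\mathcal{A}=[H;-H]$ and cites the dual proximal gradient results (Thms.~12.4 and 12.8, with $L_F=\norm{\mathcal{A}}^2/\sigma=2\norm{H}_2^2/\kappa$). You instead prove the smoothness, the projected-gradient rate, and the transfer $\frac{\kappa}{2}\norm{p^\dagger(u)-p^\star}_2^2\le\Phi^\star-\psi(u)$ directly, which is exactly what those cited theorems do internally.

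One small advantage of your version is that it needs only strong duality with dual attainment, which polyhedrality supplies. The paper instead relies on Beck's relative-interior qualification, and asserts it for an arbitrary $\hat{x}\in\X_i$, even though a feasible profile may be forced onto the boundary of the box or of the energy bounds.
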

\begin{proof}
    See Appendix~\ref{appx:s1_solver}.
\end{proof}

\begin{remark}
    Note that the Lipschitz constant $L_\psi$ tends to be huge (e.g., $L_\psi \approx 2 \times 10^7$ when $T = 168$ and $\kappa = 10^{-3}$, the value used in Sec.~\ref{sec:solver_analysis}).
    Hence, for convergence, it requires a tiny step size $\eta$ if we keep it constant, and as a result, the convergence rate tends to be slow (note $\eta$ on both RHS of \eqref{eq:S1_conv_dual} and \eqref{eq:S1_conv_primal}).
    We explore \textsf{Adam} \cite{kingma2014adam}, which is an adaptive gradient-based optimizer combining momentum with parameter-wise step size scaling
    that has been widely used to optimize deep neural networks.
    We find it can substantially accelerate convergence over vanilla gradient descent (ascent), so we incorporate it in our custom optimizer. 
    Implementation details and empirical comparisons are included in Appendix~\ref{appx:s1_solver}.
\end{remark}

\textbf{Duality gap}\qquad
For any feasible primal variable $\widetilde{p}_i$ and any feasible dual variable $\widetilde{u}_i$ (i.e., $\widetilde{p}_i \in \X_i$, $\widetilde{u}_i \succeq 0$), by duality theory, 
$\psi(\widetilde{u}_i) \le \Phi^\star \le \Phi(\widetilde{p}_i)$,
which provides a practical way to certify the (relative) optimality gap by the (relative) duality gap:
\begin{equation}\label{eq:S1_rel_duality_gap}
    \Delta(\widetilde{p}_i, \widetilde{u}_i) \coloneqq
    \frac{\Phi(\widetilde{p}_i) - \psi(\widetilde{u}_i)}{\max\{|\Phi(\widetilde{p}_i)|, |\psi(\widetilde{u}_i)|, 1\}}
\end{equation}
In particular, since $\psi(\widetilde{u}_i) \le \Phi^\star$ holds for \emph{any} $\widetilde{u}_i \succeq 0$, evaluating the (D1) terms of $Q(\lambda)$ by the dual values $\psi$, rather than by the objectives of inexact primal iterates (which over-estimate the minima), keeps $Q(\lambda)$ in \eqref{eq:weak_duality} a valid global lower bound at any inner tolerance.

\begin{algorithm}
\caption{Custom solver for the \hyperref[eq:D1_exp]{(D1)} best response}\label{alg:step1}
\begin{algorithmic}[1]\small
\Procedure{EV\_Indiv\_Update~}{$\eta, \epsilon, K$}
\State \textbf{Initialize} $u_i \gets 0, k \gets 0$
\Comment{optional: warm-start with $u_i^0$}
\State $p_i \gets  p^\dagger(u_i)$ by \eqref{eq:p_dagger_D1}
\While{True}
\State $g \gets \nabla\psi(u_i)$ by plugging in $p_i$, following \eqref{eq:step_1_update}
\vspace{5pt}
\State ${u}_i \gets \left[u_i + \eta g\right]^+$, 
$p_i \gets  p^\dagger(u_i)$ by \eqref{eq:p_dagger_D1}
\vspace{5pt}
\State $\Delta \gets \Delta(\texttt{Feas}(p_i), u_i)$ by \eqref{eq:S1_rel_duality_gap}
\If{$\Delta \le \epsilon$ or $k > K$}
\State \textbf{break}
\EndIf
\State $k \gets k+1$
\EndWhile
\State \Return $\texttt{Feas}(p_i), \Delta$
\EndProcedure
\end{algorithmic}
\end{algorithm}

\begin{remark}[Subproblems of the ADMM baseline]\label{rmk:admm_subproblems}
The per-EV update \hyperref[eq:step_1]{(S1)} of the ADMM baseline shares the structure of \eqref{eq:D1_exp}, namely a strongly convex quadratic over the same constraints, with modulus $1 + \kappa/\rho$ in place of $\kappa$.
Alg.~\ref{alg:step1} and Theorem~\ref{thm:S1_conv} apply verbatim after a one-line change of the closed form $p^\dagger$, with the reduction and the resulting constants given in Appendix~\ref{appx:s1_solver}.
The grid-side update \hyperref[eq:step_2]{(S2)} is likewise solved exactly, per feeder in $\mathcal{O}(T \ln T)$ by a sort-and-search procedure (Theorem~\ref{thm:S2} in Appendix~\ref{appx:S2_solver}), which makes (S2) negligible relative to (S1) in the overall ADMM runtime.
\end{remark}

\section{Software Implementations and Analysis}\label{sec:solver_analysis}
\subsection{Software Implementation}
We implement IPR and the ADMM baseline, with their custom subproblem solvers, in \texttt{Python (3.11.13)}. Task parallelization is handled by \texttt{Dask (2025.9.1)}, a Python library for flexible and powerful parallel and distributed computing. Batched subproblem solvers are implemented with \texttt{NumPy (2.3.3)}, which provides efficient matrix arithmetic. Full experiments were run on \texttt{Savio3} partition on the Berkeley's high performance computing cluster, which consists of dual-socket Intel Xeon Skylake 6130-based compute nodes (32 physical cores per node) with 96 GB DDR4 memory. We typically request 10 nodes for a run.

\subsection{IPR Converges to a Deployable Optimum in Fewer Iterations}

\begin{figure}[htbp]
    \centering
    \includegraphics[width=\columnwidth]{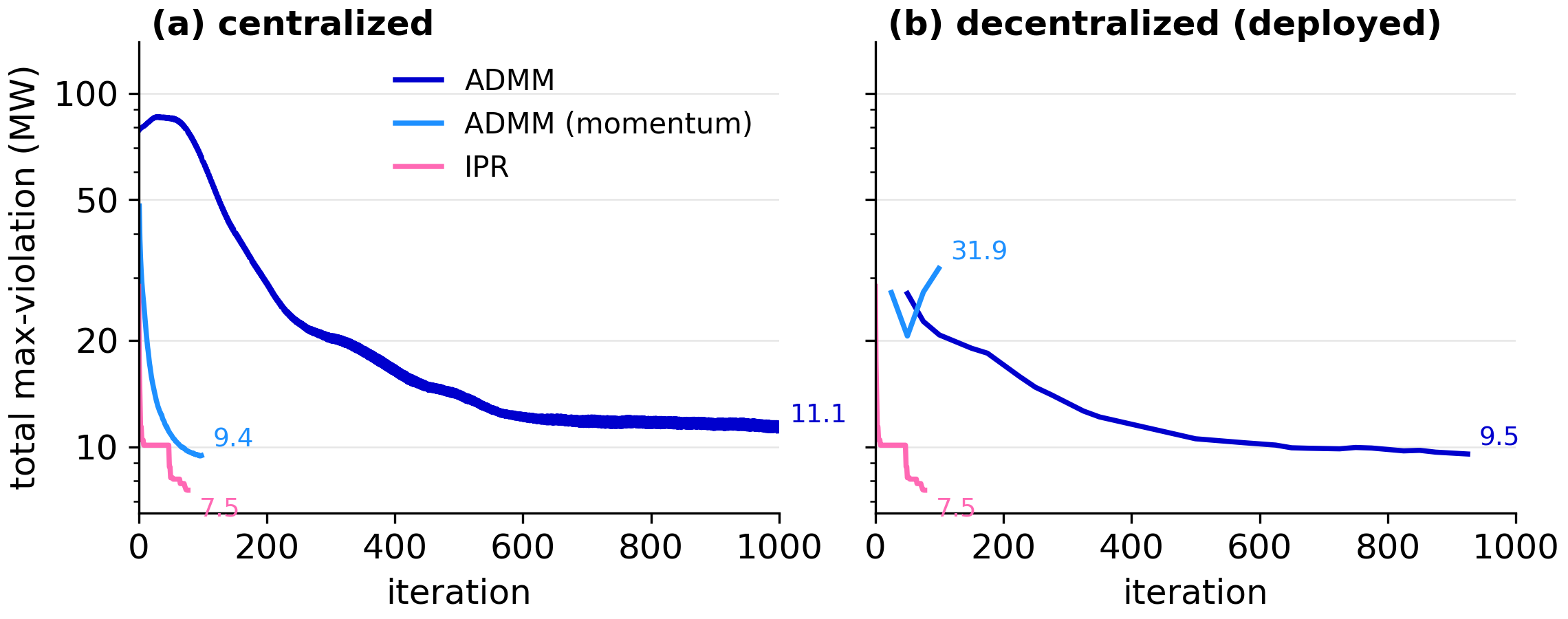}
    \caption{\textbf{Convergence of IPR and the ADMM baselines}
    \scriptsize\quad
    Total max-violation of \textbf{a.} the centralized allocation and \textbf{b.} the decentralized deployment, i.e., the aggregate best response to the price iterate, for \textsf{IPR} (Alg.~\ref{alg:ipr}), plain ADMM ($\rho=0.1$), and ADMM with the anchored dual momentum. IPR keeps no separate centralized iterate. Its allocation is the deployed response itself, so the same curve appears in both panels. The curve shows the best-so-far deployed response, which is the anytime iterate the method reports.
    }
    \label{fig:loss_curve}
\end{figure}

Figure~\ref{fig:loss_curve} compares the convergence of IPR with the two ADMM baselines on the full case study ($1.84$ million EVs, see Sec.~\ref{sec:case-study} for details). An iteration of either method sweeps the fleet once, one \hyperref[eq:D1]{(D1)} best response for IPR and one \hyperref[eq:step_1]{(S1)} solve for ADMM, and costs roughly one minute end-to-end on our hardware. For ADMM, the \emph{centralized} curve is computed from the primal iterate $p^{(k)}$, whereas the \emph{decentralized} curve is the violation of the aggregate best response to the price iterate $\lambda^{(k)}$, i.e., the number a price-based deployment would realize. For IPR the two coincide by construction.

Plain ADMM needs on the order of $1{,}000$ iterations ($\approx 20$~h) to reach a centralized violation of $11.1$~MW and a deployed violation of $9.5$~MW, with a relative duality gap \eqref{eq:SO_rel_gap} of about $12\%$. The anchored momentum variant accelerates the centralized allocation to $9.4$~MW within $100$ iterations, but its in-loop price deploys poorly ($32$~MW at iteration $100$), because the allocation and the price converge at different rates and no value-based ADMM signal controls the deployed number. IPR ascends the deployable price directly and dominates on the metric that matters. Starting from the shortage seed \eqref{eq:lse_price}, it reaches a deployed violation of $7.5$~MW, the best of the three, within $78$ iterations and with a certified relative optimality gap below $2\%$.

Appendix~\ref{appx:ipr_ablation} ablates the IPR price update against two Euclidean variants on the same seed, oracle, and line search, comparing both the deployed violation and the certified duality gap.
The ablation shows that the entropic geometry is what makes the ascent certify the solution it finds.

\begin{figure}[h!]
    \centering
    \includegraphics[width=\columnwidth]{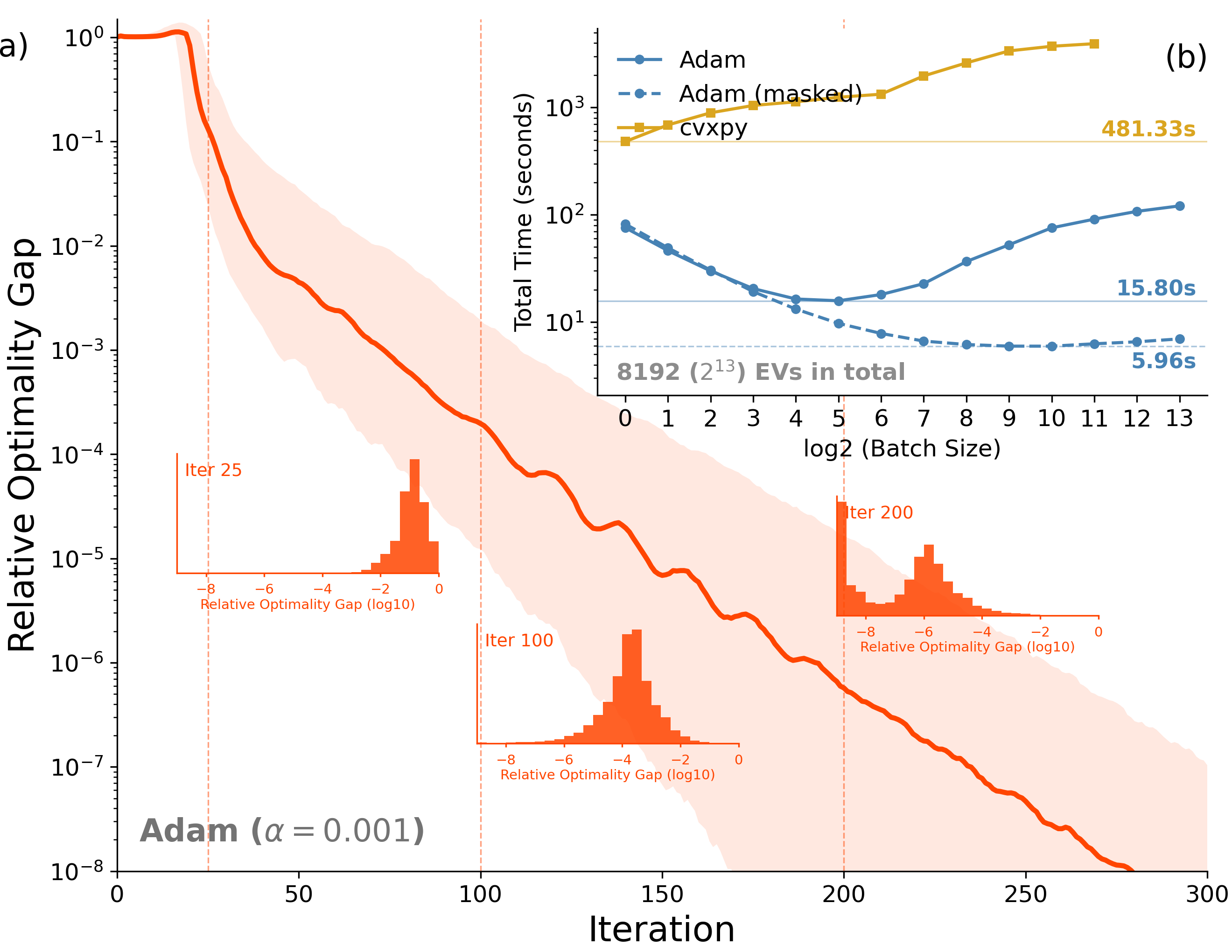}
    \caption{\textbf{Performance of custom (D1) solver}
    \scriptsize\quad
    \textbf{a.} \textit{main plot}: medians (solid line) and [10\%, 90\%] percentiles (shaded) of evaluated relative optimality gaps over Adam iterations.
    \textit{annotated histograms}:
    snapshots of relative duality gap distributions at 25$^\text{th}$, 100$^\text{th}$, and 200$^\text{th}$ iterations, respectively.
    \quad
    \textbf{b.}
    total solution time of 8192 instances vs varying batch sizes on one processing core.
    }
    \label{fig:solver_convergence}
\end{figure}

\subsection{Custom Subproblem Solvers Reduce Per-Iteration Solve Time}

We further evaluate the performance of our custom solver for subproblem \hyperref[eq:D1]{(D1)}. Since every IPR iteration invokes the best-response oracle once per EV, this oracle constitutes the primary computational bottleneck of the pipeline.
We randomly sample $8192$ EVs whose weekly energy demand exceeds $60$~kWh, and construct (D1) instances with $\kappa = 10^{-3}$. The price $\lambda$ is generated by the shortage-based rule \eqref{eq:lse_price} with $\beta=4$, applied to the aggregate load of a random feasible fleet profile scaled so that $30\%$ of the visited feeders are congested. EV $i$ then faces the price exposure $c_i = A_i^\top \lambda$ along its own itinerary.

\textbf{Accuracy}\quad
Figure~\ref{fig:solver_convergence}\textbf{a} reports the distribution of relative duality gaps (which are upper bounds of relative optimality gaps) across the sampled subproblem instances as a function of Adam iterations, with Adam configured as $(\eta,\beta_1,\beta_2)=(10^{-3},0.9,0.999)$. By iteration $25$, $3.8\%$ of instances achieve a relative duality (optimality) gap below $1\%$. By iteration $100$ and $200$, the proportions below $1\%$ gap increase to $97.8\%$ and $99.6\%$, while the proportions below $0.1\%$ gap reach $84.0\%$ and $98.7\%$, respectively.

\textbf{Speed}\quad
Figure~\ref{fig:solver_convergence}\textbf{b} compares the wall-clock time required to solve $8192$ instances on a single core using \texttt{CVXPY+OSQP} versus our Adam-based custom solver under different batch sizes $B$.

Already in \emph{serial} mode ($B=2^0$), the custom solver requires $75.8$~s against $481.3$~s for \texttt{CVXPY+OSQP}. As $B$ increases, \texttt{CVXPY}\footnote{For \texttt{CVXPY}, we sum the objectives and stack the constraints of a batch of instances to form a new problem.} slows down monotonically, and beyond $B=2^{11}$ ($3935$~s) a single batch exceeds a $1000$-s limit.
Our custom solver instead benefits from batching up to $B=2^5$ ($15.8$~s), before the computation becomes memory-bound at larger $B$.

Moreover, since the per-instance optimality gap is certified along the iterations, a \emph{masking} strategy excludes already-converged instances from subsequent iterations. This flattens the dependence on $B$ (the total time stays within $6$--$7$~s for all $B \ge 2^7$) and lowers the best total runtime to $5.96$~s, a $98.8\%$ reduction relative to \texttt{CVXPY+OSQP}.

\section{Real-world Case Studies}\label{sec:case-study}
\subsection{Data and Setups}

We conduct a case study in the San Francisco Bay Area, a national leader in EV adoption. Our primary data sources include grid hosting-capacity data from PG\&E \cite{pge_ica_map} and synthetic population and mobility data from Replica \cite{replica_population_2024q4,replica_trip_table_2024q4}.
Fig.~\ref{fig:data} provides an overview of the two primary data sources.
We evaluate the total max-violation across $1303$ distribution feeders induced by charging $1.84$ million EVs to satisfy their mobility demand over one week ($168$ hourly time steps). We solve \hyperref[eq:opt_SO]{(SO)} via IPR, which unveils the full potential of MAC, and compare the results against a variety of flexibility and coordination baselines.
\begin{figure}[h!]
    \centering
    \includegraphics[width=1.0\columnwidth]{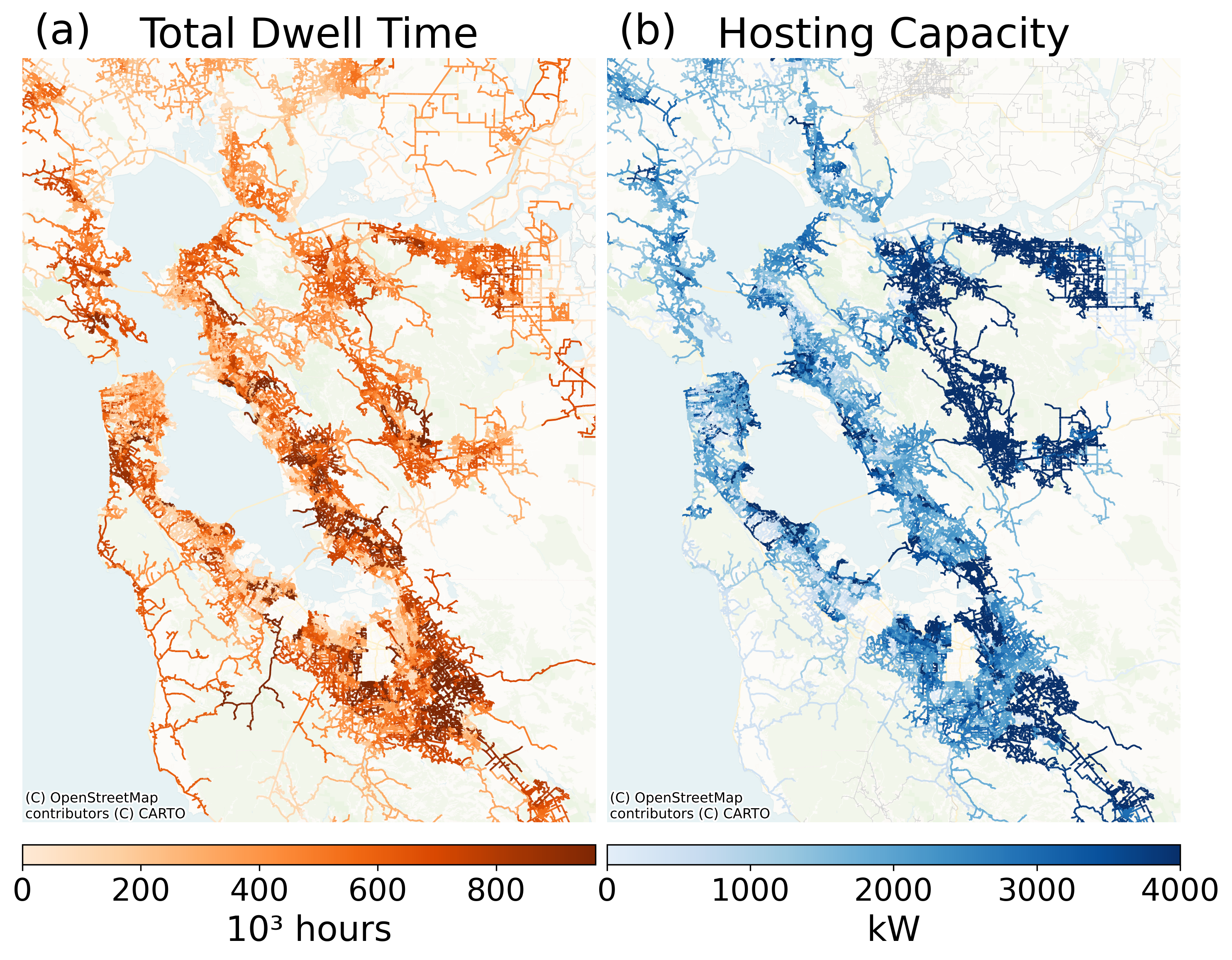}
    \caption{\textbf{Mobility and hosting capacity data}
    \scriptsize
    \quad We visualize aggregated metrics at the feeder level.
    \quad
    \textbf{a.} sum of dwell time (EV-hours) of all EVs over the one-week horizon at locations served by each distribution feeder.
    \quad
    \textbf{b.} time-averaged hosting capacity of distribution feeders.
    }
    \label{fig:data}
\end{figure}

\textbf{Grid hosting capacity}\quad
PG\&E publishes Integration Capacity Analysis (ICA), also referred to as \emph{hosting capacity}, for primary distribution line sections \cite{pge_ica_map}. ICA reports the maximum additional DER capacity that can be accommodated without violating operational limits, evaluated for a 24-hour profile on two representative days (high-load and low-load) under multiple violation criteria (e.g., voltage and thermal constraints). In our study, we use the 24-hour profile for a high-load day in December and consider both voltage and thermal constraints.
We adapt the method in \cite{navidi2023coordinating} to map line section-level hosting capacity to the feeder level. Specifically, each feeder's hosting capacity is computed as the population-weighted average of the hosting capacity of all line sections served by that feeder. We restrict attention to feeders within the boundaries of the nine Bay Area counties, resulting in a total of $1303$ feeders. For stays within the region, we assume access to a Level-2 charger at each stop, whose power can be continuously modulated between $0$ and its rated $10$~kW. For stays outside this region, we assume charging is unavailable.

\textbf{EV drivers and itineraries}\quad
Replica provides nationwide synthetic population \cite{replica_population_2024q4} and mobility \cite{replica_trip_table_2024q4} data by fusing multiple raw data sources using state-of-the-art behavioral and statistical modeling methods. We select individuals who either live or work in the Bay Area and sample EV drivers up to $30\%$ of the total driver population following a demographic-transition procedure that captures the gradual shift in adopter composition (see Appendix~\ref{appx:adopter_sampling}). Replica provides individual itineraries for two representative days (Thursday and Saturday). We extend these to continuous week-long itineraries using a cluster-based bootstrapping approach \cite{wang2026assessing}, summarized in Appendix~\ref{appx:traj_synth}.
After selection and preprocessing, our dataset contains $1.84 \times 10^6$ EV drivers with a total of $37.34 \times 10^6$ trips. Each EV is modeled with an $80$~kWh usable battery ($\approx\!267$-mile range at $0.3$~kWh/mile, representative of the best-selling models). The per-slot driving energy $D_{i,t}$ is derived from the traveled distance at the same efficiency.
Each stay location is matched to its nearest distribution line section, which determines the feeder onto which the charging demand is assigned.

\subsection{Baselines and Metrics}\label{sec:baselines}
To benchmark \textsf{MAC}, we compare it against a spectrum of baselines organized along two orthogonal axes. The first axis is how charging \emph{flexibility} is modeled. \emph{Session-based} policies fix the energy delivered in each charging session by a local rule, whereas \emph{mobility-aware} policies treat energy as fungible across the entire weekly itinerary, subject only to mobility feasibility. The second axis is the \emph{coordination} used, ranging from a fixed rule or a one-shot response to a price signal to a converged optimization of the system objective. This yields four rule-based policies and the two coordinated counterparts, summarized in Table~\ref{tab:baselines}. Formal definitions and the evaluation procedure are given in Appendix~\ref{appx:baselines}. The session-based rules are reported in both a front-loaded and a uniform (\textsf{u}-prefixed) intra-stay dispatch.

\textbf{\textsf{ASAP+}}\quad ``as-soon-as-possible'' rule: whenever an EV arrives at a stop with a state of charge below $50\%$, it charges at the maximum available power and stops once its battery is full or already holds just enough energy to complete all remaining trips and restore its terminal state of charge. Arriving at or above $50\%$, it charges only what the remaining itinerary strictly requires, which is nothing at most stops. It is the uncontrolled-immediate anchor of the spectrum and adapts the behavioral model of \cite{li2024impact}.

\textbf{\textsf{ASAN}}\quad ``as-soon-as-needed'' rule: at each stop it charges the least of a full battery, the energy needed to reach the next long stay (duration $\ge 3$~h) holding a $50\%$ reserve, and the energy its terminal target still requires. It captures a range-anxious driver who tops up for the upcoming leg rather than to full.

\textbf{\textsf{FLEX}}\quad pins each session's energy to a rule-based (\textsf{ASAP+} or \textsf{ASAN}) allocation and optimizes only the intra-stay \emph{timing} of that energy against the feeder objective. It is the globally optimized baseline for session-based flexibility.

\textbf{\textsf{minPeak}}\quad each EV \emph{individually} minimizes the peak of its own charging profile $\max_t p_{i,t}$ subject to mobility feasibility. Since every EV spreads its own demand as flat as its mobility allows, the policy acts as a greedy heuristic for shaving the aggregate peak while requiring no coordination among EVs.

\textbf{\textsf{PR}}\quad each EV best-responds \emph{once} to an endogenous congestion price inferred from the \textsf{ASAP+} overload. Unlike a feeder-invariant time-of-use price that only relocates the peak, \textsf{PR} prices where and when the grid is actually short.

\textbf{Evaluation metrics}\quad For each policy we aggregate the resulting feeder loads $l_{s,t}$ and report three violation metrics:
\begin{align*}
\mathrm{TV}_{\max}  &= \sum\nolimits_s \max\nolimits_t\, [l_{s,t}-C_{s,t}]^+, \\
\mathrm{TV}_{\mathrm{avg}} &= \sum\nolimits_s \tfrac{1}{T}\sum\nolimits_t\, [l_{s,t}-C_{s,t}]^+, \\
\#\mathrm{TV} &= \big|\{ s:\ \max\nolimits_t [l_{s,t}-C_{s,t}]^+ \ge 0.3\,\overline{C}_s \}\big|,
\end{align*}
with $\overline{C}_s=\frac{1}{T}\sum_t C_{s,t}$. The \emph{total max-violation} $\mathrm{TV}_{\max}$ (in MW) is the objective \eqref{eq:Jgrid} that \textsf{MAC} directly minimizes. The \emph{total average-violation} $\mathrm{TV}_{\mathrm{avg}}$ (in MW) weights sustained rather than peak overload, and is more representative of cumulative thermal aging. The \emph{overloaded-feeder count} $\#\mathrm{TV}$ counts feeders whose peak overload is substantial relative to their own capacity. These metrics are planning-level quantities. Each sums over feeders an exceedance of the hourly-averaged feeder load, so they measure sustained overload over the week rather than instantaneous power at any single location or time. All metrics are evaluated at the hourly resolution of the model. Appendix~\ref{appx:subhourly} quantifies the sub-hourly effect by re-placing each policy's solution on grids down to $5$~min using the minute-resolved stay windows, and shows that the ordering in Table~\ref{tab:baselines} is unchanged at every resolution.

\providecommand{\tbd}{\textcolor{red}{tbd}}
\begin{table}[htbp]
\centering
\caption{\textbf{Baseline comparison across the coordination spectrum}}
\label{tab:baselines}
\resizebox{\columnwidth}{!}{%
\begin{tabular}{@{}lrrr@{\hskip 26pt}lrrr@{}}
\toprule
\multicolumn{4}{c}{\textbf{Rule-based / heuristic}} & \multicolumn{4}{c}{\textbf{Coordinated optimization}}\\
\cmidrule(r){1-4}\cmidrule(l){5-8}
 & $\mathrm{TV}_{\max}$ & $\mathrm{TV}_{\mathrm{avg}}$ & $\#\mathrm{TV}$ & & $\mathrm{TV}_{\max}$ & $\mathrm{TV}_{\mathrm{avg}}$ & $\#\mathrm{TV}$\\
\midrule
\multicolumn{8}{@{}l}{\textit{Session-based}}\\
\textsf{ASAP+}  & 12,439.6 & 246.8 & 1062 & \multirow{2}{*}{\textsf{FLEX}} & \multirow{2}{*}{1,792.7} & \multirow{2}{*}{136.3} & \multirow{2}{*}{633}\\
\textsf{uASAP+} & 3,590.4 & 168.7 & 801 &                                &                        &                        & \\
\textsf{ASAN}   & 12,441.3 & 185.0 & 1063 & \multirow{2}{*}{\textsf{FLEX}} & \multirow{2}{*}{744.3} & \multirow{2}{*}{76.8} & \multirow{2}{*}{433}\\
\textsf{uASAN}  & 1,910.2 & 101.6 & 666 &                                &                        &                        & \\
\midrule
\multicolumn{8}{@{}l}{\textit{Mobility-aware}}\\
\textsf{minPeak} & 79.4 & 31.2 & 137 & \multirow{2}{*}{\textsf{MAC}} & \multirow{2}{*}{8.1} & \multirow{2}{*}{4.5} & \multirow{2}{*}{59}\\
\textsf{PR}      & 102.6 & 11.7 & 142 &                               &                        &                        & \\
\bottomrule
\end{tabular}%
}
\end{table}

\subsection{Findings and Implications}
Overall, \textsf{MAC} reduces feeder overloading across the Bay Area by orders of magnitude relative to the unmanaged \textsf{ASAP+} charging mode. Under \textsf{MAC}, the realized system-wide total max-violation is $8.1$ MW (the IPR deployed response, certified within $2.9\%$ of the optimum), whereas \textsf{ASAP+} results in $12,440$ MW. This gap indicates that leveraging mobility-aware flexibility through full charging coordination can nearly eliminate overload-driven upgrade needs at the regional scale, which would otherwise translate to roughly $5.0$ billion USD in upgrade costs, based on the median feeder upgrade cost estimates (USD/kW) disclosed in PG\&E DDOR \cite{elmallah2022can}. In short, coordination converts what appears to be a system-wide capacity crisis under unmanaged charging into a largely manageable operational problem.

\begin{figure}[h!]
    \centering
    \includegraphics[width=\columnwidth]{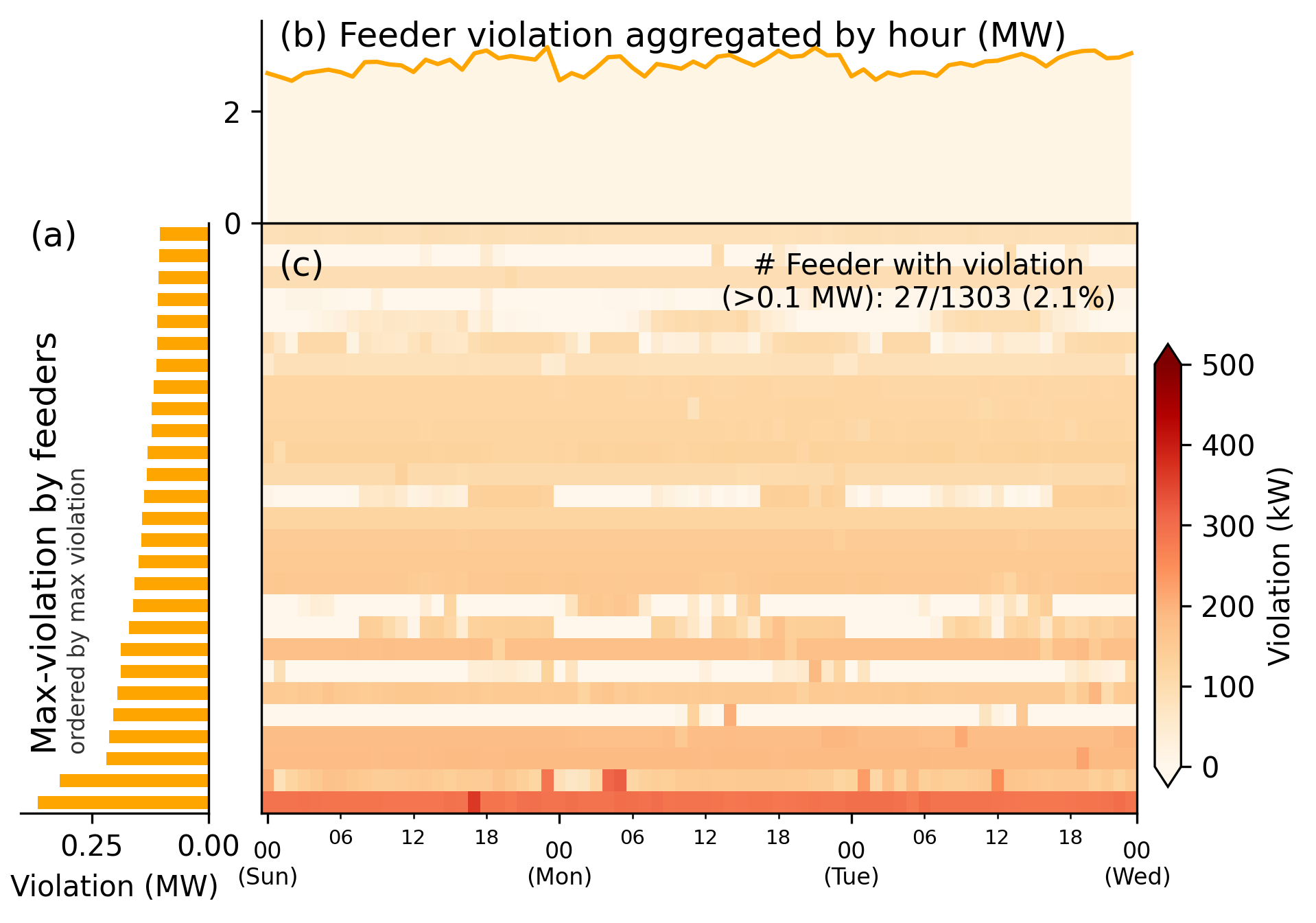}
    \caption{\textbf{Violation analysis under charging coordination}
    \scriptsize
    \quad \textbf{a.}
    max-violation (i.e., upgrade need) of individual feeders.
    \quad \textbf{b.}
    total overloaded demand in three consecutive days.
    \quad \textbf{c.}
    Heatmap of feeder-level overloading. Only feeders with $>0.1$ MW violation are listed. 
    }
    \label{fig:line_violation}
\end{figure}

\begin{figure*}[htbp]
    \centering
    \includegraphics[width=\textwidth]{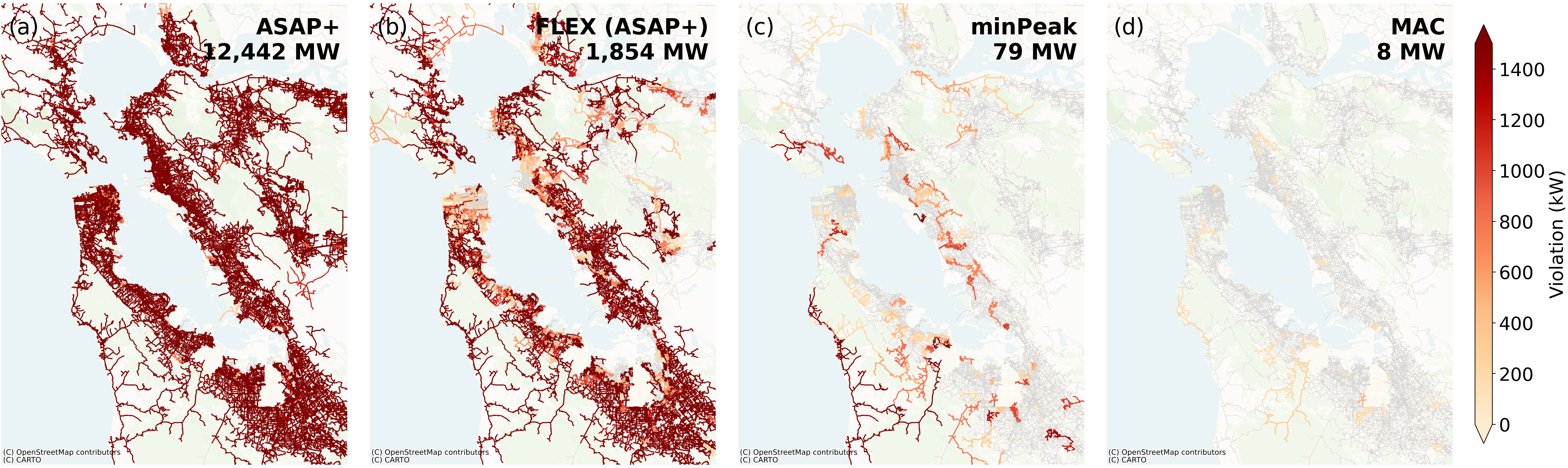}
    \caption{\textbf{Geospatial mapping of feeder violation}
    \scriptsize
    \quad
    Colors on feeder lines indicate the magnitude of realized max-violation under \textbf{a.} ASAP+, \textbf{b.} FLEX (ASAP+), \textbf{c.} minPeak, and \textbf{d.} MAC, following the flexibility and coordination spectrum of Fig.~\ref{fig:G-H}\textbf{c}. Each panel annotates the policy's total max-violation.
    }
    \label{fig:violation_map}
\end{figure*}

The reduction is not only system-wide but also broadly distributed across feeders, as shown in Figure~\ref{fig:line_violation}. Using a max-violation threshold of $0.1$ MW, $83\%$ of feeders ($1076/1303$) would require upgrades under \textsf{ASAP+}, whereas this fraction drops to only $2.1\%$ ($27/1303$) under \textsf{MAC}. By the substantial-overload count $\#\mathrm{TV}$ of Table~\ref{tab:baselines}, the drop is from $1062$ to $59$ feeders. Figure~\ref{fig:violation_map} visualizes the overloaded feeders in each scenario, where color intensity reflects the magnitude of the inferred capacity upgrade requirement. Importantly, the remaining violations under \textsf{MAC} are sparse and localized, suggesting that the residual upgrade needs would likely be targeted rather than region-wide.

The two axes of Table~\ref{tab:baselines} decompose this reduction into the benefit of flexibility and that of coordination, traced by the four panels of Fig.~\ref{fig:violation_map}.
Along the coordination axis, fully optimizing the timing of session-based charging (\textsf{FLEX}) cuts the \textsf{ASAP+} violation from $12{,}440$~MW to $1{,}793$~MW, a $6.9\times$ reduction that still leaves a region-wide upgrade burden.
Along the flexibility axis, \textsf{minPeak} lets each EV spread its own demand across its weekly itinerary and reaches $79$~MW \emph{without any coordination among EVs}, outperforming the fully coordinated session-based \textsf{FLEX} by more than an order of magnitude.
Mobility-aware flexibility is therefore the larger lever, as it removes most of the overload that even perfectly timed session-based charging cannot.
Coordination completes the remaining step. By pricing where and when feeders are actually short, \textsf{MAC} brings the violation down to $8.1$~MW, another tenfold reduction.
The gap between \textsf{FLEX} and \textsf{MAC} isolates the contribution of mobility-aware itinerary coupling, and the gap between \textsf{minPeak} and \textsf{MAC} isolates the value of coordination given full flexibility.




\section{Conclusions and Future Works}\label{sec:conclusion}

The paper proposes a novel charging coordination paradigm, termed ``mobility-aware'' coordination, that couples charging decisions across multiple sessions over a week-long mobility horizon, unlocking substantially greater demand flexibility than conventional session-based approaches. It also derives, implements, and validates numerical optimization schemes, converting a coupled problem to one that can leverage the power of modern large-scale distributed computing systems. Key innovations include a certified price-response iteration with convergence guarantees, and custom subproblem solvers that significantly enhance the efficiency over standard distributed optimization pipelines.

The work characterizes the upper bound of grid benefits leveraging full coordination, which answers a critical question and provides insights to a range of stakeholders. Full-scale experiments of 1.84 million EV drivers across over 1300 distribution feeders demonstrate the great potential of eliminating overload-driven upgrade needs at the regional scale by the coordinated scheme.
For distribution utilities, the results offer a certified estimate of the upgrade investment that coordination can defer, and a locational price signal that can inform the design of managed-charging programs for EV drivers or aggregators.
For aggregators, the price-response pipeline and the custom solvers are directly reusable to schedule the EVs under their management.
In principle, the coordination can be implemented by posting the converged feeder-level price, computed ahead of time from forecast itineraries, under which each EV's own best response reproduces the coordinated schedule with no further exchange (Thm.~\ref{thm:SO_CE-2}).
Realizing it in practice raises questions beyond the present scope, including communication with chargers, the level of driver participation, and the robustness of the price to forecast errors and imperfect communication.

The framework and results also set up a baseline that further studies can refine in three directions: (1) incorporate more operational constraints that help characterize the demand flexibility potential in more realistic settings, as the present model assumes a charging opportunity at every dwell location and a static hosting capacity per feeder, and relaxing either assumption, by treating chargers as a second congestible resource or by imposing network power-flow constraints that couple feeders, requires new methodology; (2) propose more effective algorithms (may be approximate) and quantify its optimality gap by comparing with our results; (3) model driver behavior beyond the cooperative, price-responsive drivers assumed here, since frictions such as plug-in effort at every stop, walking distance to public chargers, and uncertainty in departure time and energy may affect compliance even though the coordination asks no driver to alter a trip or forgo energy.
We also encourage researchers to delve into the market mechanism behind MAC, e.g., how the avoided upgrade cost is shared with participating drivers, and how incomplete and/or uncertain mobility information is handled.


\renewcommand*{\bibfont}{\footnotesize} 
\printbibliography[title=References]

\bigskip
{\small
\noindent
\textbf{Declaration of LLM usage}\qquad
The authors used LLMs (primarily GPT 5 series, and Claude Opus 4.6) for enhancing the readability of the manuscript. The authors reviewed and edited the content as needed and take full responsibility for the content.
}

\newrefsection
\newpage

\appendices
\section{Proofs for the Iterated Price Response Method}\label{appx:ipr}

\subsection{Proof of Lemma~\ref{lem:Q_smooth}}
We first record an identity that converts per-EV sums into feeder-level weighted norms.
\begin{remark}\label{remark:sqrtN}
For any $M \in \R^{S\times T}$,
\begin{align}
    \sum\nolimits_i \norm{A_i^\top M}_2^2
     & = \sum_{i,t}\, M_{s,t}^2 \Big|_{s=\Gamma_{i,t}}
     = \sum_{s,t}\,  \sum_{i\in \I_{s,t}}\,  M_{s,t}^2\notag\\
     & = \sum_{s,t}\,  N_{s,t} \, M_{s,t}^2
     = \norm{M \odot \sqrt{N}}_F^2
\end{align}
where $\odot$ denotes the element-wise product between two matrices of the same dimensions, and $\sqrt{N}\in \R^{S\times T}\coloneqq (\sqrt{N_{s,t}})_{s\in \Sset, t \in \Tset}$.
\end{remark}

\begin{proof}[Proof of Lemma~\ref{lem:Q_smooth}]
Concavity is immediate, since by \eqref{eq:SO_Lagrangian} $Q$ is an infimum of functions affine in $\lambda$.
Since $\kappa>0$, the (D1) objective is $\kappa$-strongly convex, so its minimizer $p_i(\lambda)$ over the nonempty closed convex set $\X_i$ is unique. Differentiability and the gradient expression \eqref{eq:gradQ} then follow from Danskin's theorem \cite[Prop.~B.25]{bertsekas1999nonlinear}.

For the Lipschitz constant, fix $\lambda, \lambda^\prime \in \Lambda$ and let $\Delta p_i \coloneqq p_i(\lambda) - p_i(\lambda^\prime)$, $\Delta\lambda \coloneqq \lambda - \lambda^\prime$.
The first-order optimality conditions of the two best responses read
\begin{align*}
&\langle \kappa\, p_i(\lambda) + A_i^\top \lambda,~ p_i(\lambda^\prime) - p_i(\lambda)\rangle \ge 0\\
&\langle \kappa\, p_i(\lambda^\prime) + A_i^\top \lambda^\prime,~ p_i(\lambda) - p_i(\lambda^\prime)\rangle \ge 0
\end{align*}
Adding the two inequalities yields 
$$\kappa \norm{\Delta p_i}_2^2 \le \langle A_i^\top \Delta\lambda,\, -\Delta p_i\rangle \le \norm{A_i^\top \Delta\lambda}_2 \norm{\Delta p_i}_2,$$ 
hence
\begin{equation}\label{eq:BR_lipschitz}
    \norm{\Delta p_i}_2 \le \frac{1}{\kappa} \norm{A_i^\top \Delta\lambda}_2
\end{equation}
Since $\big(\nabla Q(\lambda) - \nabla Q(\lambda^\prime)\big)_{s,t} = \sum_{i \in \I_{s,t}} \Delta p_{i,t}$ by \eqref{eq:gradQ}, the Cauchy--Schwarz inequality gives
\begin{align}
&\norm{\nabla Q(\lambda) - \nabla Q(\lambda^\prime)}_F^2\notag\\
&= \sum_{s,t} \Big(\sum\nolimits_{i\in\I_{s,t}} \Delta p_{i,t}\Big)^2
\le \sum_{s,t} N_{s,t} \sum\nolimits_{i\in\I_{s,t}} \Delta p_{i,t}^2\notag\\
&\le N_{\max} \sum\nolimits_i \norm{\Delta p_i}_2^2
\le \frac{N_{\max}}{\kappa^2} \sum\nolimits_i \norm{A_i^\top \Delta\lambda}_2^2\notag\\
&= \frac{N_{\max}}{\kappa^2} \norm{\Delta\lambda \odot \sqrt{N}}_F^2
\le \frac{N_{\max}^2}{\kappa^2} \norm{\Delta\lambda}_F^2
\end{align}
where the third inequality uses \eqref{eq:BR_lipschitz} and the following equality uses Remark~\ref{remark:sqrtN}.
Taking square roots completes the proof.
\end{proof}

\subsection{Proof of Theorem~\ref{thm:ipr_conv}}
\begin{proof}
\emph{Part (a).}\quad
By Lemma~\ref{lem:Q_smooth}, $-Q$ is convex with an $L_Q$-Lipschitz gradient on the closed convex set $\Lambda$, so \eqref{eq:ipr_generic} with step $1/L_Q$ is the classical projected gradient method, for which the value bound \eqref{eq:ipr_conv_dual} is standard (see, e.g., \cite[Ch.~10]{beck2017first}).
The same $\mathcal{O}(1/k)$ rate is retained by ascent variants whose accepted iterates satisfy the sufficient-increase condition enforced by backtracking. For the entropic mirror ascent of Sec.~\ref{sec:ipr_entropic}, the Euclidean quantity $\norm{\lambda^{(0)}-\lambda^\star}_F^2$ is replaced by the corresponding Bregman (Kullback--Leibler) diameter of $\Lambda$, which is $\mathcal{O}(S \log T)$ (see \cite[Ch.~9]{beck2017first}).

\emph{Part (b).}\quad
Fix $\lambda \in \Lambda$ and split the Lagrangian \eqref{eq:SO_Lagrangian} as
\begin{equation*}
\mathcal{L}(p, l, \lambda) = \sum\nolimits_i \left[\Jev_i(p_i) + \langle A_i^\top\lambda, p_i\rangle\right] + \left[\Jgrid(l) - \langle\lambda, l\rangle\right]
\end{equation*}
For each $i$, the bracketed EV term is $\kappa$-strongly convex in $p_i$ and minimized over $\X_i$ at the best response $p_i(\lambda)$, with optimal value $\xi_i(\lambda)$. Strong convexity gives, for any $p_i \in \X_i$,
\begin{equation*}
\Jev_i(p_i) + \langle A_i^\top\lambda, p_i\rangle \ge \xi_i(\lambda) + \frac{\kappa}{2}\norm{p_i - p_i(\lambda)}_2^2
\end{equation*}
The bracketed grid term is bounded below by its minimum $-\langle\lambda, C\rangle$ (Lemma~\ref{lem:D2}).
Summing over $i$ and using \eqref{eq:Q_on_Lambda}, for any $p \in \Pi_i \X_i$ and any $l$:
\begin{equation}\label{eq:L_lower}
\mathcal{L}(p, l, \lambda) \ge Q(\lambda) + \frac{\kappa}{2}\sum\nolimits_i \norm{p_i - p_i(\lambda)}_2^2
\end{equation}
Now evaluate the left side at the primal optimum $(p^\star, l^\star)$ of \hyperref[eq:opt_SO]{(SO)}. Since $l^\star = \sum_i A_i p_i^\star$, the coupling term of \eqref{eq:SO_Lagrangian} vanishes, so
\begin{equation*}
\mathcal{L}(p^\star, l^\star, \lambda) = J^\star = Q(\lambda^\star)
\end{equation*}
where the last equality is strong duality, which holds by the refined Slater condition (cf.\ the Remark in Sec.~\ref{sec:mac}).
Substituting into \eqref{eq:L_lower} and rearranging yields \eqref{eq:ipr_conv_primal}. Combining with part (a) gives the $\mathcal{O}(1/\sqrt{k})$ rate for the deployed best response.
\end{proof}

\subsection{Ablation of the Price-Update Geometry}\label{appx:ipr_ablation}

Fig.~\ref{fig:ipr_ablation} compares three price updates that share IPR's shortage seed, best-response oracle, and backtracking line search, so that only the update geometry differs: the entropic update \eqref{eq:polish} (\textsf{IPR}); Euclidean projected gradient ascent \eqref{eq:ipr_generic} with the same per-feeder gradient normalization (\textsf{vanilla}); and the Euclidean update with a single global scale fixed at the first iteration (\textsf{vanilla, unscaled}).
They are compared on the two quantities the method reports, the deployed violation and the running certified relative duality gap \eqref{eq:SO_rel_gap}.

With the per-feeder normalization kept, Euclidean ascent transiently amplifies the deployed violation past $150$~MW, since additive mass concentrated on peak slots herds the responding fleet into neighboring slots, and its line search stalls after $51$ iterations.
The best deployed violation along the way, $7.9$~MW, is within $4\%$ of IPR, but the dual value stalls $23\%$ below the entropic optimum, leaving a certified gap of $28\%$. The multiplicative geometry is what makes the ascent \emph{certify} the solution it finds.
Removing the normalization as well is far more damaging.
One global step size cannot serve feeders whose shortages span orders of magnitude. The ascent oscillates for the full $200$-iteration budget and only reaches $11.2$~MW with a certified gap of $82\%$.

\begin{figure}[htbp]
    \centering
    \includegraphics[width=\columnwidth]{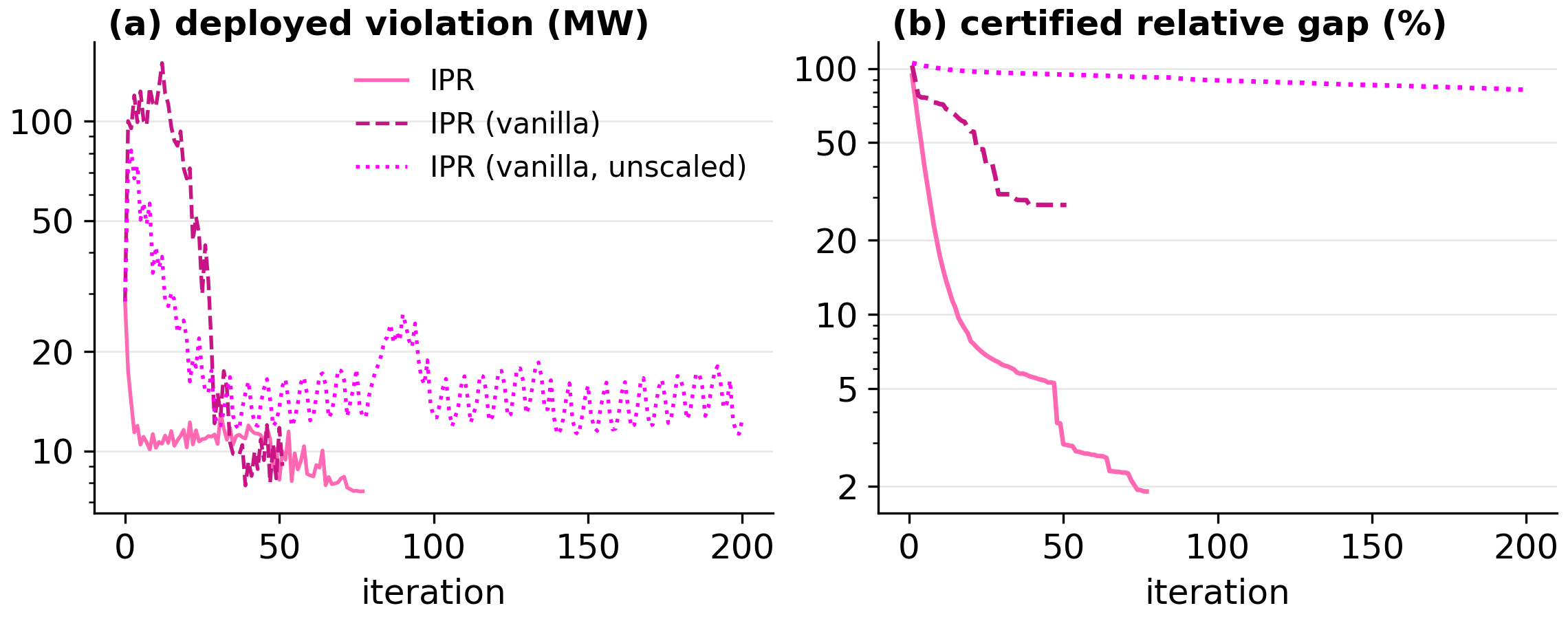}
    \caption{\textbf{IPR price-update ablation}
    \scriptsize\quad
    \textbf{a.} deployed total max-violation (raw iterates, with transients kept visible) and \textbf{b.} running certified relative duality gap \eqref{eq:SO_rel_gap}, for the entropic update and the two Euclidean variants. All three share the shortage seed, the (D1) best-response oracle, and the backtracking line search.
    }
    \label{fig:ipr_ablation}
\end{figure}

\section{Supplementary Notes on ADMM}\label{appx:admm}

\subsection{Derivation of the Reduced Updates \eqref{eq:admm_steps}}\label{appx:admm_deriv}
The (unscaled) augmented Lagrangian of \hyperref[eq:ADMM_0]{(P)} is:
\begin{align}\label{eq:Lag_unscale}
    \widetilde{L}_{\rho}(p, z, \lambda) \coloneqq
    & \sum\nolimits_{i}\, f_i(p_i) + g(\sum\nolimits_i z_i) \notag\\
    & \quad + \sum\nolimits_i \left(\langle\lambda_i, r_i\rangle_F + \frac{\rho}{2} \norm{r_i}_F^2\right)
\end{align}
whose scaled form is \eqref{eq:Lag_scale} in the main text.
The generic ADMM steps are: (S1) $p^{(k+1)} \gets \arg\min_{p}\, L_\rho(p, z^{(k)}, \mu^{(k)})$; (S2) $z^{(k+1)} \gets \arg\min_{z}\, L_\rho(p^{(k+1)}, z, \mu^{(k)})$; (S3) $\mu_i^{(k+1)} \gets \mu_i^{(k)} + r_i^{(k+1)}$, $\forall i$.
By plugging in the expression of the scaled Lagrangian \eqref{eq:Lag_scale} and omitting constants w.r.t. optimization variables, (S1) and (S2) solve the following concrete optimization problems, respectively:
\begin{align}
    \text{(S1)}~~& \min_{p_i}~  \frac{\rho}{2}\norm{A_i p_i - b_i^{(k)}}_F^2 + f_i(p_i), ~ \forall i\label{eq:step_1_full}\\
    \text{(S2)}~~& \min_{z}~ g\left(\sum\nolimits_i z_i\right) + \frac{\rho}{2} \sum\nolimits_i\norm{z_i - d_i^{(k)}}_F^2\label{eq:step_2_full}
\end{align}
where $b_i^{(k)}\coloneqq z^{(k)}_i - \mu_i^{(k)}$ and $d_i^{(k)} \coloneqq A_i p_i^{(k+1)} + \mu_i^{(k)}$.
It is obvious that we can optimize \eqref{eq:step_1_full} across different EV $i$ in parallel.
However, \eqref{eq:step_2_full} is currently not separable for $z_i$ over different $i$. Optimization of the problem as a whole is intractable for variable $z$ with dimension of $I \times ST$ (on the order of billions).

The optimization problem \hyperref[eq:ADMM_0]{(P)} can be classified into a generic class of problems known as \emph{sharing} \cite[Sec. 7.3]{boyd2011distributed}, for which highly efficient distributed updates can be derived.
The key idea is to reformulate \eqref{eq:step_2_full} as:
\begin{align}
    \min_{\overline{z}, z}~ & g(I \overline{z}) + \frac{\rho}{2} \sum\nolimits_i\norm{z_i - d_i^{(k)}}_F^2 \label{eq:step_2_zbar}\\
    \text{s.t.}~ & \sum\nolimits_i z_i = I \overline{z} \notag
\end{align}
and note that for any fixed $\overline{z} \in \R^{S \times T}$, the minimizer of
$$
    \min_{z}~ \frac{\rho}{2} \sum\nolimits_i\norm{z_i - d_i}_F^2\qquad
    \text{s.t.}~ \sum\nolimits_i z_i = I \overline{z}
$$
has an analytical solution $
    z^\star_i = \overline{z} + d_i - \overline{d}
$
where $\overline{d} = \frac{1}{I}\sum_i d_i$.
By plugging $z^\star_i$ into \eqref{eq:step_2_zbar}, \eqref{eq:step_2_full} has been reduced to solving for variable $\overline{z} \in \R^{S \times T}$:
\begin{equation}\label{eq:step_2_zbar_only}
    \overline{z}^{(k+1)} = \arg\min_{\overline{z}}~ g(I \overline{z}) + \frac{\rho I}{2} \norm{\overline{z} - \overline{d}^{(k)}}_F^2
\end{equation}

Consequently, by substituting $z_i^{(k+1)}$ into the (S3) multiplier update, we get:
\begin{align}
\mu_i^{(k+1)} & = \mu_i^{(k)} + \left(A_i p_i^{(k+1)} - z_i^{(k+1)}\right)\\
& = \left(\mu_i^{(k)} + A_i p_i^{(k+1)}\right) - z_i^{(k+1)}
= \overline{d}^{(k)} - \overline{z}^{(k+1)}\notag
\end{align}
which is a constant for all different $i$ (known as \emph{consensus}), i.e.,
\begin{equation}
\mu_i^{(k+1)} \equiv \overline{\mu}^{(k+1)} = \overline{\mu}^{(k)} + \frac{1}{I}\sum\nolimits_i r_i^{(k+1)}
\end{equation}
Hereafter, we simply denote $\overline{\mu}$ as $\mu$.
Finally, substituting $l = I\overline{z}$ and $D^{(k)} = I\overline{d}^{(k)} = \hat{l}^{(k+1)} + I\mu^{(k)}$ into \eqref{eq:step_2_zbar_only} yields the (S2) update of \eqref{eq:admm_steps}. The reduced (S1) and (S3) follow directly.
This completes the derivation.

\subsection{Proof of Theorem~\ref{thm:admm_conv}}
We verify that the problem \hyperref[eq:ADMM_0]{(P)} satisfies the two assumptions in \cite[Sec.3.2]{boyd2011distributed}:
\begin{itemize}
    \item[(A1)] Functions $f_i$'s and $g$ are  closed, proper, and convex.
    \item[(A2)] Strong duality holds for \hyperref[eq:ADMM_0]{(P)}, and dual optimal point is attained.
\end{itemize}

For (A1), note that $g(l) = \Jgrid(l)$ is a finite sum of pointwise maximum of convex functions, and clearly $\Jgrid(l) < +\infty$ for any $l \in \R^{S\times T}$, so function $g$ is closed, proper, and convex.
$f_i(p_i) = \Ind_{\X_i}(p_i) + \Jev_i(p_i)$, where $\Jev_i(p_i)$ is quadratic. Suppose $\X_i \neq \emptyset$, then $\Ind_{\X_i}(p_i)$ is proper, hence function $f_i$ is closed, proper, and convex.

For (A2), suppose $\X_i \neq \emptyset$ for all $i \in \Iset$, since $X_i$'s are closed convex sets, then $\text{ri}(\X_i) \neq \emptyset$ for all $i \in \Iset$, where $\text{ri}(\cdot)$ denotes the relative interior of a set,
and take any feasible $p \in \Pi_i \X_i$, let $l = \sum_i A_i p_i$, $l \in \text{ri}(\text{dom} g) = \R^{S\times T}$. Therefore, it satisfies the Slater's condition, so strong duality holds and dual optimal point is attained.
Note that these conditions hold for any $\kappa \ge 0$, including $\kappa = 0$.

\subsection{Anchored Dual Momentum and Warm Start}\label{appx:accel}
The accelerated ADMM pipeline evaluated in Sec.~\ref{sec:solver_analysis} (Fig.~\ref{fig:loss_curve}) augments the baseline \eqref{eq:admm_steps} with two dual-side modifications built from the shortage-to-price map \eqref{eq:lse_price}. Neither changes the objective or the subproblem solvers.
After each (S3) multiplier update, the scaled dual is blended toward the current iterate's shortage price,
\begin{equation}\label{eq:anchor}
    \mu^{(k+1)} \gets (1-a_k)\, \mu^{(k+1)} + a_k\, \widehat{\lambda}(\hat{l}^{(k+1)}) / \rho,
    \quad a_k = \frac{a}{1+k},
\end{equation}
with $a = 1/2$.
At $k=0$ this acts as a \emph{warm start}, since it initializes the dual at the natural magnitude of the congestion price, whereas the multiplier update \eqref{eq:step_3} climbs from $\mu = 0$ in $\mathcal{O}(1/I)$ steps and spends hundreds of iterations recovering that magnitude.
For $k \ge 1$ it acts as \emph{momentum}. The multiplier update integrates past residuals and adapts slowly to the moving shortage, while the anchor adds a proportional term that tracks it.
The Halpern-style annealing $a_k \to 0$ makes the perturbation vanish over iterations, so the anchored scheme does not alter the limit points of the unmodified ADMM.
Empirically, the anchor accelerates the centralized allocation by roughly an order of magnitude in iteration count (Fig.~\ref{fig:loss_curve}a), but it biases the price iterate itself (Fig.~\ref{fig:loss_curve}b). Obtaining a deployable price from the anchored run then requires a post-hoc IPR ascent (Sec.~\ref{sec:ipr_entropic}) from the ADMM price.
These modifications are retained as an ablation reference for the comparison in Sec.~\ref{sec:solver_analysis}.

\section{Proof of Theorem~\ref{thm:feas_oracle}}
\label{appx:feas_oracle}
\begin{proof}
We show both directions hold.

Let $S_{i,t} = (H p_i)_t = \sum_{\tau=1}^t p_{i,\tau}$.
\begin{enumerate}
\item $\X_i \neq \emptyset \Longrightarrow \underline{B}_i \preceq \overline{B}_i$:
Suppose $\underline{B}_i \npreceq \overline{B}_i$, then it is obvious that there exists no $p_i \in \R^T$ such that $\underline{B}_i \preceq Hp_i \preceq \overline{B}_i$.
Hence, for any $p_i \in \R^T$ there exists $t \in \Tset$ such that either $\underline{B}_{i,t} > S_{i,t}$, or $\overline{B}_{i,t} < S_{i,t}$.

Assume for some $t$, $\underline{B}_{i,t} > S_{i,t}$, and let $t^\prime =\arg\max_{\tau \ge t} \,\{\underline{S}_{i,\tau} - \overline{C}_{i,\tau}\}$. Then,
\begin{align}
\underline{B}_{i,t^\prime}
& = \overline{C}_{i, t^\prime} + \max_{\tau \ge t^\prime} \,\{\underline{S}_{i,\tau} - \overline{C}_{i,\tau}\}\\
& = (\overline{C}_{i, t} + \sum\nolimits_{\tau=t+1}^{t^\prime} \overline{P}_{i,\tau}) + \max_{\tau \ge t} \,\{\underline{S}_{i,\tau} - \overline{C}_{i,\tau}\}\notag\\
& = \underline{B}_{i,t} + \sum\nolimits_{\tau=t+1}^{t^\prime} \overline{P}_{i,\tau}\\
& > S_{i,t} + \sum\nolimits_{\tau=t+1}^{t^\prime} p_{i,\tau}
 = S_{i, t^\prime}\notag
\end{align}
On the other hand,
\begin{align}
\underline{B}_{i,t^\prime} & = \overline{C}_{i, t^\prime} + \max_{\tau \ge t^\prime} \,\{\underline{S}_{i,\tau} - \overline{C}_{i,\tau}\}\notag\\
& = \overline{C}_{i, t^\prime} + (\underline{S}_{i,t^\prime} - \overline{C}_{i,t^\prime}) = \underline{S}_{i,t^\prime}
\end{align}
Hence, $S_{i, t^\prime} < \underline{S}_{i,t^\prime}$, which indicates $p_i \notin \X_i$.

Similarly, if for some $t$, $\overline{B}_{i,t} < S_{i,t}$, then there exists $t^\prime \ge t$ such that $S_{i, t^\prime} > \overline{S}_{i,t^\prime}$, which indicates $p_i \notin \X_i$.

\item $\X_i \neq \emptyset \Longleftarrow \underline{B}_i \preceq \overline{B}_i$:
For any $p_i \in \R^T$, 
if $\underline{P}_i \preceq p_i \preceq \overline{P}_i$ and $\underline{B}_i \preceq Hp_i \preceq \overline{B}_i$, note that for any $t \in \Tset$,
\begin{subequations}
\begin{align}
\underline{B}_{i, t} &= \overline{C}_{i,t} + \max_{\tau \ge t^\prime} \,\{\underline{S}_{i,\tau} - \overline{C}_{i,\tau}\}\notag\\
& \ge \overline{C}_{i,t} + (\underline{S}_{i,t} - \overline{C}_{i,t}) = \underline{S}_{i,t}\\
\overline{B}_{i, t} &= \underline{C}_{i,t} + \min_{\tau \ge t^\prime} \,\{\overline{S}_{i,\tau} - \underline{C}_{i,\tau}\}\notag\\
& \le \underline{C}_{i,t} + (\overline{S}_{i,t} - \underline{C}_{i,t}) = \overline{S}_{i,t}
\end{align}
\end{subequations}
therefore, $p_i \in \X_i$. To complete the proof, we need to show such $p_i$ exists. We claim that if $\underline{B}_i \preceq \overline{B}_i$, Alg.\ref{alg:feas_oracle} always returns a feasible solution $p_i$.
Note that the meaning of $S_{i,t}$ in Alg.\ref{alg:feas_oracle} is consistent with its definition here, and the $\textsf{clip}$ operator in line 8 ensures that
$$
\underline{B}_{i,t} \le \underline{\beta}_{i,t} \le S_{i,t} \le \overline{\beta}_{i,t} \le \overline{B}_{i,t},~~\forall t \in \Tset
$$
We need to further verify $\forall t \in \Tset$: 
(i) $\underline{\beta}_{i,t} \le \overline{\beta}_{i,t}$, so that $\textsf{clip}$ in line 8 is valid, and (ii) $\underline{P}_{i,t} \le p_{i,t} \le \overline{P}_{i,t}$, so that $p_i$ is indeed feasible.

For (i), when $t=1$, it is easy to verify with the fact that $\underline{G}_{i,t} \le 0 \le \overline{G}_{i,t}$ using \eqref{eq:feas_necc}. When $t>1$,
\begin{align}
&S_{i,t-1} + \underline{P}_{i,t} \le \overline{\beta}_{i,t-1} + \underline{P}_{i,t}\notag\\
& \quad \le \overline{B}_{i,t-1} + \underline{P}_{i,t}
= \underline{C}_{i,t-1} + \overline{G}_{i,t-1} + \underline{P}_{i,t}\notag\\
& \quad = \underline{C}_{i,t} + \overline{G}_{i,t-1}
\le \underline{C}_{i,t} + \overline{G}_{i,t} = \overline{B}_{i,t}
\end{align}
Similarly, we can show that $S_{i,t-1} + \overline{P}_{i,t} \ge \underline{B}_{i,t}$. Together, this proves $\underline{\beta}_{i,t} \le \overline{\beta}_{i,t}$.

For (ii), it is straightforward by noticing that
\begin{align}
p_{i,t} = S_{i,t} - S_{i,t-1} \ge \underline{\beta}_{i,t} - S_{i,t-1} \ge \underline{P}_{i,t}
\end{align}
and similarly, $p_{i,t} \le \overline{P}_{i,t}$.
\end{enumerate}

The time complexity of Alg.\ref{alg:feas_oracle} is obvious, as for a generic $x \in \R^T$, $H x$ can be computed in $\mathcal{O}(T)$ by a forward loop, 
and $\max_{\tau \ge t} x_\tau$ can be computed in $\mathcal{O}(T)$ by a backward loop.
This completes the proof.
\end{proof}

\section{Details on the Best-Response Solver and Its (S1) Variant}
\label{appx:s1_solver}

\subsection{Adaptation to the ADMM Subproblem (S1)}\label{appx:S1_variant}
The per-EV update \hyperref[eq:step_1]{(S1)} of the ADMM baseline reduces to the structure of \eqref{eq:D1_exp} as follows. First note that
\begin{equation}
    \norm{A_i p_i - b_i^{(k)}}_F^2 = \norm{p_i - \widetilde{b}_i^{(k)}}_2^2
\end{equation}
where $\widetilde{b}_i^{(k)} \in \R^T \coloneqq A_i^\top b^{(k)}_i = p_i^{(k)} - \frac{1}{I} A_i^\top (\hat{l}^{(k)}  - l^{(k)} + I{\mu}^{(k)})$.
Therefore, with $\Jev_i(p_i) = \frac{\kappa}{2}\norm{p_i}_2^2$ and normalizing the objective by $\rho$,
\hyperref[eq:step_1]{(S1)} seeks to solve:
\begin{subequations}\label{eq:step_1_exp}
\begin{align}
\min_{p_i}~~& \frac{1}{2}\norm{p_i - \widetilde{b}_i}_2^2 + \frac{\kappa_\rho}{2}\norm{p_i}_2^2\\
\text{s.t.}~~& \underline{P}_i \preceq p_i \preceq \overline{P}_i\\
& \underline{S}_i \preceq Hp_i \preceq \overline{S}_i
\end{align}
where $\kappa_\rho = \kappa / \rho > 0$.
\end{subequations}
The objective is a $(1+\kappa_\rho)$-strongly convex quadratic over the same constraints as \eqref{eq:D1_exp}.
Consequently, Alg.~\ref{alg:step1} applies with the closed form
\begin{equation}\label{eq:p_dagger_S1}
p^\dagger(u_i) = \text{clip}\left(
\frac{1}{1+\kappa_\rho}\left[\widetilde{b}_i + H^\top (\underline{u}_i - \overline{u}_i)\right],
\underline{P}_i, \overline{P}_i
\right)
\end{equation}
in place of \eqref{eq:p_dagger_D1}, and Theorem~\ref{thm:S1_conv} holds with $L_\psi=\frac{2}{1+\kappa_\rho}\norm{H}_2^2$ and $\widetilde{\eta} = \eta(1+\kappa_\rho)$, as shown in the proof below.

\subsection{Closed-form solution of \texorpdfstring{$p^\dagger(u_i)$}{p†(ui)} in \eqref{eq:p_dagger_D1} / \eqref{eq:p_dagger_S1}}
\begin{proof}
Consider the minimization problem of
\begin{subequations}\label{eq:min_quad_box}
\begin{align}
\min_{x\in \R^n}\, & h(x) \coloneqq \beta_1 \norm{x - \alpha_1}_2^2 + \beta_2 \norm{x}_2^2 + \beta_3\langle x, \alpha_3 \rangle\\
\text{s.t.}~~ & \underline{x} \preceq x \preceq \overline{x}
\end{align}
where $\beta_1 + \beta_2 > 0$.
\end{subequations}
Note that both objective and the box constraint are separable element-wisely.
In particular, $h(x) = \sum_j h_j(x_j)$, where
\begin{align}
    h_j(x_j) & = \beta_1 (x_j - \alpha_{1, j})^2 + \beta_2 x_j^2 + \beta_3 x_j  \alpha_{3,j}\\
             & = \underbrace{(\beta_1 + \beta_2)}_A x_j^2 + B x_j + C\notag
\end{align}
with constants (w.r.t. $x_j$) $B$ and $C$. 
Clearly, $x^\star_j = -\frac{B}{2A}$ is the unique minimizer of the unconstrained function $h_j(x_j)$.
Therefore, the unique minimizer of $\min\{h_j(x_j): x_j \in [\underline{x}_j, \overline{x}_j]\}$ is $\text{clip}\left(x_j^\star, \underline{x}_j, \overline{x}_j\right)$.

This shows, suppose $x^\star$ is the unique minimizer of $h(x)$ (i.e., $\nabla h(x^\star) = 0$), then $\text{clip}(x^\star, \underline{x}, \overline{x})$ is the unique minimizer of \eqref{eq:min_quad_box}.

To complete the proof, simply note that the Lagrangian $\mathcal{L}(p_i, u_i)$ for both (S1) and (D1) can be written in the form of $h(x)$ (with $x = p_i$).
\end{proof}

\subsection{Proof of Theorem~\ref{thm:S1_conv}}
Theorem.\ref{thm:S1_conv} applies general theorems to the problem \eqref{eq:D1_exp} and its (S1) variant \eqref{eq:step_1_exp}.
The original theorems \cite[Ch.12]{beck2017first} are about proximal gradient descent algorithms, but it is well-known that projected gradient descent is a special case of proximal descent.

The model discussed in \cite[Ch.12]{beck2017first} is:
\begin{equation}\label{eq:Beck_Ch12}
    \min_{x\in \mathbb{E}} \left\{f(x) + g(\mathcal{A}(x))\right\}
\end{equation}
with assumptions:
\begin{itemize}
    \item[(A1)] $f: \mathbb{E} \mapsto (-\infty, +\infty]$ is proper, closed, and $\sigma$-strongly convex;
    \item[(A2)] $g: \mathbb{V} \mapsto (-\infty, +\infty]$ is proper, closed, and convex;
    \item[(A3)] $\mathcal{A}: \mathbb{E} \mapsto \mathbb{V}$ is a linear transformation;
    \item[(A4)] There exists $\hat{x} \in \text{ri}(\text{dom}\, f)$ and $\hat{z} \in \text{ri}(\text{dom}\,g)$ such that $\mathcal{A}(\hat{x}) = \hat{z}$.
\end{itemize}
Suppose the assumptions hold, both the dual objective sequence \cite[Thm 12.4]{beck2017first} and the primal point sequence \cite[Thm 12.8]{beck2017first} generated by the dual proximal gradient method (DPG) converge in $\mathcal{O}(1/k)$ 
with step size $\frac{1}{L}$ where $L \ge L_F = \frac{\norm{\mathcal{A}}^2}{\sigma}$.

Meanwhile, both \eqref{eq:D1_exp} and \eqref{eq:step_1_exp} minimize a $\sigma$-strongly convex quadratic $f_0(p_i)$ over the box and cumulative-energy constraints of $\X_i$, with $\sigma = \kappa$ for \eqref{eq:D1_exp} and $\sigma = 1+\kappa_\rho$ for \eqref{eq:step_1_exp}.
Both can be written in the form of \eqref{eq:Beck_Ch12} by recognizing
\begin{subequations}
\begin{align}
    &\mathbb{E}  = \{x \in \R^T: \underline{P}_i \preceq x \preceq \overline{P}_i\}, ~~\mathbb{V} = \R^{2T}\\
    &f(x)  = f_0(x), \quad \mathcal{A} : \R^T\mapsto \R^{2T} = \begin{bmatrix} H \\ - H    \end{bmatrix}\\
    &g(z)  = \Ind_{\mathcal{C}}(z), ~\mathcal{C} = \{z\in \R^{2T}: z \preceq \begin{bmatrix} \overline{S}_i \\ - \underline{S}_i    \end{bmatrix}\}
\end{align}
\end{subequations}
It is easy to see that (1) $f$ is proper, closed, and $\sigma$-strongly convex;
(2) $g$ is proper (if $\mathcal{C}\neq \emptyset$), closed, and convex;
(3) $\mathcal{A}$ is a linear transformation; and 
(4) suppose $\X_i \neq \emptyset$, any $\hat{x} \in \X_i$ and $\hat{z} = \mathcal{A}(\hat{x})$, $\hat{x} \in \text{ri}(\text{dom}\, f)$ and $\hat{z} \in \text{ri}(\text{dom}\,g)$.

Lastly, for step size choice,
\begin{align*}
    \norm{\mathcal{A}}^2 = 2\norm{H}^2 = 2\lambda_{\max}(H^\top H) = 2\lambda_{\max}(H H^\top)
\end{align*}
Note the matrix $\left(H H ^\top\right)_{ij} = \min\{i,j\}$. According to \cite{da2007eigenvalues},
\begin{equation*}
    \lambda_{\max}(H H^\top) = \frac{1}{4\sin^{2}\frac{\pi}{4T+2}}
\end{equation*}
Further, using Taylor expansion of $\sin^{-2}(x)$ at $x_0=0$, 
we can approximate $\lambda_{\max}(H H^\top) \approx \frac{4}{\pi^2} T^2 \approx 0.405 T^2$.

This completes the proof.

\subsection{Implementation of Projected Adam Optimizer}
\label{appx:proj_adam}

Alg.~\ref{alg:step1} uses vanilla projected gradient ascent with constant step size $\eta$, for which convergence is established in Theorem~\ref{thm:S1_conv}.
In practice, we replace the gradient ascent step (line 6) with \textsf{Adam}, an adaptive first-order method that maintains exponential moving averages of the gradient and its squared magnitude.
Because the dual variable must satisfy $u_i \succeq 0$, we adopt the \emph{projected} variant of \textsf{Adam}. At each iteration the gradient is first projected onto the feasible cone before entering the moment updates, and the final iterate is projected again after the ascent step.
Concretely, denote the raw gradient $g^{\langle k \rangle} = \nabla\psi(u_i^{\langle k \rangle})$ and define the \emph{projected gradient}
\begin{equation}\label{eq:proj_grad}
    \widetilde{g}^{\langle k \rangle} = \frac{1}{\varepsilon}\left(\left[u_i^{\langle k \rangle} + \varepsilon\, g^{\langle k \rangle}\right]^+ - u_i^{\langle k \rangle}\right), \quad \varepsilon = 10^{-8}
\end{equation}
which equals $g^{\langle k \rangle}$ whenever $u_i^{\langle k \rangle}$ is in the strict interior of the non-negative orthant, and zeroes out components that would push the iterate below zero.

The moment updates and step then proceed as follows.
With hyper-parameters $\beta_1 = 0.9$, $\beta_2 = 0.999$, and base learning rate $\eta = 1$:
\begin{subequations}\label{eq:adam_updates}
\begin{align}
    m^{\langle k+1 \rangle} &= \beta_1\, m^{\langle k \rangle} + (1 - \beta_1)\, \widetilde{g}^{\langle k \rangle}\label{eq:adam_m}\\
    v^{\langle k+1 \rangle} &= \beta_2\, v^{\langle k \rangle} + (1 - \beta_2)\, (\widetilde{g}^{\langle k \rangle})^2\label{eq:adam_v}\\
    \hat{m}^{\langle k+1 \rangle} &= \frac{m^{\langle k+1 \rangle}}{1 - \beta_1^{k+1}}, \qquad
    \hat{v}^{\langle k+1 \rangle} = \frac{v^{\langle k+1 \rangle}}{1 - \beta_2^{k+1}}\label{eq:adam_bias}\\
    u_i^{\langle k+1 \rangle} &= \left[u_i^{\langle k \rangle} + \frac{\eta}{\sqrt{\hat{v}^{\langle k+1 \rangle}} + \varepsilon} \odot \hat{m}^{\langle k+1 \rangle}\right]^+\label{eq:adam_step}
\end{align}
\end{subequations}
where all operations are element-wise, $m^{\langle 0 \rangle} = v^{\langle 0 \rangle} = 0$, and $(\cdot)^2$ denotes the element-wise square.
After the dual update \eqref{eq:adam_step}, the primal variable is recovered analytically via $p_i^{\langle k+1 \rangle} = p^\dagger(u_i^{\langle k+1 \rangle})$ as in \eqref{eq:p_dagger_S1}/\eqref{eq:p_dagger_D1}.
Convergence is monitored via the relative duality gap $\Delta$ defined in \eqref{eq:S1_rel_duality_gap}, with tolerance $\epsilon = 10^{-3}$.

\smallskip
\noindent\textbf{Active-set masking.}\quad
When solving (S1) for all $I$ EVs in parallel, the per-EV subproblems converge at different rates.
To avoid wasting computation on already-converged EVs, we maintain a boolean mask over the EV index set:
every 20 inner iterations, the relative duality gap is evaluated, and EVs with $\Delta \le \epsilon$ are removed from the active set.
Subsequent iterations operate only on the remaining active EVs, reducing both computation and memory.

\smallskip
\noindent\textbf{Empirical comparison.}\quad
Fig.~\ref{fig:optimizer_comparison_D1} compares the convergence of three optimizer variants on the (D1) subproblem instances of Sec.~\ref{sec:solver_analysis}:
\begin{enumerate}
    \item Vanilla projected gradient ascent (GD) with constant step size $\eta=10^{-7}$, of the order $1/L_\psi$ suggested by Theorem~\ref{thm:S1_conv};
    \item Projected GD with a normalized diminishing step size $\eta_k = \eta / (\norm{g^{\langle k \rangle}}_2 \sqrt{k+1})$ with $\eta=10^{-4}$, which ensures convergence even when only subgradients are available \cite[Sec.~8.2]{beck2017first}; and
    \item Projected Adam \eqref{eq:adam_updates} with $\eta=10^{-3}$.
\end{enumerate}

Among the three, Adam converges substantially faster in terms of both median and tail behavior, motivating its use as the default optimizer in our implementation.

\begin{figure}[h!]
    \centering
    \includegraphics[width=\columnwidth]{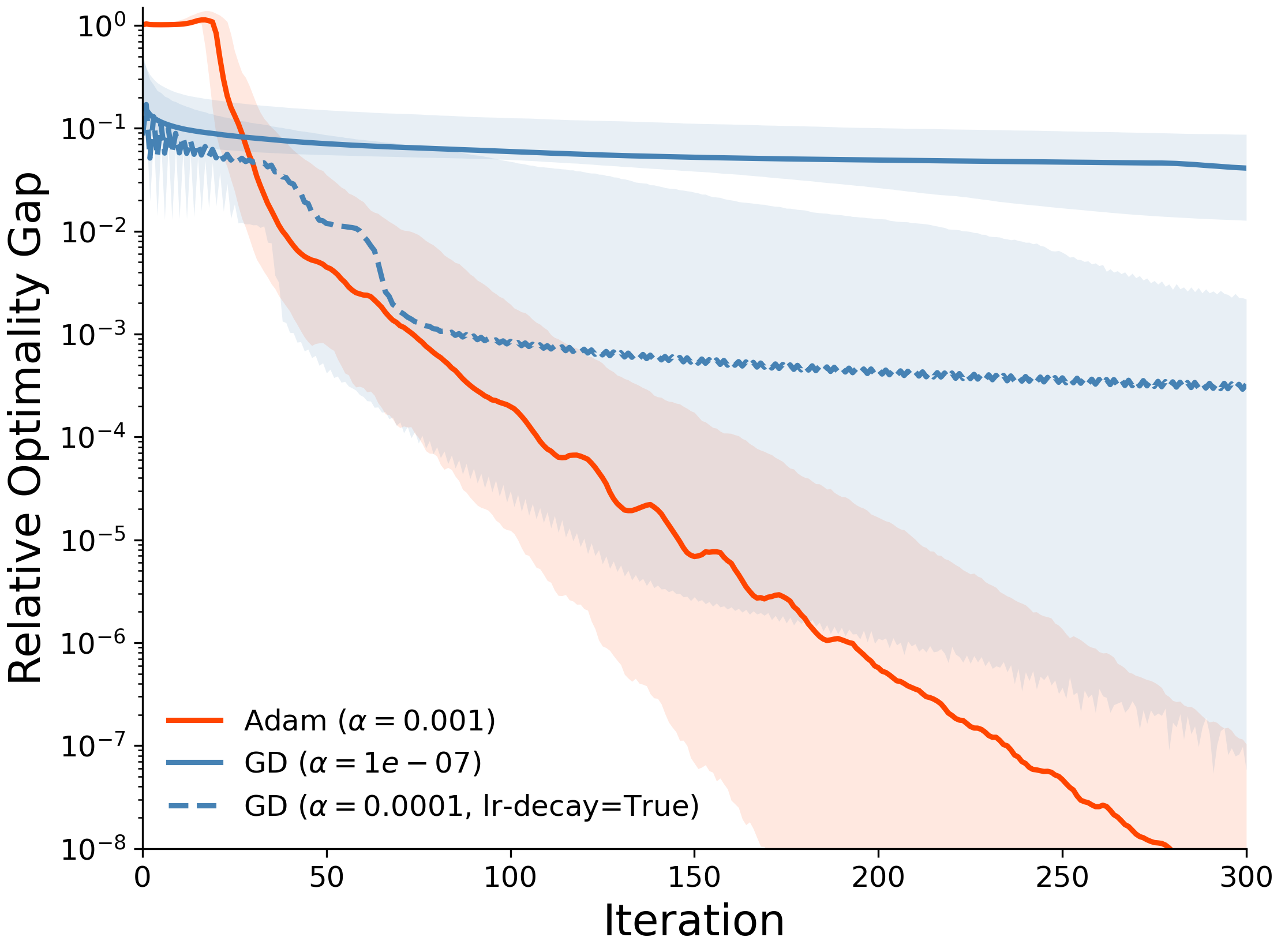}
    \caption{\textbf{Optimizer comparison (D1)}
    \scriptsize\quad
    Medians (solid line) and [10\%, 90\%] percentiles (shaded) of evaluated relative optimality gaps over different optimizers.
    }
    \label{fig:optimizer_comparison_D1}
\end{figure}

\section{Grid-Side Subproblems}
\label{appx:S2_solver}

\subsection{Proof of Lemma~\ref{lem:D2}}

Recall from \eqref{eq:D2} that $\text{(D2)} = \min_l \Jgrid(l) - \langle \lambda, l \rangle$, which decomposes per feeder~$s$ as:
\begin{equation}\label{eq:appx_D2}
    \min_{v_s \ge 0,\, l_s}\; v_s - \langle \lambda_s, l_s \rangle \quad \text{s.t.}\quad l_{s,t} - C_{s,t} \le v_s,\; \forall t
\end{equation}
Note that $l_s$ is unconstrained below (only upper-bounded by $C_s + v_s \boldsymbol{1}$).

\begin{proof}
We show the optimal objective of \eqref{eq:appx_D2} is $-\langle \lambda_s, C_s \rangle$ when $\lambda_s \succeq 0$ and $\sum_t \lambda_{s,t} \le 1$, and $-\infty$ otherwise. Summing over $s \in \Sset$ then proves the lemma.

\smallskip
\noindent\emph{Case 1: $\min_t \lambda_{s,t} < 0$.}\quad
Pick $t_0$ with $\lambda_{s,t_0} < 0$. Fix $v_s = 0$ and $l_{s,t} = 0$ for $t \neq t_0$.
Since $l_{s,t_0}$ is only upper-bounded, sending $l_{s,t_0} \to -\infty$ gives $-\lambda_{s,t_0}\, l_{s,t_0} \to -\infty$, so \eqref{eq:appx_D2} is unbounded below.

\smallskip
\noindent\emph{Case 2: $\lambda_s \succeq 0$ and $\sum_t \lambda_{s,t} > 1$.}\quad
Since $\lambda_s \succeq 0$, minimizing $-\langle \lambda_s, l_s \rangle$ pushes $l_s$ to its upper bound, $l_{s,t} = C_{s,t} + v_s$ for all $t$.
The objective becomes
$
    v_s\!\left(1 - \textstyle\sum\nolimits_t \lambda_{s,t}\right) - \langle \lambda_s, C_s \rangle.
$
Since $1 - \sum_t \lambda_{s,t} < 0$, sending $v_s \to +\infty$ gives objective $\to -\infty$.

\smallskip
\noindent\emph{Case 3: $\lambda_s \succeq 0$ and $\sum_t \lambda_{s,t} \le 1$.}\quad
We claim $v_s^\star = 0$, $l_s^\star = C_s$ is optimal, with objective $-\langle \lambda_s, C_s \rangle$.
This point is feasible, since $v_s = 0 \ge 0$ and $l_{s,t} - C_{s,t} = 0 \le v_s$.
For any feasible $(v_s, l_s)$, since $\lambda_s \succeq 0$ and $l_{s,t} \le C_{s,t} + v_s$:
\begin{align}
    v_s - \langle \lambda_s, l_s \rangle & \ge v_s - \textstyle\sum\nolimits_t \lambda_{s,t}(C_{s,t} + v_s)\\
    & = v_s\!\left(1 - \textstyle\sum\nolimits_t \lambda_{s,t}\right) - \langle \lambda_s, C_s \rangle \ge -\langle \lambda_s, C_s \rangle\notag
\end{align}
where the last step uses $v_s \ge 0$ and $1 - \sum_t \lambda_{s,t} \ge 0$.

\smallskip
Combining all cases proves Lemma~\ref{lem:D2}.
Checking $\lambda_s \succeq 0$, $\sum_t \lambda_{s,t} \le 1$, and evaluating $-\langle \lambda_s, C_s \rangle$ each require a single pass over $T$ entries, i.e., $\mathcal{O}(T)$ per feeder.
\end{proof}

\subsection{The (S2-$s$) Solver of the ADMM Baseline}

With $\Jgrid(l)$ as in \eqref{eq:Jgrid}, the grid update \hyperref[eq:step_2]{(S2)} of the ADMM baseline decomposes per feeder $s$ as follows, with $\varrho \coloneqq \rho / I$:
\begin{subequations}\label{eq:S2s}
\begin{align}
    (\text{S2-}s)\qquad \min_{v_s, l_s}~ & v_s + \frac{\varrho}{2}\norm{l_s - D_s}_2^2\\
    \text{s.t.}~ & -v_s \le 0 \label{eq:appx_S2_constr1}\\
    & l_{s,t} - C_{s,t} \le v_s, ~\forall t \label{eq:appx_S2_constr2}
\end{align}
\end{subequations}
where $v_s \in \R$ and $l_s \in \R^T$.

\begin{theorem}\label{thm:S2}
Subproblem \hyperref[eq:S2s]{(S2-$s$)} can be solved exactly by a sort-and-search procedure (Alg.~\ref{alg:alg_step2}) with time complexity $\mathcal{O}(T \ln T)$.
\end{theorem}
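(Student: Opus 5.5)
The plan is to eliminate $l_s$ analytically for fixed $v_s$, which leaves a one-dimensional convex problem in $v_s$. That problem can then be solved exactly by sorting the breakpoints and scanning for the point where the derivative changes sign.

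\textbf{Reduction to one variable.} For fixed $v_s \ge 0$, the objective of (S2-$s$) separates over $t$. Each term is $\frac{\varrho}{2}(l_{s,t}-D_{s,t})^2$ subject only to the upper bound $l_{s,t} \le C_{s,t}+v_s$. By the same clipping argument used for \eqref{eq:min_quad_box}, the minimizer is $l_{s,t}(v_s) = \min\{D_{s,t},\, C_{s,t}+v_s\}$. Write $a_t \coloneqq D_{s,t}-C_{s,t}$. The reduced objective is then
\begin{equation*}
    \phi(v) = v + \frac{\varrho}{2}\sum\nolimits_t \big([a_t - v]^+\big)^2, \qquad v \ge 0 .
\end{equation*}
This function is convex and continuously differentiable, with $\phi'(v) = 1 - \varrho \sum_t [a_t - v]^+$. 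The derivative is nondecreasing and piecewise linear, and its breakpoints lie at the values $a_t$. So $v_s^\star$ is determined as follows. If $\phi'(0) \ge 0$, i.e.\ $\varrho\sum_t [a_t]^+ \le 1$, then $v_s^\star = 0$. Otherwise $v_s^\star$ is the unique root of $\sum_t [a_t - v]^+ = 1/\varrho$ on $v>0$. This root is a standard water-filling / simplex-projection type equation.

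\textbf{Sort-and-search.} Sort the $a_t$ in decreasing order as $a_{(1)} \ge \dots \ge a_{(T)}$. This costs $\mathcal{O}(T\ln T)$. Then compute prefix sums $P_k = \sum_{j\le k} a_{(j)}$. On the interval $v \in [a_{(k+1)}, a_{(k)}]$ we have $\sum_t [a_t - v]^+ = P_k - k v$. Scanning $k = 1,\dots,T$ locates the first $k$ for which the candidate $v = (P_k - 1/\varrho)/k$ satisfies $v \ge a_{(k+1)}$, with the convention $a_{(T+1)} = -\infty$. The scan costs $\mathcal{O}(T)$. Monotonicity of $\phi'$ guarantees that this candidate is the unique root. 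We finally take $v_s^\star = \max\{0, v\}$ and recover $l_s^\star = \min\{D_s, C_s + v_s^\star\}$ in $\mathcal{O}(T)$.

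\textbf{Exactness and the main check.} The main obstacle is to justify that the two-stage minimization is exact and that the scan returns the global optimum, not merely a stationary point. The first claim holds because $\min_{v,l} = \min_v \min_l$, and the inner minimizer is unique. For the second claim, $\phi$ is convex, the constraint $v\ge 0$ is handled by the KKT condition $\phi'(v^\star)\ge 0$ with equality when $v^\star>0$, and the derivative is strictly increasing wherever some $a_t > v$. It remains to verify uniqueness and the correct interval selection in the degenerate cases where the $a_t$ have ties or the root falls exactly on a breakpoint. Continuity of $\phi'$ covers these cases. Sorting dominates the cost, which gives the stated $\mathcal{O}(T\ln T)$ complexity.
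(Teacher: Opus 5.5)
Your proposal is correct and follows essentially the paper's route. The paper reaches the same scalar condition from the KKT system, namely $\nu^\circ_s + \varrho\sum_t[\delta_{s,t}-v_s]^+ = 1$ with $\nu^\circ_s v_s = 0$; this is exactly your optimality condition for $\phi$, with $\nu^\circ_s = \phi'(v_s)$ and $\nu_{s,t} = \varrho[a_t - v_s]^+$. The paper then locates the root by the same sort-and-scan over breakpoint intervals. Your elimination of $l_s$ is a multiplier-free phrasing of that reduction, and your scan test (raw prefix sums, sentinel $a_{(T+1)}=-\infty$, final clip at $0$) is an equivalent variant of lines 6--13 of Alg.~\ref{alg:alg_step2}, returning the same $v_s$ even with ties.
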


\begin{algorithm}
\caption{Custom solver for \hyperref[eq:S2s]{(S2-$s$)} subproblem}\label{alg:alg_step2}
\begin{algorithmic}[1]\small
\Procedure{Feeder\_Update\_S2~}{$D_s$}
\State define $\delta_t\coloneqq D_{s,t} - C_{s,t}$; $\varrho \coloneqq \rho / I$
\State sort $(\delta_t)$ as $\widetilde{\delta}_1 \ge ... \ge \widetilde{\delta}_{T}$
\State $\widetilde{\delta}_{T+1} \coloneqq 0$
\Comment{Avoid ``out of index'' at line 11}
\State define $S_k \coloneqq \sum_{j=1}^k [\widetilde{\delta}_j]^+$
\If{$\varrho\,S_{T} \le 1$}
\State $v_s \gets 0$, ~$\nu^\circ_s \gets 1 - \varrho\,S_{T}$
\Else
\For{$k = 1, ..., T$}
\State $v^\star \gets \frac{1}{k}\left(S_k - \varrho^{-1} \right)$
\If{$\widetilde{\delta}_{k+1} < v^\star \le \widetilde{\delta}_k$}
\State $v_s\gets v^\star$, $\nu^\circ_s \gets 0$
\State \textbf{break}
\EndIf
\EndFor
\EndIf
\State $l_{s,t} \gets \min\{C_{s,t}+v_s, D_{s,t}\}, ~\nu_{s,t} \gets \varrho\,[\delta_{t}-v_s]^+$
\State \Return  $v_s, l_s, \nu^\circ_s, \nu_s$
\State \Comment{$\nu^\circ_s \in \R, \nu_s \in \R^T$ are the optimal dual variables of \hyperref[eq:S2s]{(S2-$s$)}}
\EndProcedure
\end{algorithmic}
\end{algorithm}

\begin{proof}[Proof sketch of Theorem~\ref{thm:S2}]
Let $\nu_s \in \R^T$ and $\nu_s^\circ \in \R$ be the multipliers of \eqref{eq:appx_S2_constr2} and \eqref{eq:appx_S2_constr1}, respectively, and define $\delta_{s,t} \coloneqq D_{s,t} - C_{s,t}$.
Stationarity in $l_s$ gives $\nu_{s,t} = \varrho\,(D_{s,t} - l_{s,t})$, and complementary slackness then determines, for any fixed $v_s \ge 0$,
$l_{s,t} = \min\{C_{s,t} + v_s,\, D_{s,t}\}$ and $\nu_{s,t} = \varrho\,[\delta_{s,t} - v_s]^+$.
The remaining KKT conditions reduce to the scalar system
$\nu^\circ_s + \Xi(v_s) = 1$, $\nu^\circ_s\, v_s = 0$, $\nu^\circ_s \ge 0$, $v_s \ge 0$,
where $\Xi(v_s) \coloneqq \varrho \sum_t [\delta_{s,t} - v_s]^+$ is continuous, non-increasing, and strictly decreasing wherever positive.
Hence, if $\Xi(0) \le 1$, the unique solution is $v_s = 0$ with $\nu^\circ_s = 1 - \Xi(0)$. Otherwise the solution has $v_s > 0$ solving $\Xi(v_s) = 1$, which is unique by strict monotonicity, and sorting $\{\delta_{s,t}\}_t$ locates it exactly in one scan over the breakpoint intervals, which is what lines 6--13 of Alg.~\ref{alg:alg_step2} implement.
Uniqueness of the KKT point together with convexity of \eqref{eq:S2s} yields correctness \cite[p.244]{boyd2009convex}. The sort dominates the cost, hence $\mathcal{O}(T \ln T)$.
\end{proof}

\smallskip
\noindent\textbf{Empirical comparison.}\quad
Table~\ref{tab:S2_timing} compares the wall-clock time of Alg.~\ref{alg:alg_step2} against CVXPY on a single (S2-$s$) instance ($T=168$).
``serial'' means solving all $S=1300$ feeders in a Python loop. For Alg.~\ref{alg:alg_step2}, ``batch'' means vectorizing the sort-and-search procedure across feeders using NumPy, and for CVXPY it means solving (S2) as a whole optimization problem.
Alg.~\ref{alg:alg_step2} achieves a speedup of roughly three orders of magnitude in the serial setting and over five orders of magnitude in the batched setting (stacking subproblems for CVXPY shows a negative impact on performance). Alg.~\ref{alg:alg_step2} makes the (S2) step negligible relative to (S1) in overall ADMM runtime.

\begin{table}[h!]
    \centering
    \begin{tabular}{c|c c c}
    \toprule
        solver &  $S=1$ & serial ($S=1300$)& batch ($S=1300$)\\
    \midrule
        CVXPY & $114$ ms & $2$ m $30$ s & $2$ h $45$ m \\
        Alg.\ref{alg:alg_step2} & $0.04$ ms & $52$ ms & $9.81$ ms\\
    \bottomrule
    \end{tabular}
    \caption{\textbf{(S2) solver timing comparison}}
    \label{tab:S2_timing}
\end{table}

\section{Population and Trajectory Synthesis}\label{appx:data_synth}

\subsection{Sampling future EV adopters}\label{appx:adopter_sampling}

Present-day EV adopters differ substantially from the general driver population in age, income, education, and housing, so sampling new adopters uniformly at random would misrepresent how the adopter pool evolves as penetration grows. Instead, we let the demographic composition of EV drivers follow a gradual transition. At a target adoption rate $p$, the number of adopters in each demographic subgroup is linearly interpolated between today's EV population and a fully electrified fleet ($p=100\%$, at which adopters match all drivers). Sampling is hierarchical, first allocating new adopters across counties and then across demographic subgroups, defined by the Cartesian product of age, race and ethnicity, income, education, housing type, and homeownership, with each level following the same interpolation. To keep scenarios nested and comparable, adopters are drawn incrementally in $10\%$ steps, so that the $30\%$ adopter set is contained in the $40\%$ set, and so on.

\subsection{Week-long itinerary synthesis}\label{appx:traj_synth}

The mobility data provide disaggregate travel diaries for two representative days (a Thursday and a Saturday). We extend these into logically coherent week-long itineraries for every synthetic individual through the following cluster-based bootstrapping procedure.

\emph{Cohort grouping.} Individuals are first partitioned by immutable attributes (home block group, work block group, and sex), so that all subsequent sampling occurs within demographically and spatially comparable cohorts.

\emph{Similarity.} Within each cohort, we form a pairwise similarity matrix that combines (i) continuous behavioral descriptors (e.g., weekly mileage and longest single trip), min--max normalized and weighted, and (ii) categorical socio-demographics (e.g., race and ethnicity), one-hot encoded and weighted. Cosine similarity of the combined feature vectors yields an $n\times n$ matrix in which larger entries indicate closer behavioral proximity.

\emph{Assembly.} For each individual, daily diaries drawn from close peers (sampled with probability increasing in similarity) are concatenated into a seven-day sequence. A candidate week is accepted only if it exhibits no temporal overlaps or spatial inconsistencies between consecutive stays and trips, and is resampled otherwise. Accepted itineraries are exported at hourly resolution and drive the energy-consumption terms $D_{i,t}$ used in the coordination problem.

\section{Baseline Policies}\label{appx:baselines}

We detail the baselines benchmarked against \textsf{MAC} in Sec.~\ref{sec:baselines}, where the evaluation metrics are also defined. Throughout, the \emph{stays} of EV $i$ are its maximal parked intervals at which charging is possible. Let stay $j$ occupy the slot set $\mathcal{T}_j \subseteq \Tset$ (so $\overline{P}_{i,t}>0$ for $t\in\mathcal{T}_j$), with delivered energy $q_j = \sum_{t\in\mathcal{T}_j} p_{i,t} \in [0,\overline{Q}_j]$ and $\overline{Q}_j = \sum_{t\in\mathcal{T}_j}\overline{P}_{i,t}$.

\subsection{Intra-stay dispatch}
A policy first sets a per-stay energy $\{q_j\}$, and a \emph{dispatch} rule then maps it to a slot-level profile $p_i$, from which feeder loads follow via \eqref{eq:feeder_agg_st}. We use two dispatches. \emph{Front-loaded}: fill the slots of $\mathcal{T}_j$ at $\overline{P}_{i,t}$ from the earliest until $q_j$ is exhausted. \emph{Uniform}: spread $q_j$ in proportion to availability, $p_{i,t}=q_j\,\overline{P}_{i,t}/\overline{Q}_j$ for $t\in\mathcal{T}_j$. Front-loading is the default, as it upper-bounds the hourly peak. The uniform variant is used as a sensitivity in the sub-hourly discussion.

\subsection{Rule-based policies}
\noindent\textbf{\textsf{ASAP+}.}\quad Processing stays in temporal order, let $e^{\mathrm{arr}}_j$ be the energy in the battery on arrival at stay $j$, and $R_j$ the energy still required for the remaining trips to be feasible and for the terminal balance $e_{i,T}=E_{i,0}$ to hold. If the EV arrives below half capacity, $e^{\mathrm{arr}}_j < 0.5\,E^{\max}_i$, it charges as much as it still needs, $q_j = \min\{\overline{Q}_j,\, E^{\max}_i - e^{\mathrm{arr}}_j,\, R_j\}$. Otherwise it charges only the minimum that keeps the remaining itinerary feasible, which is zero at most stays and positive only ahead of a long trip or of the terminal balance. In both cases the terminal target caps the total, so the weekly energy equals the minimum feasible $\sum_t D_{i,t}$. The policy delivers no surplus, and the arrival threshold decides only when that energy is drawn. The intermediate per-stay state-of-charge floors are low and never raise the total.

\smallskip
\noindent\textbf{\textsf{ASAN}.}\quad Let $\mathrm{nx}(j)$ be the first stay after $j$ with duration at least $\underline{\tau}=3$~h, and $\Delta_j$ the driving energy consumed between stay $j$ and $\mathrm{nx}(j)$. \textsf{ASAN} charges the least of a full battery, the energy to reach $\mathrm{nx}(j)$ holding a $50\%$ reserve, and the energy the terminal target still requires:
\begin{equation*}
q_j = \min\Big\{\mathrm{clip}\!\left(\Delta_j + 0.5\,E^{\max}_i - e^{\mathrm{arr}}_j,\; 0,\; \overline{Q}_j\right),\; R_j\Big\},
\end{equation*}
where $e^{\mathrm{arr}}_j$ and $R_j$ are the arrival energy and the remaining requirement defined for \textsf{ASAP+} above. The reserve rule uses only local look-ahead. The $R_j$ cap keeps \textsf{ASAN} energy-balanced, so it delivers the same weekly energy as \textsf{ASAP+} and differs only in which stays carry it.

\subsection{Optimization-based policies}
\noindent\textbf{\textsf{minPeak}.}\quad A mobility-aware but spatially blind policy in which each EV \emph{individually} minimizes the peak of its own charging profile,
\begin{equation*}
p_i^{\textsf{minPeak}} = \arg\min_{p_i\in\X_i}\ \max_{t\in\Tset}\ p_{i,t}, \qquad \forall i \in \Iset,
\end{equation*}
solved per EV by bisection on the peak level using the $\mathcal{O}(T)$ feasibility oracle (Thm.~\ref{thm:feas_oracle}). Each EV spreads its own demand as flat as its mobility allows, so the summed profile acts as a greedy heuristic for shaving the aggregate temporal peak. The policy involves no coordination among EVs and ignores the per-feeder capacities $C_{s,t}$.

\smallskip
\noindent\textbf{\textsf{PR}.}\quad A one-shot price response. From the \textsf{ASAP+} load $l^{\mathrm{ref}}=\sum_i A_i p_i^{\textsf{ASAP+}}$, define the shortage $\delta_{s,t}=[l^{\mathrm{ref}}_{s,t}-C_{s,t}]^+$ and a locational price $\lambda_{s,t}\ge 0$ that is dual-feasible ($\sum_t \lambda_{s,t}\le 1$ for each feeder) and concentrated on congested feeder-slots. Each EV best-responds once,
\begin{equation*}
p_i^{\textsf{PR}} = \arg\min_{p_i\in\X_i}\ \Jev_i(p_i) + \langle A_i^\top \lambda,\, p_i\rangle,
\end{equation*}
i.e., a single competitive-equilibrium step \eqref{eq:EV_i_CE} solved by the same custom (D1) solver. A feeder-invariant price is the special case $\lambda_{s,t}=\lambda_t$.

\smallskip
\noindent\textbf{\textsf{FLEX}.}\quad A session-based counterpart of \textsf{MAC}. Fixing the per-stay energy to a rule's allocation $q_j=q_j^{\mathrm{rule}}$ ($\mathrm{rule}\in\{\textsf{ASAP+},\textsf{ASAN}\}$) restricts the feasible set to
\begin{equation*}
\X_i^{\textsf{FLEX}} = \X_i \cap \Big\{ p_i:\ \textstyle\sum_{t\in\mathcal{T}_j} p_{i,t} = q_j^{\mathrm{rule}},\ \forall j \Big\},
\end{equation*}
and \textsf{FLEX} solves the same coordinated problem \eqref{eq:opt_SO} over $\{\X_i^{\textsf{FLEX}}\}$ by ADMM. It thus optimizes only the intra-stay timing of a fixed session energy, whereas \textsf{MAC} additionally reallocates energy across the itinerary. Each \textsf{FLEX} row of Table~\ref{tab:baselines} pins the energy of the session rule it is paired with.

\section{Sub-Hourly Disaggregation of the Hourly Solutions}\label{appx:subhourly}

The optimization operates on hourly slots, while stays begin and end at arbitrary minutes, so the violation metrics in Sec.~\ref{sec:baselines} are computed from hourly-averaged feeder loads and may under-report short-duration violations. To quantify this effect, we re-place every policy's solution onto grids of $5$, $10$, and $15$ minutes without re-solving. The raw itineraries are minute-resolved, so each stay's true arrival and departure times are known. Within each hour, the energy a stay charges in that hour is spread uniformly over the portion of the hour during which the EV is actually present, feeder loads are re-aggregated on the finer grid, and $\mathrm{TV}_{\max}$ is recomputed. This uniform-spread placement is a realistic estimate of the instantaneous load. A worst-case placement that delivers each slot's energy at rated power under adversarial sub-slot alignment would instead yield a ceiling that grows as $1/\Delta t$. It assumes every charging EV on a feeder peaks in the same sub-slot, which does not occur in practice, so we report the uniform-spread estimate. As a consistency check, running the procedure on a $60$-minute grid reproduces each policy's reported hourly $\mathrm{TV}_{\max}$ exactly.

Table~\ref{tab:subhourly} reports, for every policy in Table~\ref{tab:baselines}, the resulting increase of $\mathrm{TV}_{\max}$ over its hourly value. Two patterns stand out. First, the front-loaded rules are saturated. \textsf{ASAP+} and \textsf{ASAN} charge at rated power on arrival, leaving no within-hour headroom for the instantaneous load to exceed what the hourly average already reports, so their totals move by less than $0.1\%$. Second, the relative inflation grows with the quality of the hourly solution and is largest for \textsf{MAC}. Its coordinated solution packs feeder loads flat against the hosting-capacity limit, so any within-hour unevenness induced by partial-hour stays spills over the limit. Hourly aggregation therefore slightly flatters coordination rather than uncontrolled charging. The conclusions of Sec.~\ref{sec:case-study} are unaffected. The absolute increase for \textsf{MAC} is about $37$~MW at the finest grid, its violation at $5$~min remains a $99.6\%$ reduction relative to \textsf{ASAP+} at the same resolution, and the ordering of the policies in Table~\ref{tab:baselines} is unchanged at every resolution. The inflation decays monotonically as the grid coarsens from $5$ to $15$ minutes, a useful consistency pattern.

\begin{table}[htbp]
\centering
\caption{\textbf{Sub-hourly disaggregation of $\mathrm{TV}_{\max}$}}
\label{tab:subhourly}
\begin{tabular}{@{}lrrrr@{}}
\toprule
 & hourly & \multicolumn{3}{c}{increase (MW)}\\
\cmidrule(l){3-5}
 & (MW) & $15$~min & $10$~min & $5$~min\\
\midrule
\textsf{ASAP+}  & 12,439.6 & $+4.4$ & $+5.1$ & $+5.9$\\
\textsf{uASAP+} & 3,590.4 & $+8.3$ & $+9.6$ & $+11.4$\\
\textsf{ASAN}   & 12,441.3 & $+5.4$ & $+6.3$ & $+7.2$\\
\textsf{uASAN}  & 1,910.2 & $+8.2$ & $+9.5$ & $+11.2$\\
\textsf{FLEX} (\textsf{ASAP+}) & 1,792.7 & $+217.0$ & $+233.3$ & $+247.9$\\
\textsf{FLEX} (\textsf{ASAN})  & 744.3 & $+229.2$ & $+258.4$ & $+287.4$\\
\midrule
\textsf{minPeak} & 79.4 & $+11.0$ & $+13.3$ & $+21.8$\\
\textsf{PR}      & 102.6 & $+9.0$ & $+11.7$ & $+19.1$\\
\textsf{MAC}     & 8.1 & $+18.6$ & $+23.7$ & $+36.9$\\
\bottomrule
\end{tabular}
\end{table}

\printbibliography[title=Supplementary References]

\end{document}